\renewcommand{\Pr}[1]{\text{Pr} \left( #1 \right)}
\newcommand{\E}[1]{\mathbb{E} \left[ #1 \right]}
\newcommand{\Epolicy}[1]{\mathbb{E}^\pi \left[ #1 \right]}
\newcommand{\abs}[1]{\left| {#1} \right|}
\newcommand{\norm}[1]{\left\lVert {#1} \right\rVert}
\newcommand{\norminf}[1]{\norm{#1}_\infty}
\newcommand\inner[2]{\langle {#1}, {#2} \rangle}
\newcommand{\bigO}[1]{O \left( {#1} \right)}
\newcommand{\fancyS}{\mathcal{S}}
\newcommand{\fancyA}{\mathcal{A}}
\newcommand{\R}{\mathbb{R}}
\newcommand{\N}{\mathbb{N}}
\newcommand{\rmax}{\ensuremath{{R_\text{max}}}}
\newcommand{\dmax}{\ensuremath{{D_\text{max}}}}
\newcommand{\rev}{\ensuremath{\text{REV}}}
\newcommand{\orgmdp}{\ensuremath{\text{ARR-MDP}}}
\newcommand{\auxmdp}{\ensuremath{\text{PT-MDP}}}
\newcommand{\revorg}{\ensuremath{\text{REV}_\text{ARR}}}
\newcommand{\revaux}{\ensuremath{\text{REV}_\text{PT}}}
\newcommand{\termaux}{\ensuremath{\textit{Term}}}
\newcommand{\Porg}{\ensuremath{P}}
\newcommand{\Paux}{\ensuremath{{P'_H}}}
\newcommand{\Dmdp}{\ensuremath{D}}
\newcommand{\Rmdp}{\ensuremath{R}}
\newcommand{\piorg}{\ensuremath{\mu}}
\newcommand{\piaux}{\ensuremath{\mu'_H}}
\newcommand{\sinit}{{\ensuremath{s_\text{init}}}}
\newcommand{\sinitaux}{{\ensuremath{s_\text{term}}}}
\newtheorem{theorem}{Theorem}
\newtheorem{corollary}[theorem]{Corollary}
\newtheorem{lemma}[theorem]{Lemma}
\newtheorem{assumption}{Assumption} 
  \providecommand\BibTeX{{%
    \normalfont B\kern-0.5em{\scshape i\kern-0.25em b}\kern-0.8em\TeX}}}
\begin{document}

%%
%% The "title" command has an optional parameter,
%% allowing the author to define a "short title" to be used in page headers.
%\title{Efficient MDP Analysis for Proof of Work Blockchain}
%\title{PTO: Efficient Selfish Mining Analysis for Blockchain}
%\title{Efficient Selfish-Mining Analysis with Probabilistic-Termination~Optimization} 
%\title{Efficient Selfish-Mining Analysis} 
\title{Efficient MDP Analysis for Selfish-Mining in Blockchains} 
%%
%% The "author" command and its associated commands are used to define
%% the authors and their affiliations.
%% Of note is the shared affiliation of the first two authors, and the
%% "authornote" and "authornotemark" commands
%% used to denote shared contribution to the research.
\author{Roi Bar-Zur}
\affiliation{%
  \institution{Technion}
}

\author{Ittay Eyal}
\affiliation{%
  \institution{Technion}
}

\author{Aviv Tamar}
\affiliation{%
  \institution{Technion}
}

%%
%% By default, the full list of authors will be used in the page
%% headers. Often, this list is too long, and will overlap
%% other information printed in the page headers. This command allows
%% the author to define a more concise list
%% of authors' names for this purpose.
\renewcommand{\shortauthors}{Bar-Zur, Eyal and Tamar}

%%
%% The abstract is a short summary of the work to be presented in the
%% article.
\begin{abstract}
A \emph{proof of work} (\emph{PoW}) \emph{blockchain protocol} distributes rewards to its participants, called miners, according to their share of the total computational power. 
Sufficiently large miners can perform \emph{selfish mining}~-- deviate from the protocol to gain more than their fair share.
Such systems are thus secure if all miners are smaller than a \emph{threshold size} so their best response is following the protocol. 

To find the threshold, one has to identify the optimal strategy for miners of different sizes, i.e., solve a \emph{Markov Decision Process} (\emph{MDP}). 
However, because of the PoW \emph{difficulty adjustment} mechanism, the miners' utility is a non-linear ratio function. 
We therefore call this an \emph{Average Reward Ratio} (\emph{ARR}) MDP. 
Sapirshtein et al.\ were the first to solve ARR MDPs by solving a series of standard MDPs that converge to the ARR MDP solution. 

In this work, we present a novel technique for solving an ARR MDP by solving a single standard MDP. 
The crux of our approach is to augment the MDP such that it terminates randomly, within an expected number of rounds. 
We call this \emph{Probabilistic Termination Optimization} (\emph{PTO}), and the technique applies to any MDP whose utility is a ratio function. 
We bound the approximation error of PTO~-- it is inversely proportional to the expected number of rounds before termination, a parameter that we control. 
Empirically, PTO's complexity is an order of magnitude lower than the state of the art.

PTO can be easily applied to different blockchains. 
We use it to tighten the bound on the threshold for selfish mining in Ethereum. 
\end{abstract}

%%
%% The code below is generated by the tool at http://dl.acm.org/ccs.cfm.
%% Please copy and paste the code instead of the example below.
%%

\begin{CCSXML}
<ccs2012>
   <concept>
       <concept_id>10002978.10003006.10003013</concept_id>
       <concept_desc>Security and privacy~Distributed systems security</concept_desc>
       <concept_significance>500</concept_significance>
       </concept>
   <concept>
       <concept_id>10003752.10010070.10010071.10010316</concept_id>
       <concept_desc>Theory of computation~Markov decision processes</concept_desc>
       <concept_significance>500</concept_significance>
       </concept>
 </ccs2012>
\end{CCSXML}

\ccsdesc[500]{Security and privacy~Distributed systems security}
\ccsdesc[500]{Theory of computation~Markov decision processes}

%%
%% Keywords. The author(s) should pick words that accurately describe
%% the work being presented. Separate the keywords with commas.
\keywords{Blockchain, Cryptocurrency, Markov Decision Process, Selfish Mining, Bitcoin, Ethereum, Proof of Work, Optimal Selfish Mining}

%%
%% This command processes the author and affiliation and title
%% information and builds the first part of the formatted document.
\maketitle

\section{Introduction}
Proof of Work~\cite{dwork1992pricing,jakobsson1999proofs} Blockchain Protocols secure about~80\% of the cryptocurrency market cap~\cite{coinmarketcap2020}, including Bitcoin~\cite{nakamoto2008bitcoin} and Ethereum~\cite{buterin2013ethereum}. 
The operators of blockchains, called \emph{miners}, aggregate user transactions in \emph{blocks}, and order the blocks by forming a block chain. 
These are \emph{decentralized} protocols, allowing anyone to join and add blocks by proving she has performed some amount of work~-- a computational task. 
We call the rate of work expended by each miner its \emph{mining power} and normalize the sum of all rates to one. 
An attacker can thus violate the system guarantees only if she performs more work than all other participants combined, i.e., if its mining power is larger than~50\%. 
With no central enforcement, the security of a PoW blockchain relies critically on the fact that the best response of each miner is to follow the prescribed protocol. 

To incentivize the miners to follow the protocol, it rewards them with cryptocurrency tokens for their efforts. 
Moreover, the rewards incentivize entities to join the system, become miners, and work for the rewards~\cite{tsabary2018gap,eghbali2019twelve}. 
To regulate block generation rate, the protocol automatically \emph{adjusts the difficulty}, i.e., the amount of work necessary to produce each block such that the average \emph{block rate} remains constant. 
Therefore, if all miners follow the protocol, they each receive in expectation a fraction of the total reward equal to their mining power. 

The security of a blockchain protocol is thus defined by the minimal (threshold) mining power above which a miner can increase her revenue by deviating from the prescribed protocol~\cite{eyalmajority, sapirshtein2016optimal, nayak2016stubborn, kwon2017selfish, pass2017fruitchains}. 
In order to calculate the security threshold, we would like to find the smallest miner size such that its optimal strategy is not the prescribed one. 
A natural approach is to model the system as a \emph{Markov Decision Process}. 
This is a Markov Process where in each state the agent (miner) chooses an action, probabilistically transitioning the system into a new state. 
The utility of the miners is their fraction of the reward ratio due to the difficulty adjustment, i.e., a non-linear utility function. 
We call this an \emph{Average Reward-Ratio MDP} (\emph{ARR-MDP}). 
Unfortunately, standard MDP solution techniques are not compatible with the non-linear utility in ARR-MDP, calling for new solution methods. 

To the best of our knowledge, Sapirshtein et al.~\cite{sapirshtein2016optimal} were the first to solve an ARR-MDP (\S\ref{section:related_work}). 
Their approach is to devise a parametrized linear utility function for the original MDP. 
They perform a binary search for the value of the parameter: 
For each value they solve the MDP, finding the optimal strategy for the utility function. 
The search converges towards a parameter for which the optimal strategy for the original ARR utility function coincides with the optimal strategy for the parametrized utility function. 

In this work, we propose a novel technique~(\S\ref{section:method}), called \emph{Probabilistic Termination Optimization} (\emph{PTO}), to solve  an ARR-MDP by transforming it to a conventional MDP, which we call PT-MDP. 
The form of the utility in ARR-MDP is a ratio between the miner's reward and a value determined by the protocol's adjustment scheme.
For example, in Bitcoin, the miner's reward is the number of her blocks divided by the total number of blocks mined by the whole network~\cite{eyalmajority, sapirshtein2016optimal, nayak2016stubborn}. 
Thus, ARR-MDP can be thought of as a repetitive process divided to epochs, where each epoch contributes a constant amount towards the difficulty adjustment.
For Bitcoin~-- in each epoch the entire network mines a certain number of blocks.
The epochs are all equivalent so optimizing a single epoch gives the best strategy for the infinite ARR-MDP.
In the heart of the transformation is the idea that it suffices to optimize for epochs with a set \emph{expected} contribution. 
This is done by constructing PT-MDP such that each step has a probability of terminating the process.
The probabilistic termination ensures PT-MDP has a chosen expected amount of contribution towards the difficulty adjustment.
For Bitcoin~-- PT-MDP represents an epoch in which the entire network mines~$x$ blocks in expectation. 

This relaxation simplifies PT-MDP to have a linear objective function and still provides a provably accurate approximation.
We prove (\S\ref{section:proof}) the agent's revenue for the optimal policy in the transformed MDP is a good approximation for the agent's revenue in the original MDP.
In order to prove this we first show ARR-MDP and PT-MDP are \emph{ergodic}.
This allows us to use classical results \cite{serfozo2009basics} for MDPs to bound the approximation error of the transformation.
We obtain a tight bound inversely proportional to the expected number of amount of contribution chosen.
This parameter provides the ability to obtain results with an arbitrarily small approximation error.

One key advantage of PTO is its performance~(\S\ref{section:performance})~-- by solving only a single MDP its complexity is a fraction of the state of the art. 
To evaluate we compare against the Optimal Selfish Mining Strategies in Bitcoin
~\cite{sapirshtein2016optimal} (OSM) optimizer of Sapirshtein et al. 
In PTO we use \emph{Policy Iteration} to optimize the MDP, while Sapirshtein et al.\ used \emph{relative value iteration} whose performance is inferior in this case. 
We therefore compare PTO against an improved version of OSM we denote \emph{PI OSM} which uses Policy iteration. 
We first show that PI OSM is significantly faster than OSM running on the same platform. 
Then, to compare against PTO, which runs on a different platform, we compare the number of linear system solutions~-- the main computational bottleneck, showing an order of magnitude reduction in PTO. 

Finally, we demonstrate the ease of application of PTO by finding optimal strategies in the Ethereum blockchain(\S\ref{section:ethereum}). 
Ethereum's reward mechanism is more complicated than Bitcoin's. 
It grants partial rewards to blocks not in the chain, and counts them towards the difficulty adjustment. 
We use PTO to estimate the threshold size for a rational miner not to deviate, reducing it from 0.26~\cite{feng2019selfish, grunspan2019selfish} to 0.2468.

We released our code of PTO along with the Bitcoin and Ethereum models in a public GitHub repository\footnote{https://github.com/roibarzur/pto-selfish-mining}.

%%%%%%%%%%%%%%%%%%%%%%%%%%%%%%%%%%%%%%%%%%%%%%%%%%%%%%%%%%%%%%%%%%%%%%%%%%%%%%%%%%%%%%%%%%%%%%%%
%%%%%%%%%%%%%%%%%%%%%%%%%%%%%%%%%%%%%%%%%%%%%%%%%%%%%%%%%%%%%%%%%%%%%%%%%%%%%%%%%%%%%%%%%%%%%%%%
%%%%%%%%%%%%%%%%%%%%%%%%%%%%%%%%%%%%%%%%%%%%%%%%%%%%%%%%%%%%%%%%%%%%%%%%%%%%%%%%%%%%%%%%%%%%%%%%

    \section{Related Work}\label{section:related_work}

Eyal and Sirer~\cite{eyalmajority} were the first to show that the Bitcoin protocol is not incentive compatible.
They have demonstrated and analyzed a strategy (called Selfish Mining or \emph{SM1}) which can be used by miners and yields higher rewards than acting honestly.
Selfish mining involves withholding newly minted blocks and violating the longest chain rule.
The existence of such a strategy rules out the honest protocol as a Nash equilibrium.

Their analysis uses a \emph{Markov chain} and gives a closed form formula of the revenue a miner can get if she uses SM1 and the minimal threshold for the relative computational power a miner needs such that using SM1 is superior to following the protocol.
However, the analysis does not guarantee that a miner with a relative power lower than the threshold cannot deviate in some other way and profit.
This means that the SM1 threshold is an upper bound on the security threshold of the protocol (the minimal relative power required in order to be able to deviate and profit).
Overall, this analysis implies that under the reasonable assumption of $\gamma = 0.5$ the security threshold of Bitcoin is 0.25 at most.

Nayak et al.~\cite{nayak2016stubborn} design improved selfish mining strategies and use numeric simulations to quantify their revenues for different parameters.
By doing so, they prove SM1 is not optimal and demonstrate that for varying parameters, different strategies are preferred.

Sapirshtein et al.~\cite{sapirshtein2016optimal} design a method to find the optimal selfish mining strategy for given parameters. They model the Bitcoin protocol as a Markov Decision Process with a non-linear reward criterion, and propose a solution method for approximately solving the MDP, by performing a binary search and in each stage solving a standard MDP.
Their analysis yields an accurate approximation of the security threshold of Bitcoin. 
For example, they find that for some parameters, the threshold is as low as~0.2321. 

In contrast, PTO allows using a single MDP solution to approximate the optimal strategy, reaching an arbitrarily accurate solution with an order of magnitude lower complexity. 
As PTO is more efficient, it is also easier to generalize it to more complex blockchain protocols.

Wang et al.~ \cite{wang2019blockchain} devise a different approach to overcome the non-linear reward criterion.
They design a generalization of Q-learning, a reinforcement learning algorithm suited to maximize the non-linear reward.
They name their new algorithm multi-dimensional RL.
Similarly to Q-learning, their approach is based on Monte Carlo simulations and is model-free, meaning it does not make use of available information such as the transition probabilities of the MDP.
Although this might be easier to apply to different protocols, it introduces noise and slows down the convergence speed significantly. 
% Ittay: Last sentence requires backing. 
In addition, they obtained empirical evidence that the new algorithm manages to converge to the optimum.
However, they do not provide a theoretical foundation to their new approach 
% Ittay: Maybe: 
% However, this approach is does not converge to the optimum in general (cf. Appendix) 
and it remains an open question whether it can generalize to other blockchain models as well.

Other work generalizes the SM1 strategy to Ethereum and analyzes its revenue in order to get an upper bound on the security threshold of Ethereum. 
Ritz and Zugenmaier~\cite{ritz2018impact} used a Monte-Carlo simulation in order to simulate their generalized SM1 strategy and assess its revenue.
Niu and Feng~\cite{feng2019selfish} also generalize the SM1 strategy to Ethereum and use a theoretical analysis by a Markov chain to calculate the revenue of the strategy.
Grunspan and P\'erez-Marco~\cite{grunspan2019selfish} use a theoretical analysis involving combinatorics for two versions of generalized SM1 strategies . 
%They also prove one strategy is superior to the other.
Nevertheless, all of these works analyze specific strategies and therefore can only strive to obtain an upper bound of the security threshold.
We use PTO to estimate a tight bound using the optimal strategy. 

A new technique called SquirRL~\cite{hou2019squirrl} analyzes blockchain protocols using deep reinforcement learning. 
It is based on the iterative MDP solution method of OSM, and uses deep-RL to find an approximately optimal policy.
SquirRL was used for Bitcoin and Ethereum and obtained a better upper bound for the security threshold of Ethereum.
By observing their results for Bitcoin, they deduce that their method typically finds solutions with revenues approximately~1-2\% lower than the optimal revenue found by OSM.
Due to the neural network approximations in deep RL, the threshold of SquirRL is only guaranteed to be an upper bound, and deriving meaningful error bounds for their results is not possible. 
In contrast, PTO converges the optimum with a bound on the approximation error. 

%%%%%%%%%%%%%%%%%%%%%%%%%%%%%%%%%%%%%%%%%%%%%%%%%%%%%%%%%%%%%%%%%%%%%%%%%%%%%%%%%%%%%%%%%%%%%%%%
%%%%%%%%%%%%%%%%%%%%%%%%%%%%%%%%%%%%%%%%%%%%%%%%%%%%%%%%%%%%%%%%%%%%%%%%%%%%%%%%%%%%%%%%%%%%%%%%
%%%%%%%%%%%%%%%%%%%%%%%%%%%%%%%%%%%%%%%%%%%%%%%%%%%%%%%%%%%%%%%%%%%%%%%%%%%%%%%%%%%%%%%%%%%%%%%%

    \section{Preliminaries} \label{section:prelim} 

%%%%%%%%%%%%%%%%%%%%%%%%%%%%%%%%%%%%%%%%%%%%%%%%%%%%%%%%%%%%%%%%%%%%%%%%%%%%%%%%%%%%%%%%%%%%%%%%
%%%%%%%%%%%%%%%%%%%%%%%%%%%%%%%%%%%%%%%%%%%%%%%%%%%%%%%%%%%%%%%%%%%%%%%%%%%%%%%%%%%%%%%%%%%%%%%%

        \subsection{Markov Decision Processes} \label{section:prelim:mdp} 

Our method is based on the theory of Markov decision processes (MDPs), which we now review. 
An MDP describes a controlled stochastic process in discrete time~\cite{white2001markov}, where at time~$t$ an \emph{agent} observes an environment at \emph{state}~$s_t\in \fancyS$, takes an \emph{action}~$a_t\in \fancyA$, and subsequently the environment transitions stochastically to a new state~$s' \in \fancyS$, while the agent is awarded some \emph{reward}~$R_t = R_a(s, s')$.
% The game goes as follows:
% At every step $t$, the agent observes the environment's state $s \in \fancyS$ and decides to take some action $a \in \fancyA$.
% Then, the environment transitions stochastically to a new state $s' \in \fancyS$, the agent is awarded some $R_t = R_a(s, s')$ and the next step is played.
The transitions between states are Markovian, and denoted by:
% , meaning that the distribution of the next state depends only on the current state and the agent's action.
% Formally:
\begin{displaymath}
\Pr{s_{t+1} = s' | s_t = s, a_t = a} =  P_a(s, s') \,\, .
\end{displaymath}
% where $s_i$ and $a_i$ are the state and chosen action at step $i$ respectively.

% The agent's \emph{policy} is a mapping from the state space to the action space. It specifies for each state, which action the agent takes.

The agent's goal is to choose actions that maximize its long-term rewards, defined by the MDP \emph{objective function}. 
Three objectives that have been extensively investigated in the literature are 
the \emph{discounted reward}, the \emph{average reward}, and the \emph{stochastic shortest path} (\emph{SSP})~\cite{bertsekas1995dynamic}.
For all of these cases, an optimal decision making policy (i.e., an action strategy that maximizes the objective) can be represented as a Markov \emph{policy}~-- a deterministic mapping from the state space to the action space. 
Furthermore, several algorithms for finding an optimal policy are known. 
An MDP with a specific policy~$\pi$ induces a Markov chain over the states visited by the policy. 
In the sequel, the notation $\Epolicy{\cdot}$ denotes an expectation over states in the Markov chain induced by the policy $\pi$. 

% These cases define different objective functions for the agent and therefore each case requires a compatible MDP solving method.

\subsubsection{Discounted reward}
% This is the most common case for MDPs and most solution methods are addressed for this case.
In this case the reward at time t is discounted by~$\beta^t$ for some~$0 < \beta \leq 1$, and the objective is:
\begin{displaymath}
R(\pi) = \Epolicy{\lim\limits_{T \to \infty} \sum\limits_{t=0}^T \beta^t R_t} \,\, .
\end{displaymath}

\subsubsection{Average reward}
% This case consists of a less common objective which was used in OSM.
In this case, the objective is the average reward per step across an infinite horizon:
\begin{displaymath}
R(\pi) = \Epolicy{\lim\limits_{T \to \infty} \frac{1}{T} \sum\limits_{t=0}^T R_t} \,\, .
\end{displaymath}
This objective is used in OSM.

\subsubsection{Stochastic shortest path}
This is a similar case to discounted reward with~$\beta=1$.
However, in this case the MDP is assumed to have a terminal absorbing state which is guaranteed to be reached for every policy.
Once this state is reached at step~$T$ the process always stays in this state and no further reward is obtained. 
Thus,~$T$ is a random stopping time, and the objective function is of the form:
\begin{displaymath}
R(\pi) = \Epolicy{\sum\limits_{t=0}^T R_t} \,\, .
\end{displaymath}
Our method PTO will make use of this objective.

Exact algorithms for solving MDPs with the objective criteria above include \emph{value iteration} and \emph{policy iteration}, which are guaranteed to converge to an optimal policy~\cite{bertsekas1995dynamic}. When the state space is too large for exact methods, reinforcement learning algorithms such as Q-learning~\cite{bertsekas1995dynamic,hou2019squirrl} provide an approximate solution.

\subsection{Markov Chains}
We now recapitulate several classical Markov chain definitions and results~\cite{serfozo2009basics}. 
Henceforth,~$Y_n$ denotes a Markov chain in general, and when applicable also denotes a random variable of the state of the Markov chain at time~$t$.
We let $\fancyS$ denote the state space of the Markov chain, and~$P$ denotes its transition matrix.

The \emph{hitting time} of a state is a random variable of the number of steps it takes to return to the state when starting at said state.
Formally, the hitting time of a state~$i$ by a process~$Y_n$ is defined by:
\begin{displaymath}
\tau_i = \min \{ t \geq 2 : Y_1 = Y_t = i \} \,\, .
\end{displaymath}
A \emph{period} of a state is the greatest common divisor of all possible values for its hitting time.
A state is called \emph{aperiodic} if its period is equal to 1.

A \emph{recurrent} state is a state for which the chain returns to with probability 1 assuming it starts in said state.
A \emph{transient} state is a state which is not recurrent.
A \emph{positive recurrent} state is a recurrent state for which the expected hitting time is finite.

An \emph{irreducible} chain is a chain in which for every pair of states~$i$ and~$j$ there is a chance to transition from~$i$ to~$j$ after any number of steps. 

The following lemma is a classical result by Serfozo~\cite{serfozo2009basics}.
\begin{lemma} \label{commumication_class_property_lemma}
In an irreducible chain, if one state is positive recurrent and/or aperiodic then all its states are positive recurrent and/or aperiodic.
In addition, the chain is called positive recurrent and/or aperiodic.
\end{lemma}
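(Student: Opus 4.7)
The plan is to treat the two ``solidarity'' assertions separately: first the aperiodicity claim, then the positive recurrence claim. Both rest on the same fact supplied by irreducibility, namely that for any pair of states $i, j \in \fancyS$ one can find nonnegative integers $m$ and $n$ with $P^m(i,j) > 0$ and $P^n(j,i) > 0$; I would invoke this repeatedly without further comment.

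For aperiodicity, fix states $i$ and $j$ with periods $d_i$ and $d_j$, and pick $m, n$ as above. Then $P^{m+n}(i,i) \ge P^m(i,j) P^n(j,i) > 0$, so $d_i \mid m+n$. For every $k$ with $P^k(j,j) > 0$ we also have $P^{m+k+n}(i,i) > 0$, hence $d_i \mid m+k+n$, and subtracting gives $d_i \mid k$. Thus $d_i$ divides every integer whose gcd defines $d_j$, giving $d_i \mid d_j$; by symmetry $d_i = d_j$, and in particular aperiodicity of one state forces aperiodicity of all.

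For positive recurrence, suppose $i$ is positive recurrent. I would first transfer plain recurrence to $j$ by an excursion argument: starting at $j$ the chain reaches $i$ almost surely by irreducibility, and were there positive probability of never returning to $j$ after leaving $i$, one could splice this into the infinitely many excursions from $i$ back to $i$ to contradict recurrence of $i$. The quantitative step is to show $\E{\tau_j} < \infty$, for which the cleanest route is via stationary measures: positive recurrence of $i$ is equivalent to the existence of an invariant probability measure $\pi$ with $\pi(i) = 1/\E{\tau_i} > 0$, irreducibility forces $\pi(j) > 0$, and the identity $\E{\tau_j} = 1/\pi(j)$ then yields finiteness. The chain-level statements follow tautologically, since the definitions declare the chain positive recurrent (respectively aperiodic) whenever every state is.

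The hard part will be making the positive-recurrence step genuinely self-contained: the equivalence ``$i$ positive recurrent $\Leftrightarrow$ existence of a stationary distribution with $\pi(i) > 0$'' itself demands proof, typically by constructing $\pi$ from expected excursion counts between successive visits to $i$. An alternative I would consider, which sidesteps invariant-measure theory altogether, is to bound $\E{\tau_j}$ directly by decomposing a $j$-to-$j$ excursion into a geometric number of $i$-to-$i$ excursions (each of finite mean by assumption) together with two deterministic travel segments of finite expected length between $i$ and $j$; this is elementary but requires more careful bookkeeping to verify the geometric tail on the number of $i$-excursions needed before $j$ is visited.
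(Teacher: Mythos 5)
The paper does not actually prove this lemma: it is quoted as a classical solidarity result and attributed to Serfozo's textbook, so there is no in-paper argument to compare against. Your proposal is, in outline, the standard textbook proof of exactly that result, and it is correct as a plan. The aperiodicity half is essentially complete: the gcd argument via $P^{m+n}(i,i)\ge P^m(i,j)P^n(j,i)>0$ and $d_i \mid m+k+n$ for every $k$ with $P^k(j,j)>0$ gives $d_i \mid d_j$ and, by symmetry, equality of periods. One small reconciliation is worth a line: the paper defines the period as the gcd of the possible \emph{return (hitting) times}, whereas you work with the set $\{k : P^k(j,j)>0\}$; these two gcds coincide (the latter set is the additive closure of the former), but since you are proving a stated definition you should say so explicitly.

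For positive recurrence, your preferred route (positive recurrence of $i$ $\Leftrightarrow$ existence of a stationary distribution with $\mu_i = 1/\E{\tau_i}$, then $\mu_j>0$ by irreducibility and $\E{\tau_j}=1/\mu_j<\infty$) is correct but, as you yourself note, it imports theorems of essentially the same depth as the lemma being proved (indeed the paper states that equivalence separately as its Lemma~\ref{thm54}), so as written it is a reduction rather than a self-contained proof. Your elementary alternative is the right way to close that circle: from $j$ the chain hits $i$ in finite expected time (bound $\E{\tau_i} \ge \Pr{T_j<\tau_i}\,\mathbb{E}_j[T_i]$ via the strong Markov property, with $\Pr{T_j<\tau_i}>0$ by irreducibility and recurrence of $i$), and from $i$ the number of $i$-excursions needed before $j$ is visited is geometric with success probability $\Pr{T_j<\tau_i}$, so Wald's identity bounds $\E{\tau_j}$ by a finite multiple of $\E{\tau_i}$ plus the two travel terms. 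Filling in that bookkeeping would make the argument genuinely self-contained; as submitted, the proposal is a correct and honest sketch with the remaining work clearly identified.
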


A Markov chain which is irreducible, and its states are positive recurrent and aperiodic is called \emph{ergodic}.

A probability measure $\mu$ on $\fancyS$ is a \emph{stationary distribution} for the Markov Chain~$Y_n$ (or for P) if for all $i \in \fancyS$ it holds that:
\begin{displaymath}
\mu_i = \sum\limits_{j \in \fancyS} \mu_j p_{ji} \,\, .
\end{displaymath}

An ergodic Markov chain alway has a stationary distribution.

%%%%%%%%%%%%%%%%%%%%%%%%%%%%%%%%%%%%%%%%%%%%%%%%%%%%%%%%%%%%%%%%%%%%%%%%%%%%%%%%%%%%%%%%%%%%%%%%
%%%%%%%%%%%%%%%%%%%%%%%%%%%%%%%%%%%%%%%%%%%%%%%%%%%%%%%%%%%%%%%%%%%%%%%%%%%%%%%%%%%%%%%%%%%%%%%%
%%%%%%%%%%%%%%%%%%%%%%%%%%%%%%%%%%%%%%%%%%%%%%%%%%%%%%%%%%%%%%%%%%%%%%%%%%%%%%%%%%%%%%%%%%%%%%%%

    \section{Method} \label{section:method} 

%% Probabilistic termination optimization - PTO
We describe our method for computing security bounds for blockchain protocols based on the MDP model. 
We begin with a generalization~(\S\ref{section:method:generalization}) of the Bitcoin blockchain MDP model proposed by Sapirshtein et al.~\cite{sapirshtein2016optimal}, detail model assumptions on the MDP~(\S\ref{section:method:assumptions}) and then present our algorithm, PTO~(\S\ref{section:method:PTO}), for solving the MDP. 

%%%%%%%%%%%%%%%%%%%%%%%%%%%%%%%%%%%%%%%%%%%%%%%%%%%%%%%%%%%%%%%%%%%%%%%%%%%%%%%%%%%%%%%%%%%%%%%%
%%%%%%%%%%%%%%%%%%%%%%%%%%%%%%%%%%%%%%%%%%%%%%%%%%%%%%%%%%%%%%%%%%%%%%%%%%%%%%%%%%%%%%%%%%%%%%%%

        \subsection{An MDP Model for PoW Blockchain} \label{section:method:generalization} 

We consider an MDP with a finite state space and a finite action space modeling a general PoW blockchain protocol with a difficulty adjustment mechanism. The states, actions, and transitions in the MDP depend on the particular protocol, and in Sections~\ref{section:performance:model} and \ref{section:ethereum:model} we detail them for the cases of Bitcoin and Ethereum. The development in this section, however, focuses on the reward in the MDP, and is not specific to a particular protocol.

The agent is a rational miner who wants to maximize her reward per unit of time.
We assume all of the other miners in the blockchain act as prescribed.
The difficulty adjustment scheme slows down the mining so the rational miner reward is divided by a factor determined by the difficulty of mining.
We call this factor the \emph{difficulty contribution}. 
In Bitcoin for example, the reward is the number of blocks the miner appends to the main chain and the difficulty contribution is the number of all blocks appended \cite{eyalmajority, sapirshtein2016optimal}.
In Ethereum, on the other hand, the reward of a rational miner is the sum of her rewards from regular blocks, uncles and nephew rewards, and the difficulty contribution is the sum of of blocks added to the main chain and uncle blocks~\cite{grunspan2019selfish}.\footnote{Ethereum introduces the concept of \emph{uncle} blocks. An uncle block is a block which is not in the main chain but is a direct descendant of another block in the main chain. Apart from the regular block reward, Ethereum gives additional rewards to blocks who reference uncle blocks (called \emph{nephew rewards}) and to miners of referenced uncle blocks (called \emph{uncle rewards}). This is in order to compensate miners who find blocks which end up out of the main chain due to network latency. In order to deter miners from intentionally creating blocks meant to be uncles, Ethereum counts both regular blocks and uncle blocks for its difficulty adjustment \cite{ritz2018impact}.}
In the following, we consider a general MDP model where the agent needs to balance the rewards with a  difficulty contribution.

In our MDP, at each time step $t$, the agent obtains both a scalar reward $R_t$~-- the reward of the rational miner in step~$t$ and a scalar~$D_t\geq 0$~-- the difficulty contribution of the entire network at that time step.
Under the difficulty adjustment scheme~\cite{eyalmajority}, we model the objective function of the rational miner in the general form:
\begin{displaymath}
\rev = \E{\lim\limits_{T\to\infty} \frac{\sum\limits_{t=1}^T R_t}{\sum\limits_{t=1}^T D_t}} \,\, .
\end{displaymath}

Note that this objective function is different from the standard objective functions for MDPs, and  henceforth we denote such MDPs as \emph{average reward-ratio MDPs} (ARR-MDP). In particular, since the objective is not a linear function of the reward, standard MDP solution methods are not applicable.
Our main contribution in this work is an algorithm for approximately solving ARR-MDPs. We note that while our focus is on the blockchain domain, our algorithm and analysis apply to any ARR-MDP, which can potentially be used in other domains.

% \paragraph{Note}
We note that the model in \cite{sapirshtein2016optimal} is in fact an ARR-MDP, where $D_t$ is defined as the sum of blocks mined by the agent and the rest of the network. Thus, the analysis of OSM applies to our setting as well. However, as we will show later, our approach leads to much more efficient ARR-MDP algorithms.
% In OSM the objective function is the ratio of the number of blocks mined by the miner divided by the sum of both the number of blocks of the miner and the number of blocks of the rest of the network:
% \begin{displaymath}
% \rev = \E{\lim\limits_{T\to\infty} \frac{\sum\limits_{t=1}^T R_t^1}{\sum\limits_{t=1}^T R_t^1 + R_t^2}}.
% \end{displaymath}
% By choosing~$R_t^2 = D_t - R_t^1$ instead of the number of blocks of the rest of the network, we get that the analysis of OSM can hold for the general blockchain as well.
% However, PTO is faster as we will see later.

%%%%%%%%%%%%%%%%%%%%%%%%%%%%%%%%%%%%%%%%%%%%%%%%%%%%%%%%%%%%%%%%%%%%%%%%%%%%%%%%%%%%%%%%%%%%%%%%
%%%%%%%%%%%%%%%%%%%%%%%%%%%%%%%%%%%%%%%%%%%%%%%%%%%%%%%%%%%%%%%%%%%%%%%%%%%%%%%%%%%%%%%%%%%%%%%%

        \subsection{Model Assumptions} \label{section:method:assumptions}

We next describe several assumptions on the ARR-MDP that apply in the blockchain setting, and will allow us to develop our efficient solution method. We first bound the reward and difficulty contributions.
% make the easing assumptions that at every step the reward is upper bounded by some constant~$\rmax$ and the contribution to the difficulty is non-negative and also upper bounded by some constant~$\dmax$.
% Formally:
\begin{assumption}\label{assump_r_t_bounded}
There is a constant $\rmax \in \R^+$ such that for any policy~$\pi$, it holds that:
\begin{displaymath}
\forall_t : \abs{R_t} \leq \rmax \,\, .
\end{displaymath}
\end{assumption}
\begin{assumption}\label{assump_d_t_bounded}
There is a constant $\dmax \in \R^+$ such that for any policy~$\pi$, it holds that:
\begin{displaymath}
\forall_t : 0 \leq D_t \leq \dmax \,\, .
\end{displaymath}
\end{assumption}
For bitcoin, the reward and the difficulty contribution symbolize numbers of blocks added at every step,
% (by the rational miner or by the entire network).
and are bounded by the length of the longest possible fork in the network, which is reasonably bounded~\cite{sapirshtein2016optimal}.
A similar reasoning also works for Ethereum.
While not necessarily relevant to the blockchain setting, we remark that negative bounded rewards are allowed in our formulation.
% We note that for general ARR-MDPs, negative rewards are allowed and are only restricted to be bounded in their absolute values.

\begin{assumption}\label{assump_d_t_not_zero}
For any policy~$\pi$, the average difficulty contribution in $\orgmdp$ is lower bounded by some constant~$\varepsilon > 0$.
Formally:
\begin{displaymath}
\Epolicy{\lim\limits_{T\to\infty} \frac{1}{T}{\sum\limits_{t=1}^T D_t}} > \varepsilon \,\, .
\end{displaymath}
\end{assumption}
Assumption \ref{assump_d_t_not_zero} holds in any PoW blockchain protocol, as regardless of the agents' actions, the rest of the network continues mining. Thus, it is reasonable to assume that the agents' actions do not halt the blockchain.
% Since the difficulty contribution symbolizes the blocks added to the network, we can assume that no matter what any single party does, it does not halt the blockchain.

\begin{assumption}\label{assump_positive_reccurent_state}
There is a state in $\orgmdp$ that is positive recurrent for any policy~$\pi$.
Denote this state as the initial state or $\sinit$.
\end{assumption}
Assumption \ref{assump_positive_reccurent_state} holds for Bitcoin and Ethereum 
% Most protocols is extremely suspicious, better spell it out. 
if we assume the miner does not have more than~50\% of the mining power, which is reasonable since otherwise the blockchain is already compromised. Thus, the honest miners always catch up with the rational miner, and since she cannot keep waiting forever, she must sync with the rest of the network at some point and get back to the initial state.
% The following assumption formalizes this observation.
A similar assumption also appears in the analysis of OSM with the same reasoning.

In general, different protocols may require a model in which syncing does not lead to the same state, so this reasoning might not always fit.
For example, a protocol which remembers all history will have a different state after syncing as it captures what happened since the last sync.
However, in order to use exact MDP solving algorithms, the state space has to be finite and when modeling such protocols one has to truncate the state space to be able to use those algorithms.
A byproduct of this is that we get a recurrent state since all finite space MDPs have a recurrent state.
So overall, this assumption does not pose additional restrictions.

%%%%%%%%%%%%%%%%%%%%%%%%%%%%%%%%%%%%%%%%%%%%%%%%%%%%%%%%%%%%%%%%%%%%%%%%%%%%%%%%%%%%%%%%%%%%%%%%
%%%%%%%%%%%%%%%%%%%%%%%%%%%%%%%%%%%%%%%%%%%%%%%%%%%%%%%%%%%%%%%%%%%%%%%%%%%%%%%%%%%%%%%%%%%%%%%%

        \subsection{Probabilistic Termination Optimization} \label{section:method:PTO}

We now present our method for solving ARR-MDPs, which we term Probabilistic Termination Optimization (PTO). 
Our main observation is that we can construct an auxiliary MDP that approximates the ARR-MDP to arbitrary precision. 
This auxiliary MDP is in a standard SSP form, and can therefore be optimized using standard methods. 
In the sequel, we will explain the intuition behind our approach, show how to define the auxiliary MDP, how to solve it efficiently, and how to bound the precision of our approximation of the true ARR-MDP.

%%%%%%%%%%%%%%%%%%%%%%%%%%%%%%%%%%%%%%%%%%%%%%%%%%%%%%%%%%%%%%%%%%%%%%%%%%%%%%%%%%%%%%%%%%%%%%%%
            \subsubsection{Intuition}

First, we remark that the agent's objective is to maximize:
\begin{displaymath}
\rev = \E{\lim\limits_{T\to\infty} \frac{\sum\limits_{t=1}^T R_t}{\sum\limits_{t=1}^T D_t}} \,\, .
\end{displaymath}
Let us go over some potential solutions to overcome the non-linear form of the agent's objective function.

If we tried substituting the limit with a fixed T we would get:
\begin{displaymath}
\rev = \E{\frac{\sum\limits_{t=1}^T R_t}{\sum\limits_{t=1}^T D_t}} \,\, .
\end{displaymath}
This is equivalent to terminating after a fixed number of steps and for large T would give a close approximation of the actual limit.
However, this still leaves the objective function in a non-linear form.

Another option could be to construct an MDP that terminates once $\sum\limits_{t=1}^T D_t$ is equal to some parameter $H$.
By doing this, we would fix the denominator and be left with a linear function to maximize:
\begin{displaymath}
\rev = \E{\frac{1}{H} \sum\limits_{t=1}^T R_t} \,\, .
\end{displaymath}
For large values of~$H$, the process runs until a very large~$T$ and similarly to the previous option, we know that we would get a good approximation.
However, the state space in this MDP would have to keep a memory of the accumulated $D_t$ in order to know when to terminate. 
This is undesirable since keeping track of time implies a prohibitive increase in the state space. 

Our proposed solution is to introduce a memoryless termination probability such that in expectation $\sum\limits_{t=1}^T D_t$ is $H$  when the process terminates.
From the first option we know that a large $H$ we would be close to the limit.
Like the second solution, we get rid of the denominator. 
Although now the equality holds only in expectation, we will prove later that this still provides a good approximation.
In addition, since the termination probability is memoryless 
we do not have to keep track of the accumulated $D_t$ so we can use the original state space.

In order to create a memoryless termination probability we utilize independent coin tosses.
For every one unit of accumulated~$D_t$, the MDP tosses a coin:
Terminate the process with probability~$\frac{1}{H}$ or continue with probability~$1 - \frac{1}{H}$.
Intuitively, since each unit of accumulated $D_t$ causes termination with probability $\frac{1}{H}$ the accumulated $D_t$ resembles a geometric distribution and therefore in expectation would be $H$ once termination occurs. 

%%%%%%%%%%%%%%%%%%%%%%%%%%%%%%%%%%%%%%%%%%%%%%%%%%%%%%%%%%%%%%%%%%%%%%%%%%%%%%%%%%%%%%%%%%%%%%%%
            \subsubsection{The Auxiliary MDP}

For an ARR-MDP we denote its auxiliary MDP by~$\auxmdp$.
$\auxmdp$ has the same state space as~$\orgmdp$, with an additional terminal state with zero reward.
PT-MDP is parametrized by some chosen parameter $H$. At every time step~$t$, the agent in~$\auxmdp$ has a probability of~$1 - (1 - \frac{1}{H})^{D_t}$, to transition to the terminal state.
If termination has not occurred, then the transition occurs as in $\orgmdp$.
Intuitively, the process continues only if all independent coin tosses (there are $D_t$ of them) indicate to continue.

Let $\termaux(H)$ be a random variable indicating the step in which the process moves to the terminal state.
Then the objective function of~$\auxmdp$ is the stochastic shortest path criterion (multiplied by a constant):
\begin{displaymath}
\revaux(H) = \E{\frac{1}{H} \sum\limits_{t=0}^{\termaux(H)} R_t} \,\, .
\end{displaymath}

Note that the probability of termination depends on $D_t$, and the higher $D_t$ is, the higher the chance to terminate.
Intuitively, this encourages the agent to strike a balance between the reward and the difficulty contribution.
We call $H$ the \emph{expected horizon} since, as we will formally show later, it corresponds to the expected total difficulty contribution until termination.

% \RBZ{Maybe move this where it belongs? not sure where. it is a rephrasing of the unclear paragraph about SSP}
% \paragraph{Note}
% This 

%%%%%%%%%%%%%%%%%%%%%%%%%%%%%%%%%%%%%%%%%%%%%%%%%%%%%%%%%%%%%%%%%%%%%%%%%%%%%%%%%%%%%%%%%%%%%%%%
%%%%%%%%%%%%%%%%%%%%%%%%%%%%%%%%%%%%%%%%%%%%%%%%%%%%%%%%%%%%%%%%%%%%%%%%%%%%%%%%%%%%%%%%%%%%%%%%

            \subsubsection{Solving the MDPs} 

To solve $\auxmdp$, we can use any standard SSP algorithm, such as value iteration or policy iteration. Assumption \ref{assump_d_t_not_zero} ensures that there will always be a positive chance of termination in $\auxmdp$, a condition that guarantees that running any of these algorithms will result in an approximately optimal policy $\pi$~\cite{bertsekas1995dynamic}. 
This policy can be directly applied to the $\orgmdp$, as the states and actions in $\auxmdp$ and $\orgmdp$ are the same (up to the terminal state, which is not relevant for the policy).
We emphasize that our goal is to analyze the revenue in the original $\orgmdp$.
To do this, we calculate the steady-state distribution of the Markov chain induced by the policy $\pi$ in $\orgmdp$, and based on this distribution calculate the expected reward 
\begin{displaymath}
\E{\lim\limits_{T\to\infty} \frac{1}{T}{\sum\limits_{t=1}^T R_t}} \,\, ,
\end{displaymath}
and expected difficulty contribution
\begin{displaymath}
\E{\lim\limits_{T\to\infty} \frac{1}{T}{\sum\limits_{t=1}^T D_t}} \,\, .
\end{displaymath}
The revenue is then the ratio between these two values \cite{sapirshtein2016optimal}.

%%%%%%%%%%%%%%%%%%%%%%%%%%%%%%%%%%%%%%%%%%%%%%%%%%%%%%%%%%%%%%%%%%%%%%%%%%%%%%%%%%%%%%%%%%%%%%%%
%%%%%%%%%%%%%%%%%%%%%%%%%%%%%%%%%%%%%%%%%%%%%%%%%%%%%%%%%%%%%%%%%%%%%%%%%%%%%%%%%%%%%%%%%%%%%%%%
%%%%%%%%%%%%%%%%%%%%%%%%%%%%%%%%%%%%%%%%%%%%%%%%%%%%%%%%%%%%%%%%%%%%%%%%%%%%%%%%%%%%%%%%%%%%%%%%

    \section{Proof of Optimality}\label{section:proof} 

We next prove that PTO converges to the optimal solution of ARR-MDP. 
We bound the difference in revenue between~$\orgmdp$ and~$\auxmdp$ and show that it can be made arbitrarily small. 
%We first present general assumptions on the ARR-MDP, the we present our main result, and finally detail our proof.
%
%%%%%%%%%%%%%%%%%%%%%%%%%%%%%%%%%%%%%%%%%%%%%%%%%%%%%%%%%%%%%%%%%%%%%%%%%%%%%%%%%%%%%%%%%%%%%%%%
%%%%%%%%%%%%%%%%%%%%%%%%%%%%%%%%%%%%%%%%%%%%%%%%%%%%%%%%%%%%%%%%%%%%%%%%%%%%%%%%%%%%%%%%%%%%%%%%
%
%        \subsection{Main Result} 
%        
%In order to introduce our main result, we first formally define the revenues of the agent in $\orgmdp$ and $\auxmdp$.
%
%\RBZ{This definition appears here for completeness and because we never called the revenue $\revorg$ as ARR-MDP is described only after the revenue definition}

To specify our result we first need some notation. 
%\begin{definition}\label{rev_org_def}
Denote the revenue of the rational miner in $\orgmdp$ under policy$~\pi$ by 
$$ 
\revorg ^\pi \triangleq \Epolicy{\lim\limits_{T\to\infty} \frac{\sum\limits_{t=1}^T R_t}{\sum\limits_{t=1}^T D_t}} \,\, . 
$$ 
%
%\begin{definition}\label{x_t_def}
At every step $t$, $X_t$ is a random variable indicating whether the process $\auxmdp$ terminated at this step, that is, 
$$ 
X_t \triangleq
\begin{cases}
1 & \text{w.p } \left( 1 -\frac{1}{H} \right)^{D_t} \>\>\>\>\>\>\>\> \text{(continue)}\\
0 & \text{w.p } 1 - \left( 1 -\frac{1}{H} \right)^{D_t} \>\> \text{(stop)}
\end{cases} \,\, .
$$
%\end{definition}
%
%\begin{definition}\label{term_def}
Denote the first time at which $\auxmdp$ terminates by
$$ 
\termaux(H) \triangleq \arg \min\limits_t \{ X_t = 0 \} \,\, .
$$ 
%\end{definition}

%\begin{definition}\label{rev_aux_def}
Denote the revenue of the rational miner in $\auxmdp$ under policy$~\pi$ as by
$$
\revaux^\pi(H) \triangleq \Epolicy{\frac{1}{H}\sum\limits_{t=1}^{\termaux(H)} R_t} \,\, .
$$
%\end{definition} 

Our main theorem bounds the difference in revenue for the same policy in~$\orgmdp$ and~$\auxmdp$, showing it is linear in~$\frac{1}{H}$. Formally, 
\begin{theorem}\label{main_theorem}
Under Assumptions \ref{assump_r_t_bounded}--\ref{assump_positive_reccurent_state}, for any policy~$\pi$, the approximation error of PTO is bounded, %linear in~$\frac{1}{H}$, that is, 
\begin{displaymath}
\abs{\revorg^\pi -  \revaux^\pi(H)} = \bigO{\frac{1}{H}} \,\, .
\end{displaymath}
\end{theorem}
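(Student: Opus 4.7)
The plan is to show that both $\revorg^\pi$ and $\revaux^\pi(H)$ coincide, up to an $O(1/H)$ error, with the stationary ratio $\bar{R}/\bar{D}$, where $\bar{R} = \mathbb{E}_\mu[R_t]$ and $\bar{D} = \mathbb{E}_\mu[D_t]$ are taken under the stationary distribution $\mu$ of the Markov chain that $\pi$ induces on $\orgmdp$. By Assumption~\ref{assump_positive_reccurent_state} combined with Lemma~\ref{commumication_class_property_lemma}, the recurrent class of $\sinit$ is irreducible and positive recurrent; granting aperiodicity (which for the blockchain MDPs follows from $\sinit$ admitting self-return paths of coprime lengths), the induced chain is ergodic on this class, with a unique stationary distribution $\mu$. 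Applying the ergodic theorem separately to $\sum_{t=1}^T R_t / T$ and $\sum_{t=1}^T D_t / T$ (both limits exist by Assumptions~\ref{assump_r_t_bounded} and \ref{assump_d_t_bounded}, with $\bar{D} > 0$ by Assumption~\ref{assump_d_t_not_zero}) and taking the quotient gives $\revorg^\pi = \bar{R}/\bar{D}$.

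The substantive work is to show $\revaux^\pi(H) = \bar{R}/\bar{D} + \bigO{1/H}$. For the denominator, I would couple the termination mechanism of $\auxmdp$ to an auxiliary i.i.d.\ sequence $B_1, B_2, \dots$ of $\mathrm{Bernoulli}(1/H)$ coins drawn independently of the MDP trajectory, identifying the $D_t$ coin tosses made at step $t$ with the block $B_{S_{t-1}+1}, \dots, B_{S_t}$, where $S_t = \sum_{s=1}^t D_s$. Then $G := \min\{i : B_i = 1\}$ is $\mathrm{Geom}(1/H)$ with $\mathbb{E}[G] = H$, independent of the MDP, and $\termaux(H) = \min\{t : S_t \geq G\}$. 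Because $S_{\termaux(H)-1} < G \leq S_{\termaux(H)}$ and $D_{\termaux(H)} \leq \dmax$ by Assumption~\ref{assump_d_t_bounded}, this yields $\bigl| \mathbb{E}^\pi[S_{\termaux(H)}] - H \bigr| \leq \dmax$. Finiteness of $\mathbb{E}^\pi[\termaux(H)]$ additionally follows, since $\bar{D} \geq \varepsilon > 0$ by Assumption~\ref{assump_d_t_not_zero} forces $S_t$ to grow linearly.

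For the numerator, define the centered quantity $Y_t = R_t - (\bar{R}/\bar{D}) D_t$, uniformly bounded by Assumptions~\ref{assump_r_t_bounded} and \ref{assump_d_t_bounded} and satisfying $\mathbb{E}_\mu[Y_t] = 0$. Standard Markov reward theory for ergodic chains furnishes a bounded bias function $h: \fancyS \to \mathbb{R}$ solving the Poisson equation, so that $M_T := \sum_{t=1}^T Y_t + h(s_{T+1}) - h(s_1)$ is a martingale with bounded increments under $\pi$. Optional stopping at $\termaux(H)$ (finite mean, bounded increments) gives $\bigl| \mathbb{E}^\pi[\sum_{t=1}^{\termaux(H)} Y_t] \bigr| \leq 2 \norminf{h}$, whence $\mathbb{E}^\pi[\sum_{t=1}^{\termaux(H)} R_t] = (\bar{R}/\bar{D}) \mathbb{E}^\pi[S_{\termaux(H)}] + O(1) = (\bar{R}/\bar{D}) H + O(1)$. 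Dividing by $H$ yields $\revaux^\pi(H) = \bar{R}/\bar{D} + \bigO{1/H}$, which combined with the first paragraph establishes the theorem.

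The step I anticipate as the hardest is the Poisson-equation argument: it requires \emph{genuine} ergodicity of the chain under $\pi$ (in particular aperiodicity, which Assumption~\ref{assump_positive_reccurent_state} does not supply on its own) in order to guarantee a bounded bias function $h$, and it requires justifying optional stopping at the random time $\termaux(H)$. The latter hinges on $\mathbb{E}^\pi[\termaux(H)] < \infty$ together with bounded martingale increments, both of which follow from the second paragraph and the assumptions; careful bookkeeping of the constants is then what confirms that the error depends on $H$ only through $1/H$ and not through any hidden dependence on $\pi$ or the state-space size.
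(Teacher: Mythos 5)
Your proposal is correct in substance but takes a genuinely different route from the paper. The paper never centers the reward: it writes $\revorg=\inner{\hat{R}}{\piorg}/\inner{\hat{D}}{\piorg}$, shows $\revaux(H)$ is within $\bigO{\frac{1}{H}}$ of $\inner{\hat{R}'_H}{\piaux}/\inner{\hat{D}'_H}{\piaux}$ via the expected-horizon bound (Lemma~\ref{diff_cont_lemma}) and renewal identities for hitting times, and then closes the gap with a perturbation argument: $\norminf{\Porg-\Paux}=\bigO{\frac{1}{H}}$ implies $\norminf{\piorg-\piaux}=\bigO{\frac{1}{H}}$ (Lemma~\ref{distance_stationary_distribution}), after which the two stationary ratios are compared using the $\varepsilon$ lower bound on the denominators. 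You instead stay entirely on the chain induced by $\pi$ in $\orgmdp$: you couple the termination of $\auxmdp$ to an independent clock in the accumulated-difficulty scale, recover $\abs{\E{\sum_{t\le\termaux(H)}D_t}-H}=\bigO{\dmax}$ by a cleaner argument than the paper's integral computation, and control the numerator via the Poisson equation for $R_t-(\bar{R}/\bar{D})D_t$ plus optional stopping, so the error is essentially $2\norminf{h}/H$ plus $(\bar{R}/\bar{D})(\dmax+1)/H$. Your route avoids the stationary distribution of $\auxmdp$ and the matrix-sensitivity lemma and gives a transparent constant; the paper's route avoids martingales, optional stopping, and any need to bound $\Epolicy{\termaux(H)}$. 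Three details to tighten: (i) the Bernoulli-block coupling presumes integer $D_t$, whereas the model allows real $D_t\ge 0$~-- use a threshold $G$ with $\Pr{G>z}=\left(1-\frac{1}{H}\right)^z$ instead; (ii) optional stopping requires $\Epolicy{\termaux(H)}<\infty$ made quantitative (e.g., geometric tails over regeneration cycles at $\sinit$, using Assumption~\ref{assump_d_t_not_zero}) and an enlargement of the filtration by the independent coins so that $\termaux(H)$ is a stopping time while $M_T$ stays a martingale; (iii) aperiodicity should simply be forced w.l.o.g. by a reward- and difficulty-free self-loop at $\sinit$, as the paper does, rather than by protocol-specific path arguments~-- and the constant hidden in $\norminf{h}$ may depend on $\pi$, which is acceptable since the theorem is per policy and only finitely many deterministic policies matter for the corollary.
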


It follows directly that when optimizing $\auxmdp$, as $H \to \infty$, the revenue of the optimal policy of $\auxmdp$ converges to the revenue of the optimal policy in $\orgmdp$.
\begin{corollary}
Under Assumptions \ref{assump_r_t_bounded}--\ref{assump_positive_reccurent_state}, for any policy~$\pi$, it holds that:
\begin{displaymath}
\lim\limits_{H \to \infty} \max\limits_\pi \revaux^\pi(H) = \max\limits_\pi \revorg^\pi \,\, .
\end{displaymath}
\end{corollary}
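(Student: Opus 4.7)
The plan is to promote the pointwise bound of Theorem~\ref{main_theorem} into a statement about the optimal values, and the key leverage is the finiteness of the underlying MDP. First, I would recall that the state space~$\fancyS$ and action space~$\fancyA$ are finite (as stated at the start of Section~\ref{section:method:generalization}), so the set~$\Pi$ of deterministic stationary Markov policies is finite. A classical result for SSP problems and for average reward-type criteria is that an optimal policy can be taken from~$\Pi$; hence the supremum over arbitrary (possibly history-dependent) policies on both sides of the claimed equality is attained within~$\Pi$, and I may restrict attention to this finite class throughout.

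Next I would invoke Theorem~\ref{main_theorem} for each $\pi \in \Pi$ separately: there exist constants $C_\pi$ such that $\abs{\revorg^\pi - \revaux^\pi(H)} \le C_\pi / H$ for all sufficiently large~$H$. Because $\Pi$ is finite, $C \triangleq \max_{\pi \in \Pi} C_\pi$ is itself finite, giving the uniform bound $\abs{\revorg^\pi - \revaux^\pi(H)} \le C/H$ that holds simultaneously for every $\pi \in \Pi$. Uniform convergence allows the interchange of the limit and the maximum in the standard way: for any $\pi$, $\revaux^\pi(H) \le \revorg^\pi + C/H \le \max_{\pi'} \revorg^{\pi'} + C/H$, so taking the maximum over $\pi$ and letting $H \to \infty$ yields $\limsup_{H} \max_\pi \revaux^\pi(H) \le \max_\pi \revorg^\pi$. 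Symmetrically $\revorg^\pi \le \revaux^\pi(H) + C/H \le \max_{\pi'} \revaux^{\pi'}(H) + C/H$, giving $\max_\pi \revorg^\pi \le \liminf_{H} \max_\pi \revaux^\pi(H)$; combining the two inequalities produces the desired equality.

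The main obstacle I anticipate is ensuring that the constant hidden inside the $O(1/H)$ of Theorem~\ref{main_theorem} does not depend on $\pi$ in a way that defeats the argument after the finite maximum is taken. The constant will depend on $\rmax$, $\dmax$, the $\varepsilon$ of Assumption~\ref{assump_d_t_not_zero}, and on mixing- or hitting-time quantities of the Markov chain induced by~$\pi$. The first three are uniform in $\pi$ by assumption; the last is controlled because under Assumption~\ref{assump_positive_reccurent_state} the state $\sinit$ is positive recurrent for every $\pi$ and, together with the finiteness of $\fancyS$, this gives a finite expected hitting time whose maximum over the finite set $\Pi$ remains finite. Once this per-policy uniformity is recorded, the passage from the pointwise Theorem~\ref{main_theorem} to the max/limit interchange is essentially bookkeeping.
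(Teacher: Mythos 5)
Your proof is correct and takes essentially the same route as the paper, which simply asserts that the corollary ``follows directly'' from Theorem~\ref{main_theorem}. Your extra care in making the $\bigO{\frac{1}{H}}$ constant uniform over policies~--- restricting to the finite class of deterministic stationary policies and taking the maximum of the per-policy constants before performing the $\limsup$/$\liminf$ interchange~--- is precisely the bookkeeping the paper leaves implicit, and it is the right way to justify swapping the limit with the maximum.
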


In order to prove Theorem~\ref{main_theorem}, we begin with necessary notation~(\S\ref{section:proof:notation}) and then show that $\orgmdp$ and $\auxmdp$ are ergodic~(\S\ref{section:proof:ergodic}). 
Afterwards, we present the expected horizon lemma~(\S\ref{section:proof:horizon}), which shows that when $\auxmdp$ terminates, the difficulty contribution is approximately~$H$.
We then bound the difference in average reward and average difficulty contribution between~$\revorg$ and~$\revaux$ (\S\ref{section:proof:average_difference}).
Later, we simplify the expressions for~$\revorg$ and~$\revaux$~(\S\ref{section:proof:simplify}), and finally conclude by proving Theorem \ref{main_theorem}~(\S\ref{section:proof:theorem}).

%%%%%%%%%%%%%%%%%%%%%%%%%%%%%%%%%%%%%%%%%%%%%%%%%%%%%%%%%%%%%%%%%%%%%%%%%%%%%%%%%%%%%%%%%%%%%%%%
%%%%%%%%%%%%%%%%%%%%%%%%%%%%%%%%%%%%%%%%%%%%%%%%%%%%%%%%%%%%%%%%%%%%%%%%%%%%%%%%%%%%%%%%%%%%%%%%

        \subsection{Notation} \label{section:proof:notation}
        
From now on until the end of the proof, when considering either $\orgmdp$ or $\auxmdp$, we will assume a fixed policy~$\pi$.
This means that the MDPs are reduced to Markov chains as any action the agent takes is determined by~$\pi$.

Denote by~$\sinitaux$ the terminal state of $\auxmdp$.
Denote by~$\Porg(i,j)$ the chance to transition to state~$j$ when $\orgmdp$ is in state~$i$, so $\Porg$ is the transition matrix of $\orgmdp$.
Denote by~$\Paux(i,j)$ the chance to transition to state~$j$ when $\auxmdp$ with parameter~$H$ is in state~$i$, so~$\Paux$ is the transition matrix of~$\auxmdp$.
Also denote by~$\Rmdp(i,j)$ the reward in $\orgmdp$ when transitioning from state~$i$ to state~$j$ and $\Dmdp(i,j)$ as the difficulty contribution of the same transition.
Note that $\Rmdp(i,j)$ and $\Dmdp(i,j)$ are equivalent in $\orgmdp$ and $\auxmdp$ since only the transition probabilities were changed.

Denote the expected reward after state~$i$ in~$\orgmdp$ by 
\begin{displaymath}
\hat{R}(i) \triangleq \sum\limits_{j \in \fancyS} \Rmdp(i,j) \Porg(i,j) \,\, .
\end{displaymath}

Denote the expected reward after state~$i$ in~$\auxmdp$ by
\begin{displaymath}
\hat{R}'_H(i) \triangleq \sum\limits_{j \in \fancyS} \Rmdp(i,j) \Paux(i,j) \,\, .
\end{displaymath}

Denote the expected difficulty contribution after state $i$ in $\orgmdp$ by
\begin{displaymath}
\hat{D}(i) \triangleq \sum\limits_{j \in \fancyS} \Dmdp(i,j) \Porg(i,j) \,\, .
\end{displaymath}

Denote the expected difficulty contribution after state~$i$ in~$\auxmdp$ by 
\begin{displaymath}
\hat{D}'_H(i) \triangleq \sum\limits_{j \in \fancyS} \Dmdp(i,j) \Paux(i,j) \,\, .
\end{displaymath}

We will also use matrix and vector notations for all functions of the state (e.g.~$P$, $\hat{R}$, etc.) while regarding all vectors as column vectors and denote the dot product as $\inner{\cdot}{\cdot}$ when summing over the state space.

We will next see that $\orgmdp$ and $\auxmdp$ are ergodic and this implies they have stationary distributions. We denote the stationary distributions for $\orgmdp$ and $\auxmdp$ as ~$\piorg$ and~$\piaux$, respectively.

A classical result based on the MDP ergodicity~\cite{serfozo2009basics} can be used to simplify the average reward/difficulty contribution per step in both $\orgmdp$ and $\auxmdp$.
The average reward per step in $\orgmdp$ is $\inner{\hat{R}}{\piorg}$, 
the average reward per step in $\auxmdp$ is $\inner{\hat{R}'_H}{\piaux}$,
the average difficulty contribution per step in $\orgmdp$ is $\inner{\hat{D}}{\piorg}$,
and the average difficulty contribution per step in $\auxmdp$ is $\inner{\hat{D}'_H}{\piaux}$.

%%%%%%%%%%%%%%%%%%%%%%%%%%%%%%%%%%%%%%%%%%%%%%%%%%%%%%%%%%%%%%%%%%%%%%%%%%%%%%%%%%%%%%%%%%%%%%%%
%%%%%%%%%%%%%%%%%%%%%%%%%%%%%%%%%%%%%%%%%%%%%%%%%%%%%%%%%%%%%%%%%%%%%%%%%%%%%%%%%%%%%%%%%%%%%%%%

        \subsection{Proof of Ergodicity} \label{section:proof:ergodic}
        
In order to use classical results regarding ergodic Markov chains, we prove ~$\orgmdp$ and $\auxmdp$ are ergodic.
\begin{lemma}\label{erogidic_thm}
It holds that $\orgmdp$ and $\auxmdp$ are ergodic.
\end{lemma}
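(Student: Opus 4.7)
The plan is to show, under the fixed policy $\pi$, that the Markov chains induced by $\orgmdp$ and $\auxmdp$ are each irreducible, positive recurrent, and aperiodic, handling $\orgmdp$ first and then transferring the properties to $\auxmdp$.

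First I would handle $\orgmdp$. Starting from Assumption~\ref{assump_positive_reccurent_state}, $\sinit$ is positive recurrent, so I would restrict attention (without loss of generality for the stationary analysis invoked later) to the communication class of $\sinit$ inside the finite state space $\fancyS$: any state outside this class is either transient, in which case it is visited only finitely often, or lies in a disjoint recurrent class that is never entered from $\sinit$, so it contributes nothing to $\piorg$. On this restricted class the chain is irreducible by construction, and Lemma~\ref{commumication_class_property_lemma} lifts positive recurrence of $\sinit$ to the whole class. For aperiodicity I would exhibit a one-step self-loop at $\sinit$, i.e.\ $\Porg(\sinit, \sinit) > 0$, which holds naturally in blockchain models because with positive probability the honest network mines and publishes the next block and the rational miner immediately adopts it, returning the system to the no-fork initial configuration in one step. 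Lemma~\ref{commumication_class_property_lemma} then promotes aperiodicity to the full class.

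Next I would turn to $\auxmdp$. The key observation is that every non-terminal transition satisfies $\Paux(i,j) = (1 - 1/H)^{\Dmdp(i,j)} \cdot \Porg(i,j)$, and since $\Dmdp$ is finite by Assumption~\ref{assump_d_t_bounded}, this is strictly positive exactly when the corresponding $\orgmdp$ transition is. To make sense of a stationary distribution $\piaux$ despite the absorbing terminal state $\sinitaux$, I would fix the standard regenerative interpretation of an SSP as a repeating process in which, upon reaching $\sinitaux$, the chain restarts at $\sinit$ on the next step; this preserves all per-step averages that the proof later computes via Serfozo's formula. Under this reading, irreducibility transfers directly from $\orgmdp$'s class (each pair of non-terminal states still communicates, and $\sinitaux$ both absorbs from and restarts into the class), positive recurrence follows from finiteness plus irreducibility, and aperiodicity is inherited from the same self-loop $\Porg(\sinit, \sinit) > 0$, which remains positive in $\Paux$.

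The step I expect to be the main obstacle is aperiodicity: neither of the explicit Assumptions~\ref{assump_r_t_bounded}--\ref{assump_positive_reccurent_state} supplies it, and Lemma~\ref{commumication_class_property_lemma} only lets me propagate it once a single aperiodic state is identified. The clean route is a one-step self-loop at $\sinit$, but formalizing this for a general ARR-MDP — rather than an appeal to the blockchain setting — may require stating it as a mild structural assumption or, alternatively, arguing that the gcd of return times to $\sinit$ equals $1$ from two cycles of coprime length. A secondary subtlety is pinning down the interpretation of $\auxmdp$ that makes $\piaux$ well defined; the restart convention above is the natural choice and is the one consistent with the per-step averages $\inner{\hat{R}'_H}{\piaux}$ and $\inner{\hat{D}'_H}{\piaux}$ used in the subsequent subsections.
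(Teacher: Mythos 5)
Your overall skeleton matches the paper's: the restart convention at $\sinitaux$, the w.l.o.g.\ restriction to the states reachable from $\sinit$ (discarding transient and unreachable ones), and the use of Lemma~\ref{commumication_class_property_lemma} to propagate positive recurrence and aperiodicity from a single state to the whole chain. The genuine gap is exactly the step you flag yourself: aperiodicity. You hang it on $\Porg(\sinit,\sinit)>0$, justified by a blockchain story in which the honest network mines a block and the rational miner ``immediately adopts it.'' But the lemma is stated for a general ARR-MDP under Assumptions~\ref{assump_r_t_bounded}--\ref{assump_positive_reccurent_state} and must hold for \emph{every} fixed policy $\pi$; whether the miner adopts is the policy's choice, so even in the Bitcoin model the one-step return you describe need not exist under the policy being analyzed, and nothing in the assumptions yields a self-loop at $\sinit$ or two return cycles of coprime length. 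Your fallback options (adding a structural assumption, or an unproved gcd argument) either weaken the statement or leave the hole open.

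The paper closes this with a modification argument you are missing: one may, w.l.o.g., change the positive recurrent state to have a positive probability of transitioning to itself with zero reward and zero difficulty contribution. Inserting such idle steps changes neither $\revorg$ (the objective is a ratio, and these steps add nothing to numerator or denominator) nor $\revaux$ (a step with $D_t=0$ carries no termination probability and no reward), yet it forces that state to have period $1$; Lemma~\ref{commumication_class_property_lemma} then spreads aperiodicity, exactly as in your plan. If you replace your blockchain-specific self-loop claim with this lazification trick, the rest of your argument goes through~--- including your deduction of positive recurrence of the restarted $\auxmdp$ from finiteness plus irreducibility, which is a legitimate alternative to the paper's appeal to Assumption~\ref{assump_d_t_not_zero} for that step.
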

\begin{proof}
In this proof, to use ergodicity results, we will assume that once the agent in $\auxmdp$ terminates and enters~$\sinitaux$, the process next transitions to~$\sinit$ and restarts.
This transition happens w.p~1 and does incur any reward or contribution to the difficulty.
This trick does not change our results (which only concern the rewards until the first termination), and is only for mathematical convenience.

First, we can say that $\orgmdp$ and $\auxmdp$ are irreducible w.l.o.g because we can ignore all the states that are unreachable from~$\sinit$ and any transient states before it.
This does not change the reward criteria since transient states do not affect average rewards as they only occur a finite amount of times and unreachable states do not occur at all.

From Assumption~\ref{assump_positive_reccurent_state}, $\sinit$ is positive recurrent in $\orgmdp$.
The trick to restart $\auxmdp$, combined with Assumption~\ref{assump_d_t_not_zero} imply that~$\sinit$ is positive recurrent in $\auxmdp$ as well.

In addition, We can say that both $\orgmdp$ and $\auxmdp$ are aperiodic w.l.o.g because we can take the positive recurrent state and change it to have a chance to transition to itself with no reward and no difficulty contribution.
This does not change $\revorg$ or $\revaux$ at all since this just means that the game may halt for a few steps and then carry on normally.
But, this change ensures the state has a period of 1.% and with irreducibility, this implies aperiodicity.

Since both MDPs are irreducible and there is an aperiodic positive recurrent state, Lemma~\ref{commumication_class_property_lemma} gives that all the states are aperiodic and positive recurrent, showing that the MDPs are irreducible and all their states are aperiodic positive recurrent, and therefore ergodic.
% meaning 
% This exactly answers the definition of ergodicity.
\end{proof}

%%%%%%%%%%%%%%%%%%%%%%%%%%%%%%%%%%%%%%%%%%%%%%%%%%%%%%%%%%%%%%%%%%%%%%%%%%%%%%%%%%%%%%%%%%%%%%%%
%%%%%%%%%%%%%%%%%%%%%%%%%%%%%%%%%%%%%%%%%%%%%%%%%%%%%%%%%%%%%%%%%%%%%%%%%%%%%%%%%%%%%%%%%%%%%%%%

        \subsection{Expected Horizon Lemma} \label{section:proof:horizon} 

This following lemma shows that for~$H \gg \dmax$, the expected sum of the difficulty contribution from the start of the process until termination is a close approximation of~$H$, explaining why we call~$H$ the expected horizon. 
This result also explains the intuition behind the choice of $\revaux$~-- instead of dividing by the actual difficulty contribution, we divide by the expected difficulty contribution.

\begin{lemma}\label{diff_cont_lemma}
The expected total contribution to the difficulty when $\auxmdp$ terminates is equal to~$H$ up to~$\dmax + 1$.
Formally, it holds that:
\begin{displaymath}
\abs{\E{\sum\limits_{t=1}^{\termaux(H)} D_t} - H} \leq \dmax + 1 \, \, .
\end{displaymath}
\end{lemma}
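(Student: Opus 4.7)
The plan is to prove this via a coupling argument that reduces the geometric-like termination to a standard geometric random variable. Introduce an auxiliary sequence of i.i.d.\ Bernoulli random variables $Y_1, Y_2, \ldots$ with $\Pr{Y_i = 1} = \frac{1}{H}$, drawn independently of the PT-MDP dynamics, and let $N \triangleq \min\{i : Y_i = 1\}$. Then $N$ is geometric with parameter $\frac{1}{H}$, so $\E{N} = H$.

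Couple the termination of $\auxmdp$ to this sequence by declaring that the $D_t$ coin flips implicit in step $t$ correspond to the block $Y_{S_{t-1}+1}, \ldots, Y_{S_t}$, where $S_t \triangleq \sum_{s=1}^t D_s$ and $S_0 = 0$. Since the $Y_i$ are independent Bernoullis with parameter $\frac{1}{H}$, the probability that none of these $D_t$ coins fires equals $\left(1-\frac{1}{H}\right)^{D_t}$, matching the PT-MDP continuation probability defined in \S\ref{section:method:PTO}. Under this coupling, $\termaux(H)$ is exactly the first step $t$ at which $N \leq S_t$, which is the two-sided inclusion $S_{\termaux(H)-1} < N \leq S_{\termaux(H)}$.

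From this inclusion, together with Assumption~\ref{assump_d_t_bounded}, I get the pointwise sandwich
\begin{displaymath}
N \;\leq\; S_{\termaux(H)} \;=\; S_{\termaux(H)-1} + D_{\termaux(H)} \;<\; N + D_{\termaux(H)} \;\leq\; N + \dmax \,\, .
\end{displaymath}
Taking expectations and using $\E{N} = H$ yields $H \leq \E{\sum_{t=1}^{\termaux(H)} D_t} \leq H + \dmax$, which is strictly stronger than the claimed bound of $\dmax + 1$.

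The main point requiring care, and the step I would dwell on, is the rigorous justification of the coupling: I must verify that, conditional on any realized trajectory of states, actions, and $D_t$ values up to step $t-1$, the termination event at step $t$ has conditional probability $1 - (1-\tfrac{1}{H})^{D_t}$ in both processes. Because the $Y_i$ are constructed independently of the MDP evolution and have i.i.d.\ marginals, the block $Y_{S_{t-1}+1}, \ldots, Y_{S_t}$ is independent of the history given $D_t$, so the conditional distribution of the continuation indicator agrees, and the two constructions induce the same law on $\termaux(H)$ and $S_{\termaux(H)}$. Once this is established, the inequality chain above is just arithmetic.
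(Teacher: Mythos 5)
Your coupling argument is a genuinely different route from the paper's proof, and in the case it covers it is correct and even slightly sharper. The paper never constructs an explicit threshold variable: it computes the conditional survival probability $\Pr{T < \termaux(H) \mid \sum_{t=1}^{T}D_t = z} = \left(1-\tfrac{1}{H}\right)^{z}$, sandwiches the tail $\Pr{\sum_{t=1}^{\termaux(H)-1}D_t \ge y}$ between $\left(1-\tfrac1H\right)^{y+\dmax}$ and $\left(1-\tfrac1H\right)^{y}$, and integrates these tails to bound the expectation, which is where the extra $-1$ (from $-1/\ln(1-\tfrac1H)\ge H-1$) enters. You instead realize the termination mechanism through an explicit geometric threshold $N$ with $\E{N}=H$ and read the bound off the pointwise sandwich $S_{\termaux(H)-1} < N \le S_{\termaux(H)}$ plus Assumption~\ref{assump_d_t_bounded}; this avoids the tail-integration entirely and yields $H \le \E{S_{\termaux(H)}} \le H+\dmax$, stronger than the stated $\dmax+1$.

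Two caveats, one of which is a real restriction you must address. First, your coupling only makes sense when every $D_t$ is a non-negative \emph{integer}: the block $Y_{S_{t-1}+1},\dots,Y_{S_t}$ requires integer partial sums, whereas Assumption~\ref{assump_d_t_bounded} (and the paper's proof, which conditions on $z\in\R^+$ and integrates over $y$) allows arbitrary real $D_t\in[0,\dmax]$; the continuation probability $\left(1-\tfrac1H\right)^{D_t}$ is perfectly well defined there, but your construction is not. The repair is to replace the Bernoulli sequence by a single memoryless continuous threshold $N$ with $\Pr{N>z}=\left(1-\tfrac1H\right)^{z}$ (exponential with rate $-\ln(1-\tfrac1H)$); memorylessness again gives the correct conditional continuation probability, but then $\E{N}=-1/\ln\!\left(1-\tfrac1H\right)\in[H-1,H]$ rather than exactly $H$, and the same sandwich delivers $H-1 \le \E{S_{\termaux(H)}} \le H+\dmax$, which still proves the lemma as stated (though no longer the improved constant). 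Second, your sandwich presupposes $\termaux(H)<\infty$ almost surely, i.e.\ $S_t\to\infty$; this is not automatic if $D_t$ can be $0$ and needs Assumption~\ref{assump_d_t_not_zero} (together with the fixed-policy chain structure), a point the paper makes explicitly when it asserts that the accumulated difficulty eventually exceeds any $y$. With these two points patched, your argument is a clean and more elementary alternative to the paper's computation.
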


We defer the proof to Appendix~\ref{section:proofs_appendix:horizon}.

%%%%%%%%%%%%%%%%%%%%%%%%%%%%%%%%%%%%%%%%%%%%%%%%%%%%%%%%%%%%%%%%%%%%%%%%%%%%%%%%%%%%%%%%%%%%%%%%
%%%%%%%%%%%%%%%%%%%%%%%%%%%%%%%%%%%%%%%%%%%%%%%%%%%%%%%%%%%%%%%%%%%%%%%%%%%%%%%%%%%%%%%%%%%%%%%%

\subsection{Bounding the Average Difference} \label{section:proof:average_difference} 

The next lemma bounds the difference of the average reward/difficulty contribution per step in $\orgmdp$ and $\auxmdp$.

\begin{lemma}\label{lemma_expected_reward_diff_cont_equal}
The expected reward per step of $\orgmdp$ and $\auxmdp$ are equal up to $\bigO{\frac{1}{H}}$.
Formally, it holds that:
\begin{displaymath}
\abs{\inner{\hat{R}}{\piorg} - \inner{\hat{R}'_H}{\piaux}} = \bigO{\frac{1}{H}} \, \, .
\end{displaymath}
Furthermore, The expected difficulty contribution per step of $\orgmdp$ and $\auxmdp$ are also equal up to $\bigO{\frac{1}{H}}$.
Formally, it holds that:
\begin{displaymath}
\abs{\inner{\hat{D}}{\piorg} - \inner{\hat{D}'_H}{\piaux}} = \bigO{\frac{1}{H}} \, \, .
\end{displaymath}
\end{lemma}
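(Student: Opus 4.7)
The plan is to express both inner products through the renewal--reward identity for the ergodic chain $\auxmdp$ (whose restart, per the proof of Lemma~\ref{erogidic_thm}, contributes zero reward and zero difficulty). Starting from $\sinit$, each renewal cycle has length $\termaux(H)+1$, so
\begin{displaymath}
\inner{\hat{R}'_H}{\piaux} = \frac{\Epolicy{\sum_{t=1}^{\termaux(H)} R_t}}{\Epolicy{\termaux(H)}+1}, \qquad \inner{\hat{D}'_H}{\piaux} = \frac{\Epolicy{\sum_{t=1}^{\termaux(H)} D_t}}{\Epolicy{\termaux(H)}+1}.
\end{displaymath}
The problem then reduces to controlling the numerators and denominator in terms of $\inner{\hat{R}}{\piorg}$, $\inner{\hat{D}}{\piorg}$ and $H$.

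Next, I would invoke the Poisson equation for the ergodic chain $\orgmdp$: on a finite state space there exists a bounded bias vector $h$ satisfying $\hat{R}-\inner{\hat{R}}{\piorg}\mathbf{1} = h - \Porg h$, which turns $M_t \triangleq h(s_t) + \sum_{s=1}^{t}(R_s - \inner{\hat{R}}{\piorg})$ into a martingale with respect to the natural filtration of $\orgmdp$. The trajectory of $\auxmdp$ prior to termination has the same law as $\orgmdp$, and $\termaux(H)$ is a stopping time with respect to the enlarged filtration that also contains the independent termination coins, so optional stopping gives
\begin{displaymath}
\Epolicy{\sum_{t=1}^{\termaux(H)} R_t} = \inner{\hat{R}}{\piorg}\cdot\Epolicy{\termaux(H)} + \Epolicy{h(s_0)-h(s_{\termaux(H)})} = \inner{\hat{R}}{\piorg}\cdot\Epolicy{\termaux(H)} + \bigO{1},
\end{displaymath}
using that $h$ is uniformly bounded on the finite state space. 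The identical argument, with the analogous bias function for $D$, yields $\Epolicy{\sum_{t=1}^{\termaux(H)} D_t} = \inner{\hat{D}}{\piorg}\cdot\Epolicy{\termaux(H)} + \bigO{1}$.

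Combining the difficulty identity with Lemma~\ref{diff_cont_lemma} gives $\inner{\hat{D}}{\piorg}\cdot\Epolicy{\termaux(H)} = H + \bigO{1}$, and because Assumption~\ref{assump_d_t_not_zero} forces $\inner{\hat{D}}{\piorg} \geq \varepsilon > 0$, this implies $\Epolicy{\termaux(H)} = H/\inner{\hat{D}}{\piorg} + \bigO{1} = \Theta(H)$. Substituting back into the renewal formula,
\begin{displaymath}
\inner{\hat{R}'_H}{\piaux} = \frac{\inner{\hat{R}}{\piorg}\cdot\Epolicy{\termaux(H)} + \bigO{1}}{\Epolicy{\termaux(H)}+1} = \inner{\hat{R}}{\piorg} + \bigO{\tfrac{1}{H}},
\end{displaymath}
and the analogous computation for $\hat{D}'_H$ delivers the second bound. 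The hard part will be justifying optional stopping cleanly: one needs $\Epolicy{\termaux(H)}<\infty$ together with an integrability condition on $M_{\termaux(H)}$. Exponential tails of $\termaux(H)$ follow from the ergodic theorem applied to $\sum_s D_s/s \to \inner{\hat{D}}{\piorg}>0$ (each window of $\Theta(H)$ steps yields a uniform termination probability), which together with the boundedness of $h$ and of $R_t$ supplies the required integrability. Everything else reduces to routine algebra.
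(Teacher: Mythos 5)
Your proposal is correct in outline, but it takes a genuinely different route from the paper. The paper proves this lemma by a purely perturbative, linear-algebraic argument: it computes $\norminf{\Porg-\Paux}=\bigO{\frac{1}{H}}$ directly from the form of the termination probabilities (Lemma~\ref{mdp_prime_transition_p}), invokes the stationary-distribution sensitivity bound of Lemma~\ref{distance_stationary_distribution} to get $\norminf{\piorg-\piaux}=\bigO{\frac{1}{H}}$ (Lemma~\ref{lemma_stationary_dist_equal}), bounds $\norminf{\hat{R}-\hat{R}'_H}$ and $\norminf{\hat{D}-\hat{D}'_H}$ by $\bigO{\frac{1}{H}}$ in the same way, and finishes with the decomposition $\inner{\hat{R}}{\piorg}-\inner{\hat{R}'_H}{\piaux}=\inner{\hat{R}}{\piorg-\piaux}-\inner{\hat{R}-\hat{R}'_H}{\piaux}$ plus the triangle and Cauchy--Schwarz inequalities; neither $\termaux(H)$ nor Lemma~\ref{diff_cont_lemma} is used. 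You instead go through the regenerative representation $\inner{\hat{R}'_H}{\piaux}=\E{\sum_{t=1}^{\termaux(H)}R_t}/\E{\termaux(H)+1}$ (which the paper deploys only later, inside Lemma~\ref{simplified_revaux}, via Lemmas~\ref{thm54} and~\ref{prop69}), a Poisson-equation martingale for $\orgmdp$ with optional stopping, and Lemma~\ref{diff_cont_lemma} to pin down $\E{\termaux(H)}=H/\inner{\hat{D}}{\piorg}+\bigO{1}$. Your route avoids citing the perturbation bound and yields the sharp asymptotics of $\E{\termaux(H)}$ as a byproduct, but it shifts the technical weight onto optional stopping: the phrase ``the trajectory of $\auxmdp$ prior to termination has the same law as $\orgmdp$'' is only correct in the coupled, unconditional sense (run $\orgmdp$ and flip exogenous coins whose bias depends on the realized $D_t$; conditioned on survival the path law is tilted toward small $D_t$), and your enlarged-filtration phrasing is indeed the right fix since the coins do not influence future transitions; you must also actually establish integrability of $\termaux(H)$, which your window argument can deliver for a finite ergodic chain under Assumption~\ref{assump_d_t_not_zero} but needs to be written out. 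One small bookkeeping point: with the terminal state carrying zero reward, $\inner{\hat{R}'_H}{\piaux}\cdot\E{\termaux(H)+1}$ omits the reward of the terminating transition while your cycle sum includes it; the discrepancy is at most $\rmax$ per cycle, hence $\bigO{\frac{1}{H}}$ after dividing by $\E{\termaux(H)+1}=\Theta(H)$, so the conclusion is unaffected (the paper's own proof of Lemma~\ref{simplified_revaux} glosses the same boundary term).
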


Here we present a short lemma, which uses the previous the lower bound the average difficult contribution per step of both ARR-MDP and PT-MDP.

\begin{lemma}\label{lemma_d_t_org_aux_not_zero}
The average difficulty contribution per step in $\orgmdp$ and $\auxmdp$ is more than some constant $\varepsilon$ > 0.
Formally, it holds that
\begin{displaymath}
\inner{\hat{D}}{\piorg} > \varepsilon \, \, ,
\end{displaymath}
and
\begin{displaymath}
\inner{\hat{D}'_H}{\piaux} > \varepsilon \, \, .
\end{displaymath}
\end{lemma}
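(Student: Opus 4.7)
The plan is to handle the two inequalities separately and then stitch them together.

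For the first bound $\inner{\hat{D}}{\piorg} > \varepsilon$, I would invoke ergodicity of $\orgmdp$ (Lemma \ref{erogidic_thm}) together with the classical stationary-distribution result mentioned at the end of \S\ref{section:proof:notation}: this gives
\begin{displaymath}
\Epolicy{\lim_{T\to\infty} \frac{1}{T}\sum_{t=1}^T D_t} = \inner{\hat{D}}{\piorg} \, \, .
\end{displaymath}
Assumption \ref{assump_d_t_not_zero} asserts that the left-hand side is bounded below by some $\varepsilon > 0$, so the first inequality follows immediately.

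For the second bound $\inner{\hat{D}'_H}{\piaux} > \varepsilon$, I would apply the reverse triangle inequality together with Lemma \ref{lemma_expected_reward_diff_cont_equal}. Writing
\begin{displaymath}
\inner{\hat{D}'_H}{\piaux} \geq \inner{\hat{D}}{\piorg} - \abs{\inner{\hat{D}}{\piorg} - \inner{\hat{D}'_H}{\piaux}} \, \, ,
\end{displaymath}
the first term is at least some $\varepsilon_0 > 0$ by the argument above, while the second term is $\bigO{\tfrac{1}{H}}$ by Lemma \ref{lemma_expected_reward_diff_cont_equal}. Choosing $H$ large enough that the $\bigO{\tfrac{1}{H}}$ term is at most $\varepsilon_0/2$ and then setting $\varepsilon = \varepsilon_0/2$ yields the claim. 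Since the statement of the lemma only asserts the existence of some positive constant, this weakening of the constant is harmless.

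The only subtle point, and the one I would be careful to state explicitly, is that the constant $\varepsilon$ appearing in the lemma statement is not literally the same $\varepsilon$ as in Assumption \ref{assump_d_t_not_zero}; it is a possibly smaller positive constant that depends on $H$ being sufficiently large. Since PTO is only meaningful in the regime of large $H$ anyway, this is not an obstacle, but it should be acknowledged so that the absence of a uniform lower bound (as $H$ ranges over all values) is not mistaken for a gap in the argument.
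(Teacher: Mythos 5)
Your proposal is correct and follows essentially the same route as the paper: the first inequality is obtained directly from ergodicity and Assumption~\ref{assump_d_t_not_zero}, and the second from the difficulty-contribution part of Lemma~\ref{lemma_expected_reward_diff_cont_equal} with $H$ taken large and the constant $\varepsilon$ redefined accordingly. Your explicit remark that the resulting constant is a possibly smaller one than in Assumption~\ref{assump_d_t_not_zero} matches the paper's own handling of this point.
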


The proof are deferred to Appendix~\ref{section:proofs_appendix:auxiliary_lemmas}.

%%%%%%%%%%%%%%%%%%%%%%%%%%%%%%%%%%%%%%%%%%%%%%%%%%%%%%%%%%%%%%%%%%%%%%%%%%%%%%%%%%%%%%%%%%%%%%%%
%%%%%%%%%%%%%%%%%%%%%%%%%%%%%%%%%%%%%%%%%%%%%%%%%%%%%%%%%%%%%%%%%%%%%%%%%%%%%%%%%%%%%%%%%%%%%%%%

        \subsection{Simplifying the Revenue} \label{section:proof:simplify}
        
The following lemmas are the first direct steps towards bounding the approximation error.

\begin{lemma}\label{simplified_revorg}
The revenue in $\orgmdp$ is equal to the average expected reward per step divided by the average expected difficulty contribution per step.
Formally, it holds that:
\begin{displaymath}
\revorg = \frac{\inner{\hat{R}}{\piorg}}{\inner{\hat{D}}{\piorg}} \, \, .
\end{displaymath}
\end{lemma}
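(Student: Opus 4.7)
The plan is to apply the ergodic theorem (strong law of large numbers for ergodic Markov chains) separately to the numerator and denominator appearing inside the expectation, and then pass to the ratio.

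First I would invoke Lemma~\ref{erogidic_thm}, which establishes that $\orgmdp$ under the fixed policy $\pi$ is an ergodic Markov chain with stationary distribution $\piorg$. The classical ergodic theorem for such chains states that for any bounded function $f$ on the state space,
\begin{displaymath}
\lim_{T \to \infty} \frac{1}{T} \sum_{t=1}^T f(Y_t) = \inner{f}{\piorg} \quad \text{almost surely.}
\end{displaymath}
Applying this with $f(i) = \hat{R}(i)$ (bounded by $\rmax$ via Assumption~\ref{assump_r_t_bounded}), together with the fact that $R_t$ is a conditionally unbiased sample of $\hat{R}(Y_t)$ whose fluctuations average out in the limit, gives
\begin{displaymath}
\frac{1}{T}\sum_{t=1}^T R_t \xrightarrow[T\to\infty]{\text{a.s.}} \inner{\hat{R}}{\piorg}.
\end{displaymath}
The same reasoning applied to $\hat{D}$, using Assumption~\ref{assump_d_t_bounded}, yields
\begin{displaymath}
\frac{1}{T}\sum_{t=1}^T D_t \xrightarrow[T\to\infty]{\text{a.s.}} \inner{\hat{D}}{\piorg}.
\end{displaymath}

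Next I would form the ratio. Rewriting
\begin{displaymath}
\frac{\sum_{t=1}^T R_t}{\sum_{t=1}^T D_t} = \frac{\tfrac{1}{T}\sum_{t=1}^T R_t}{\tfrac{1}{T}\sum_{t=1}^T D_t},
\end{displaymath}
and using Lemma~\ref{lemma_d_t_org_aux_not_zero} (which guarantees $\inner{\hat{D}}{\piorg} > \varepsilon > 0$, so the denominator limit is strictly positive), the continuous mapping theorem for almost-sure convergence gives
\begin{displaymath}
\lim_{T \to \infty} \frac{\sum_{t=1}^T R_t}{\sum_{t=1}^T D_t} = \frac{\inner{\hat{R}}{\piorg}}{\inner{\hat{D}}{\piorg}} \quad \text{almost surely.}
\end{displaymath}

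Finally, since the right-hand side is a deterministic constant, taking the expectation under $\piorg$ is trivial and preserves the equality, yielding $\revorg^\pi = \inner{\hat{R}}{\piorg} / \inner{\hat{D}}{\piorg}$. The main subtlety — and the only nontrivial step — is making sure the ratio of limits is well-defined; this is exactly where Lemma~\ref{lemma_d_t_org_aux_not_zero} is needed to rule out division by zero. Everything else is a direct application of the ergodic theorem guaranteed by Lemma~\ref{erogidic_thm}, so no interchange of limit and expectation via dominated convergence is required — the inner limit is already a.s. constant.
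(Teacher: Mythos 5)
Your proposal is correct and follows essentially the same route as the paper: apply the ergodic theorem for the chain induced by the fixed policy to the time-averaged reward and difficulty contribution, form the ratio using the strictly positive lower bound on the denominator, and observe that the resulting limit is an almost-sure constant so the outer expectation is trivial. The only difference is one of rigor rather than substance: where you informally argue that the fluctuations of $R_t$ around $\hat{R}(Y_t)$ "average out," the paper makes this precise by applying the ergodic theorem to the pair chain $(Y_t, Y_{t+1})$ via Lemma~\ref{cor79}, since $R_t$ and $D_t$ are functions of transitions rather than of single states.
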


\begin{lemma}\label{simplified_revaux}
The revenue in $\auxmdp$ is equal to the average expected reward per step divided by the average expected difficulty contribution per step up to $\bigO{\frac{1}{H}}$.
Formally, it holds that:
\begin{displaymath}
\abs{\revaux - \frac{\inner{\hat{R}'}{\piaux}}{\inner{\hat{D}'}{\piaux}}} = \bigO{\frac{1}{H}} \, \, .
\end{displaymath}
\end{lemma}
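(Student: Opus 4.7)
The plan is to evaluate $\Epolicy{\sum_{t=1}^{\termaux(H)} R_t}$ via a Wald-type identity for the ergodic $\auxmdp$, then combine it with the analogous identity for $D_t$ together with Lemma~\ref{diff_cont_lemma} to eliminate the quantity $\Epolicy{\termaux(H)}$ and recover the desired ratio.

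First, I would work in the restarted, ergodic version of $\auxmdp$ constructed in the proof of Lemma~\ref{erogidic_thm}. Since the state space is finite and the chain is ergodic, the Poisson (average-reward) equation admits a solution $g_R$ satisfying
\begin{displaymath}
\hat{R}'_H(i) - \inner{\hat{R}'_H}{\piaux} = g_R(i) - \sum_{j \in \fancyS} \Paux(i,j)\, g_R(j),
\end{displaymath}
with $\norminf{g_R}$ bounded by a constant independent of $H$, because $\Paux \to \Porg$ as $H \to \infty$, both chains are ergodic on the same finite state space, and the Poisson solution depends continuously on the transition matrix. Telescoping along a trajectory $Y_0, Y_1, \ldots$ gives
\begin{displaymath}
\sum_{t=1}^T R_t = T \cdot \inner{\hat{R}'_H}{\piaux} + g_R(Y_0) - g_R(Y_T) + M^R_T,
\end{displaymath}
where $M^R_T$ is a zero-mean martingale with bounded increments (Assumption~\ref{assump_r_t_bounded}). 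Applying optional stopping at $\termaux(H)$, which is integrable because the per-step termination probability is at least of order $\varepsilon/H$ by Assumption~\ref{assump_d_t_not_zero} so $\termaux(H)$ is stochastically dominated by a geometric random variable of mean $O(H)$, yields
\begin{displaymath}
\Epolicy{\sum_{t=1}^{\termaux(H)} R_t} = \Epolicy{\termaux(H)} \cdot \inner{\hat{R}'_H}{\piaux} + \bigO{1}.
\end{displaymath}

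Running the identical decomposition with $D_t$ in place of $R_t$ gives $\Epolicy{\sum_{t=1}^{\termaux(H)} D_t} = \Epolicy{\termaux(H)} \cdot \inner{\hat{D}'_H}{\piaux} + \bigO{1}$. By Lemma~\ref{diff_cont_lemma} the left side equals $H + \bigO{1}$, and by Lemma~\ref{lemma_d_t_org_aux_not_zero} we have $\inner{\hat{D}'_H}{\piaux} > \varepsilon$, so
\begin{displaymath}
\Epolicy{\termaux(H)} = \frac{H}{\inner{\hat{D}'_H}{\piaux}} + \bigO{1}.
\end{displaymath}
Substituting this back into the reward identity, dividing by $H$, and using Assumption~\ref{assump_r_t_bounded} to control $\inner{\hat{R}'_H}{\piaux}$ yields $\revaux^\pi(H) = \inner{\hat{R}'_H}{\piaux}/\inner{\hat{D}'_H}{\piaux} + \bigO{1/H}$, which is the claim.

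The main obstacle is making the $\bigO{1}$ errors from optional stopping truly uniform in $H$. Concretely, one must verify that the Poisson-equation solution $g_R$ for $\auxmdp$ has a sup-norm bound independent of $H$; since $g_R$ depends continuously on the transition matrix on the subspace of ergodic finite chains and $\Paux$ perturbs $\Porg$ only by rerouting mass of order $D_t/H$ to $\sinitaux$, this continuity argument should close the proof, provided the mixing rate of $\auxmdp$ does not degenerate as $H$ grows. The remaining ingredients, namely integrability of $\termaux(H)$, boundedness of the boundary term $g_R(Y_0) - g_R(Y_{\termaux(H)})$, and applicability of optional stopping for a bounded-increment martingale stopped at an integrable time, are routine consequences of the ergodicity already established in Lemma~\ref{erogidic_thm}.
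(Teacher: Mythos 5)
Your route is genuinely different from the paper's. The paper treats the run up to termination as one regenerative cycle of the restarted chain: since $\termaux(H)+1$ is the hitting time of $\sinitaux$, Serfozo's cycle-sum formula (Lemma~\ref{prop69}) gives the \emph{exact} identities $\E{\sum_{t=1}^{\termaux(H)} R_t} = (\piaux(\sinitaux))^{-1}\inner{\hat{R}'_H}{\piaux}$ and likewise for $D_t$, together with $\piaux(\sinitaux)=1/\E{\termaux(H)+1}$ (Lemma~\ref{thm54}); it then sandwiches $1/H$ using Lemma~\ref{diff_cont_lemma}, bounds $\piaux(\sinitaux)\leq \dmax/(H-1)$, and finishes by bounding the gap between the two resulting fractions via $\inner{\hat{D}'_H}{\piaux}>\varepsilon$. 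You instead derive \emph{approximate} Wald-type identities from the Poisson equation plus optional stopping and eliminate $\mathbb{E}[\termaux(H)]$; the concluding algebra (using Lemma~\ref{diff_cont_lemma}, Lemma~\ref{lemma_d_t_org_aux_not_zero}, and $\abs{R_t}\leq\rmax$) is fine, and the approach would go through. What the paper's renewal route buys is that the cycle identities are exact, so no bound on a differential value function is needed; what your route buys is a more standard average-reward-MDP argument, at the price of having to control the Poisson solution uniformly in $H$ --- which is exactly where your write-up is not yet sound.

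Two steps need repair. First, the uniform bound on $\norminf{g_R}$: continuity of the Poisson solution on the set of ergodic transition matrices does not give a bound that is uniform as $H\to\infty$, because $\Paux$ approaches the \emph{boundary} of that set --- in the limit $\sinitaux$ becomes unreachable and $\piaux(\sinitaux)\to 0$, so the limiting matrix is not ergodic on the augmented state space and the expected return time to $\sinitaux$ blows up like $H$. The clean fix is the hitting-time representation $g_R(i)=\mathbb{E}_i\left[\sum_{t=0}^{\tau_{\sinit}-1}\left(\hat{R}'_H(Y_t)-\inner{\hat{R}'_H}{\piaux}\right)\right]$ with reference state $\sinit$ rather than $\sinitaux$: constructing $\auxmdp$ by first sampling the $\orgmdp$ transition and then tossing the termination coins couples the two chains so that a termination event only shortcuts to $\sinit$ through $\sinitaux$, giving $\tau^{\auxmdp}_{\sinit}\leq \tau^{\orgmdp}_{\sinit}+1$ pathwise; hence $\mathbb{E}_i[\tau_{\sinit}]$, and therefore $\norminf{g_R}$, is bounded uniformly in $H$ by Assumption~\ref{assump_positive_reccurent_state} and the reduction in Lemma~\ref{erogidic_thm}. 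Second, your integrability argument for $\termaux(H)$ is wrong as stated: the per-step termination probability is $1-(1-\frac{1}{H})^{D_t}$, which is zero whenever $D_t=0$, and Assumption~\ref{assump_d_t_not_zero} only lower-bounds the long-run average of $D_t$, so there is no stochastic domination by a geometric variable with parameter of order $\varepsilon/H$. Integrability of $\termaux(H)$ instead follows from the positive recurrence of $\sinitaux$ in the restarted chain established in Lemma~\ref{erogidic_thm}, which is all that the optional stopping theorem for bounded-increment martingales requires. With these two repairs your argument closes and yields the claimed $\bigO{\frac{1}{H}}$ bound.
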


The proofs are deferred to Appendix~\ref{section:proofs_appendix:simplify}.

%%%%%%%%%%%%%%%%%%%%%%%%%%%%%%%%%%%%%%%%%%%%%%%%%%%%%%%%%%%%%%%%%%%%%%%%%%%%%%%%%%%%%%%%%%%%%%%%
%%%%%%%%%%%%%%%%%%%%%%%%%%%%%%%%%%%%%%%%%%%%%%%%%%%%%%%%%%%%%%%%%%%%%%%%%%%%%%%%%%%%%%%%%%%%%%%%

        \subsection{Completing the Proof} \label{section:proof:theorem} 
        
We are now ready to prove our main result, Theorem \ref{main_theorem}:
\begin{proof}
We first bound the difference of the simplified forms in Lemmas~\ref{simplified_revorg} and ~\ref{simplified_revaux}.
\begin{multline*}
\left| \frac{\inner{\hat{R}'_H}{\piaux}}{\inner{\hat{D}'_H}{\piaux}} - \frac{\inner{\hat{R}}{\piorg}}{\inner{\hat{D}}{\piorg}} \right| = \left| \frac{\inner{\hat{R}'_H}{\piaux} \cdot \inner{\hat{D}}{\piorg} - \inner{\hat{R}}{\piorg} \cdot \inner{\hat{D}'_H}{\piaux}}{\inner{\hat{D}'_H}{\piaux} \cdot \inner{\hat{D}}{\piorg}} \right|
\end{multline*}
Then, we use Lemma~\ref{lemma_d_t_org_aux_not_zero} to lower bound the denominators.
\begin{multline*}
\left| \frac{\inner{\hat{R}'_H}{\piaux}}{\inner{\hat{D}'_H}{\piaux}} - \frac{\inner{\hat{R}}{\piorg}}{\inner{\hat{D}}{\piorg}} \right| =\\
= \left| \frac{\inner{\hat{R}'_H}{\piaux} \cdot \inner{\hat{D}}{\piorg} - \inner{\hat{R}}{\piorg} \cdot \inner{\hat{D}'_H}{\piaux}}{\inner{\hat{D}'_H}{\piaux} \cdot \inner{\hat{D}}{\piorg}} \right| \leq\\
\leq \varepsilon^{-2} \left| \inner{\hat{R}'_H}{\piaux} \cdot \inner{\hat{D}}{\piorg} - \inner{\hat{R}}{\piorg} \cdot \inner{\hat{D}'_H}{\piaux} \right| \leq\\
\leq \varepsilon^{-2} \big| \inner{\hat{R}'_H}{\piaux} \cdot \inner{\hat{D}}{\piorg} - \inner{\hat{R}}{\piorg} \cdot \inner{\hat{D}}{\piorg} +\\
+ \inner{\hat{R}}{\piorg} \cdot \inner{\hat{D}}{\piorg} - \inner{\hat{R}}{\piorg} \cdot \inner{\hat{D}'_H}{\piaux} \big| \leq\\
\leq \varepsilon^{-2} \left| \left( \inner{\hat{R}'_H}{\piaux} - \inner{\hat{R}}{\piorg} \right) \cdot \inner{\hat{D}}{\piorg} \right| +\\
+ \left| \inner{\hat{R}}{\piorg} \cdot \left( \inner{\hat{D}}{\piorg} - \inner{\hat{D}'_H}{\piaux} \right) \right|
\end{multline*}
We now use the fact that stationary distributions sum to 1 and Assumptions~\ref{assump_r_t_bounded} and~\ref{assump_d_t_bounded} to:
\begin{multline*}
\left| \frac{\inner{\hat{R}'_H}{\piaux}}{\inner{\hat{D}'_H}{\piaux}} - \frac{\inner{\hat{R}}{\piorg}}{\inner{\hat{D}}{\piorg}} \right| \leq\\
\leq \varepsilon^{-2} \cdot \left| \inner{\hat{R}'_H}{\piaux} - \inner{\hat{R}}{\piorg} \right| \cdot \dmax + \varepsilon^{-2} \cdot \rmax \cdot \left| \inner{\hat{D}}{\piorg} - \inner{\hat{D}'_H}{\piaux} \right| \,\, . 
\end{multline*}
Then, using Lemma~\ref{lemma_expected_reward_diff_cont_equal}, 
\begin{multline*}
\left| \frac{\inner{\hat{R}'_H}{\piaux}}{\inner{\hat{D}'_H}{\piaux}} - \frac{\inner{\hat{R}}{\piorg}}{\inner{\hat{D}}{\piorg}} \right| \leq\\
\leq\varepsilon^{-2} \cdot \bigO{\frac{1}{H}} \cdot \dmax + \varepsilon^{-2} \cdot \rmax \cdot \bigO{\frac{1}{H}} \,\, . 
\end{multline*}
Overall, by using the triangle inequality for the bound we obtained above and Lemmas~\ref{simplified_revorg} and~\ref{simplified_revaux} we deduce:
\begin{align*}
|\revaux(H) - \revorg| = \bigO{\frac{1}{H}} \, \, . &\qedhere
\end{align*}
\end{proof}

%%%%%%%%%%%%%%%%%%%%%%%%%%%%%%%%%%%%%%%%%%%%%%%%%%%%%%%%%%%%%%%%%%%%%%%%%%%%%%%%%%%%%%%%%%%%%%%%
%%%%%%%%%%%%%%%%%%%%%%%%%%%%%%%%%%%%%%%%%%%%%%%%%%%%%%%%%%%%%%%%%%%%%%%%%%%%%%%%%%%%%%%%%%%%%%%%
%%%%%%%%%%%%%%%%%%%%%%%%%%%%%%%%%%%%%%%%%%%%%%%%%%%%%%%%%%%%%%%%%%%%%%%%%%%%%%%%%%%%%%%%%%%%%%%%

    \section{Performance} \label{section:performance} 

Having proven the theoretical approximation error, we proceed to evaluate the practical performance of PTO by comparing it against OSM. 
In order to do so we evaluate the running time of both methods for Bitcoin by using existing code of OSM for Bitcoin~\cite{osmgithub}.
We begin with an overview of the MDP model for Bitcoin~(\S\ref{section:performance:model}).
We validate the results of our implementation~(\S\ref{section:performance:validation}), slightly improving the known threshold for Bitcoin. 
We investigate the effects of different hyperparameters on the optimal revenue obtained by PTO~(\S\ref{section:performance:hyperparameters}).
Finally, we compare the running times of PTO and OSM and show PTO to be about~10 times faster~(\S\ref{section:performance:runningTime}). 

%%%%%%%%%%%%%%%%%%%%%%%%%%%%%%%%%%%%%%%%%%%%%%%%%%%%%%%%%%%%%%%%%%%%%%%%%%%%%%%%%%%%%%%%%%%%%%%%
%%%%%%%%%%%%%%%%%%%%%%%%%%%%%%%%%%%%%%%%%%%%%%%%%%%%%%%%%%%%%%%%%%%%%%%%%%%%%%%%%%%%%%%%%%%%%%%%

        \subsection{Bitcoin Model}\label{section:performance:model}

The ARR-MDP for bitcoin is similar to the model of Sapirshtein et al.~\cite{sapirshtein2016optimal}. We first describe 3 important parameters for the model, then describe the action space, the state space and the transitions. 

As in a general blockchain protocol, there is a rational miner who wants to maximize her revenue.
The rest of the network is represented by honest miners who follow the prescribed protocol. 

We assume the objective function is the ratio between the number of blocks of the rational miner and the total number of blocks of the entire network.
Because Bitcoin is a PoW blockchain, the \emph{relative mining power} determines the probability of the miner to mine a new block.
We denote this parameter $\alpha \in [0, \frac{1}{2})$.

The Bitcoin protocol specifies that in the case of a tie in the longest chain rule, the tie is decided in favor of the first chain the miner saw.
In case of a tie between the rational miner's chain and some other chain, we assume the rational miner's block is received first by a fraction $\gamma \in [0, 1]$ of the network.
This is called the \emph{rushing level} of the miner and determines what fraction of the honest miners will keep mining on top of the rational miner's chain.
This determines the probability that the next block will be mined on top of the miner's chain and then her chain will be chosen by all.

We assume that the miner mines on a single secret chain and that the miner will not choose to challenge blocks before the \emph{last fork}~-- the blocks following the last block common to both the miner's chain and the current public chain, as in~\cite{sapirshtein2016optimal}.
We also assume forks between honest miners never occur as in~\cite{sapirshtein2016optimal, eyalmajority, nayak2016stubborn}.

Denote by~$a$ the length of the miner's secret chain.
Denote by~$h$ the number of blocks in the public chain since the last fork.
In order to obtain a finite state space MDP, we cannot consider all possible strategies of the rational miner.
We assume there is a maximum possible length for both~$a$ and~$h$ as in~\cite{sapirshtein2016optimal}.
We call this bound the \emph{maximum fork length}.

%%%%%%%%%%%%%%%%%%%%%%%%%%%%%%%%%%%%%%%%%%%%%%%%%%%%%%%%%%%%%%%%%%%%%%%%%%%%%%%%%%%%%%%%%%%%%%%%%%%%%%%%%%%
        
        \subsubsection{Action Space}

We now describe the possible actions the miner can choose.
\begin{enumerate}
    \item \textsf{Adopt}~-- The miner chooses to abandon her private chain and accept the current public chain.
    
    \item \textsf{Override}~-- The miner reveals the first~$h + 1$ blocks from her private chain and overtakes the public chain.
    This action is possible only when~$a > h$.
    
    \item \textsf{Match}~-- The miner reveals the first~$h$ blocks of her private chain and matches the public chain.
    This action is possible only when $a = h$ and when the last block mined was by someone other than the miner.
    This symbolizes the case where the miner hears about a newly mined block and then quickly reveals a block of the same height mined in advance.
    This triggers a split in the network determined by the miner's rushing level.
    Each honest miner chooses to mine on the first chain she sees.
    
    \item \textsf{Wait}~-- The miner does not reveal blocks and keeps mining on her private chain.
\end{enumerate}

%%%%%%%%%%%%%%%%%%%%%%%%%%%%%%%%%%%%%%%%%%%%%%%%%%%%%%%%%%%%%%%%%%%%%%%%%%%%%%%%%%%%%%%%%%%%%%%%%%%%%%%%%%%

        \subsubsection{State Space}

The \textsf{match} action gives 3 cases for the current state, which need to be differentiated.

\begin{enumerate}
    \item \textsf{Irrelevant}~-- The last block was mined by the rational miner. \textsf{Match} cannot be performed because the miner just mined a block so she does not have a block prepared in advance.
    \item \textsf{Relevant}~-- The last block was mined by an honest miner. \textsf{Match} can be performed if $a \geq h$.
    \item \textsf{Active}~-- \textsf{Match} was already performed and the network is split so \textsf{match} cannot be performed again. 
\end{enumerate}
Denote by $\textit{fork}$ the state of the system from this list. 
Then, the states in the MDP are represented by a vector with 3 elements: $(a, h, \textit{fork})$.

%%%%%%%%%%%%%%%%%%%%%%%%%%%%%%%%%%%%%%%%%%%%%%%%%%%%%%%%%%%%%%%%%%%%%%%%%%%%%%%%%%%%%%%%%%%%%%%%%%%%%%%%%%%

        \subsubsection{Transitions}

If the miner chooses to \textsf{wait} when \textit{fork} is not \textsf{active}, either~$a$ increases by~1 w.p~$\alpha$ or~$h$ increases by~1 w.p~$1 - \alpha$ and \textit{fork} is updated to \textsf{irrelevant} or \textsf{relevant} respectively.

If the miner chooses to \textsf{adopt}, both $a$ and~$h$ become~0 and $D_t = h$ as the miner chooses to accept~$h$ blocks.
Otherwise, if the miner chooses to \textsf{override}, $a \gets a-h-1$, $h \gets 0$ and $R_t = D_t = h+1$ as the miner appends~$h+1$ blocks to the blockchain.

If the miner chooses to \textsf{match}, $\textit{fork} \gets\textsf{active}$ as the miner causes a fork in the network.
If the miner chooses to \textsf{wait} when $\textit{fork}$ is \textsf{active} either:
\begin{enumerate}
    \item $a$ increases by~1 w.p~$\alpha$ as the rational miner mines a new secret block,
    \item $a \gets a-h$, $h \gets 1$ and~$\textit{fork} \gets \textsf{relevant}$ w.p~$\alpha \gamma$ as an honest miner mines on top of the rational miner's chain, thus solving the fork and providing $R_t=D_t=h$, or,
    \item $h \gets h+1$ and~$\textit{fork} \gets \textsf{relevant}$ w.p~$\alpha (1-\gamma)$ as an honest miner mines on top of the public chain.
\end{enumerate}

In order to enforce the maximum fork length, for certain states we forbid actions which may lead to~$a$ or ~$h$ increasing too much.
Note that, blocks mined are counted towards the reward and the difficulty contribution if and when both the rational miner and the rest of the network accept them.

%%%%%%%%%%%%%%%%%%%%%%%%%%%%%%%%%%%%%%%%%%%%%%%%%%%%%%%%%%%%%%%%%%%%%%%%%%%%%%%%%%%%%%%%%%%%%%%%
%%%%%%%%%%%%%%%%%%%%%%%%%%%%%%%%%%%%%%%%%%%%%%%%%%%%%%%%%%%%%%%%%%%%%%%%%%%%%%%%%%%%%%%%%%%%%%%%

        \subsection{Optimal Strategies} \label{section:performance:validation} 

To validate PTO, we compare the revenue of the optimal strategy found by PTO against the state of the art OSM~\cite{sapirshtein2016optimal}. 
We compare for different $\alpha$'s and $\gamma=0$, with a maximum fork length of~95, similar to OSM. 
We choose expected horizon to be~$H=10^6$ and we use policy iteration (for SSP) with a stopping threshold of~$10^{-5}$. 
The stopping threshold is a parameter for policy iteration which specifies the desired approximation error for the algorithm~\cite{bertsekas1995dynamic}.

\begin{table}
\centering
\begin{tabular}{ S[table-format=1.3] S[table-format=1.5] S[table-format=1.5] } 
\toprule
%Power ($\alpha$) & Revenue ($\revorg$) & Revenue in \cite{sapirshtein2016optimal}\\
{Power ($\alpha$)} & {PTO Revenue ($\revorg$)} & {OSM Revenue} \\
\midrule
{1/3} & 0.33705 & 0.33705\\
\hline
0.35 & 0.37077 & 0.37077\\
\hline
0.375 & 0.42600 & 0.42600\\
\hline
0.4 & 0.48866 & 0.48866\\
\hline
0.425 & 0.56809 & 0.56808\\
\hline
0.45 & 0.66894 & 0.66891\\
\hline
0.475 & 0.80184 & 0.80172\\
\bottomrule
\end{tabular}
\caption{Comparison between PTO with policy iteration and OSM~\cite{sapirshtein2016optimal} for Bitcoin when limiting the maximum fork length to 95 and using~$\gamma = 0$.}
\label{table:bitcoin_results}
\end{table}

Table \ref{table:bitcoin_results} summarizes the results. 
As expected, PTO reproduces the results of OSM. 
Furthermore, some results for PTO are higher than the results for~OSM by more than the~$10^{-5}$ approximation error. 
This is due to a bug in the original OSM code. 
For these values, OSM stopped by mistake after hitting a hard-coded maximal number of iterations. 
We removed the redundant stopping condition before comparing the running times. 

%%%%%%%%%%%%%%%%%%%%%%%%%%%%%%%%%%%%%%%%%%%%%%%%%%%%%%%%%%%%%%%%%%%%%%%%%%%%%%%%%%%%%%%%%%%%%%%%
%%%%%%%%%%%%%%%%%%%%%%%%%%%%%%%%%%%%%%%%%%%%%%%%%%%%%%%%%%%%%%%%%%%%%%%%%%%%%%%%%%%%%%%%%%%%%%%%

        \subsection{Hyperparameters} \label{section:performance:hyperparameters} 

We empirically analyze the effect PTO's hyperparameters. 

%%%%%%%%%%%%%%%%%%%%%%%%%%%%%%%%%%%%%%%%%%%%%%%%%%%%%%%%%%%%%%%%%%%%%%%%%%%%%%%%%%%%%%%%%%%%%%%%

            \subsubsection{Expected Horizon}

We first consider the expected horizon. 
In order to do so, we experiment with Bitcoin for several values of~$\alpha$, with~$\gamma=0.5$, with a fixed maximum fork length of 50 and using policy iteration with a stopping threshold of~$10^{-5}$. 

Figure \ref{fig:expected_horizon} plots the revenue of the policy found against the horizon length~$H$.
The figure shows that as the expected horizon increases, the revenue converges to the revenue of the optimal policy.
Since different~$\alpha$'s were considered, all the revenues for each $\alpha$ were normalized by the best revenue achieved by PTO for said~$\alpha$.

The graph shows that as the expected horizon is increased, the approximate revenue converges quickly to its optimal value,
% obtained increases very fast.
as supported by Theorem \ref{main_theorem}. 
% The theorem only bounds the results asymptotically but the figure shows that results with reasonable approximation can be achieved by values of $H$ which are not too high.

Intuitively, one might expect that for higher values of $\alpha$, our method would require higher values of $H$ to converge,
since more powerful miners are more capable of creating longer forks, and thus should require more consideration towards future blocks by looking into a further expected horizon.
% \AT{Maybe give a bit more intuition why longer forks would require higher $H$?}\RBZ{I tried, remove this if it is still bad, the intuition is obviously wrong}. 
\begin{figure}
\includegraphics[width=\linewidth]{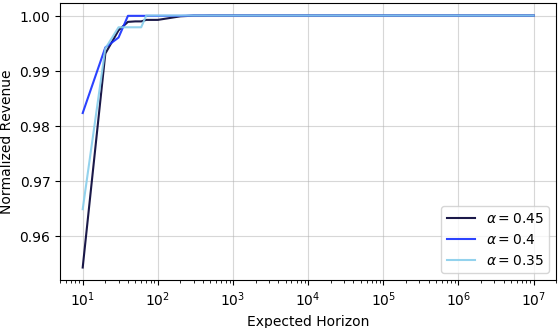}
\caption{The effect of the expected horizon~$H$. We plot results for Bitcoin with different~$\alpha$'s, $\gamma = 0.5$, and maximum fork of 50, normalized by the maximum revenue achieved for each~$\alpha$.}
\label{fig:expected_horizon}
\end{figure}

Our results in Figure \ref{fig:expected_horizon} indicate that this is not necessarily the case. 
We attribute this find to the fact that different values of~$\alpha$ have different optimal policies, and thus are affected differently by the expected horizon.

% The graphs show that it does not necessarily mean  take longer to converge as one would think  because 
% Different value of $\alpha$'s have different optimal policies and may be affected differently when considering a different expected horizon.

\paragraph{Note}
A drawback of PTO when compared to OSM is that although the approximation error in PTO is tight and decreases linearly with the expected horizon, it is not known in advance.
OSM, on the other hand, takes the approximation error as a parameter and outputs an~$\varepsilon$-optimal policy.
However, as we see in Figure \ref{fig:expected_horizon}, choosing a reasonable expected horizon such as~$10^6$ ensures a negligible approximation error.
% \AT{Can't we upper bound the error?}
% \RBZ{That's what I wrote in whatsapp, I think we can but it is not of interest this is what 6.3.1 shows, moving it there and trying to add further explanation}

%%%%%%%%%%%%%%%%%%%%%%%%%%%%%%%%%%%%%%%%%%%%%%%%%%%%%%%%%%%%%%%%%%%%%%%%%%%%%%%%%%%%%%%%%%%%%%%%

            \subsubsection{Maximum Fork Length}\label{subsection:performance:maximum_fork_length_analysis}
        
\begin{figure}
\includegraphics[width=\linewidth]{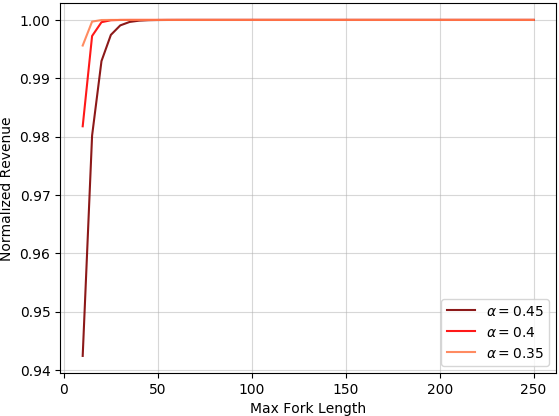}
\caption{The effect of the maximum fork length. We plot results for Bitcoin with different $\alpha$'s, $\gamma = 0.5$ and $H=10^4$, normalized by the maximum revenue achieved for each $\alpha$.}
\label{fig:max_fork}
\end{figure}
        
We next investigate the effect of the maximum fork length.
We experiment with Bitcoin for several values of~$\alpha$, with a fixed expected horizon of~$10^4$, and using policy iteration with a stopping threshold of~$10^{-5}$.
The revenues obtained are presented in Figure \ref{fig:max_fork}, normalized for different $\alpha$ as above. 
% (the revenues are normalized such that the results for each $\alpha$ are divided by the maximum revenue obtained for that $\alpha$.
As the maximum fork is increased, the agent has more possible actions in the game, and thus the revenue is non decreasing, as clearly evident in the figure.
In addition, the higher $\alpha$ is, the higher the maximum fork length needed to achieve the same approximation, as evident in the figure by the dependence of the fork length required to reach optimal performance on $\alpha$.
% can be seen in the figure since higher $\alpha$ reach $1.0$ only for higher maximum fork lengths.

%In this paper we ignore policies that allow forks longer than some threshold.
%
%However, Sapirshtein et al.~\cite{sapirshtein2016optimal} do acknowledge this and in addition to finding an approximately optimal policy, provide an upper bound to the optimal revenue for all policies, including policies that allow more than the maximum fork length modeled.
%However, their upper bound is not tight so it does not provide much information since using it for bounding the security threshold wouldn't yield accurate results.
The maximum fork length enables the model to disregard possible policies which allow forks longer than some threshold.
This ensures the state space is finite and enables us to use PTO.
Ignoring feasible policies seemingly hurts the optimality of our results.
However, we ignore longer forks because the probability of the miner to obtain such a fork declines exponentially with the fork length.
Therefore, considering a large enough maximum fork length such as~100 is more than enough to obtain a good approximation; increasing it further yields negligible improvement, as shown empirically in Figure \ref{fig:max_fork}. 

%%%%%%%%%%%%%%%%%%%%%%%%%%%%%%%%%%%%%%%%%%%%%%%%%%%%%%%%%%%%%%%%%%%%%%%%%%%%%%%%%%%%%%%%%%%%%%%%

        \subsection{Running Time} \label{section:performance:runningTime} 

We show that PTO with policy iteration is faster than OSM.
We compare the methods for different maximum allowed fork lengths in order to compare the methods for different sizes of the state space. 
This allows us to conjecture about the efficiency of generalizing the methods to other blockchains with larger state spaces (e.g., Ethereum).
In order to perform the comparison, we use the OSM code generously shared by Ren Zhang~\cite{osmgithub}.
%Since we implemented PTO in Python and OSM in MATLAB\AT{Is there a reason we did not implement OSM in python? Maybe instead state that performing a run time comparison heavily depends on the low level implementation. Instead, we compare the...}, comparing the running time only would not be fair due to different optimizations of the two environments.

Running time heavily depends on the low level implementation and platform. 
Instead, we compare the number of linear system that each method solves. However, this comparison too depends on the dynamic programming algorithm used to solve the (standard) MDPs. We found that in contrast to the relative value iteration used in the original OSM work, the policy iteration algorithm works significantly better both for OSM and PTO. 
% But PTO uses policy iteration and OSM uses relative value iteration, so the magnitude of the linear equations is different. 
% In order to overcome this difference, 
\begin{figure}
\includegraphics[width=\linewidth]{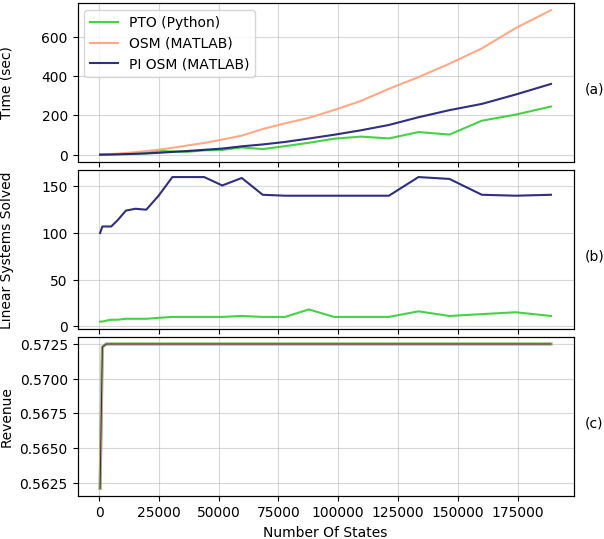}
\caption{(a): Comparison of the running time of the 3 methods for different maximum fork lengths.
The maximum fork length determines the number of states in the model.
(b): Comparison of the number of times a linear system was solved.
This enables comparison of the 2 methods even though they run on different environments.
(c): Comparison of the revenues obtained by the 3 methods.
PTO achieves similar results  to both other methods.}
\label{fig:max_fork_timings}
\end{figure}

We therefore improved OSM to use policy iteration (for the average reward criterion) instead of relative value iteration, and we call this implementation \emph{Policy-Iteration OSM} (\emph{PI-OSM}). 
Figure~\ref{fig:max_fork_timings}(a) shows PI-OSM has a shorter running time than~OSM when run in the same environment. 
Next, we compare the number of linear systems solved for PTO and for PI-OSM.
Figure \ref{fig:max_fork_timings}(b) shows that the number of linear systems solved for PTO is about one order of magnitude smaller than PI-OSM.
%This justifies the large difference in running time seen between PTO and OSM in \ref{fig:max_fork_timings} (a).

When solving MDPs there is a trade off between accuracy and running time.
% The longer the solving algorithm runs, the more accurate its result.
For a fair comparison, we made sure that PTO was at least as accurate as OSM and PI-OSM.
Figure~\ref{fig:max_fork_timings}(c) shows that the results obtained by all methods were of similar accuracy (up to about $10^{-6}$).

%%%%%%%%%%%%%%%%%%%%%%%%%%%%%%%%%%%%%%%%%%%%%%%%%%%%%%%%%%%%%%%%%%%%%%%%%%%%%%%%%%%%%%%%%%%%%%%%
%%%%%%%%%%%%%%%%%%%%%%%%%%%%%%%%%%%%%%%%%%%%%%%%%%%%%%%%%%%%%%%%%%%%%%%%%%%%%%%%%%%%%%%%%%%%%%%%
%%%%%%%%%%%%%%%%%%%%%%%%%%%%%%%%%%%%%%%%%%%%%%%%%%%%%%%%%%%%%%%%%%%%%%%%%%%%%%%%%%%%%%%%%%%%%%%%

    \section{Ethereum} \label{section:ethereum}

After evaluating the performance of PTO we now move on to find optimal strategies in Ethereum. 
We first describe our MDP model for Bitcoin~(\S\ref{section:ethereum:model}).
We then present the results of PTO for Ethereum and compare them to the approximately optimal results of SquirRL~(\S\ref{section:ethereum:results}).
Finally, we describe our derivation of the security threshold for Ethereum~(\S\ref{section:ethereum:security_threshold}). 

        \subsection{Model}\label{section:ethereum:model}

We now describe an ARR-MDP model for Ethereum~\cite{buterin2013ethereum, wood2014ethereum}.

A full implementation is available in the GitHub repository.

We describe the relevant differences from Bitcoin. 
Then, we describe the action space, the state space and the transitions of the ARR-MDP. 

The basic mechanics of the protocol are similar~-- in both systems miners generate blocks that form a graph. %, and we consider a rational miner who wants to maximize her revenue. 
%which is in the general form of a blockchain objective function.
%
As in Bitcoin, the relative mining power~$\alpha$ and the maximum fork length will play an important part. 
%We also assume~2 of the parameters described for the Bitcoin model are also important in the Ethereum model: the relative mining power~$\alpha$ and the maximum fork length.

As in Bitcoin, Ethereum's blockchain is the longest chain of blocks. 
However, ties are broken uniformly at random. 
So, the miner's rushing level does not play a role in Ethereum and we can assume a constant $\gamma=0.5$. 

Ethereum presents the concept of \emph{uncle blocks}~\cite{buterin2013ethereum}.
In order to compensate a miner of a block that ended up out of the main chain, Ethereum introduces additional rewards, as follows. 
A block in the main the chain may reference a previous block, called an uncle block, if it is a direct child of a previous block in the main chain.
In this context, the referencing block is called a \emph{nephew block}.
A block can reference up to~2 such uncle blocks.

The miner of the uncle block receives an \emph{uncle reward}~\cite{ritz2018impact}.
This reward depends on the \emph{uncle distance}, that is, the number blocks in the chain from the last fork to the nephew block.
The uncle reward starts at~$\frac{7}{8}$ of the regular block reward if the nephew is the first block since the last fork between the nephew's chain and the uncle block, we refer to this as an uncle with a distance of~1.
The reward decreases by~$\frac{1}{8}$ for any additional block between the last fork and the nephew down to~$\frac{2}{8}$ of the regular block reward.
This happens when the nephew is the~6\textsuperscript{th} block since the last fork as an uncle distance of more than 6 is not allowed. 

The miner of the nephew block receives a \emph{nephew reward} in addition to the regular block reward.
This is in contrast to receiving only the block reward in Bitcoin.
The nephew reward is equal to~$\frac{1}{32}$ of the regular block reward.

Originally, Ethereum did not count uncles for its difficulty adjustment mechanism.
However, it allowed strategies which directly exploit this to intentionally create blocks destined to become uncles~\cite{ritz2018impact}.

In order to deter miners from doing so, Ethereum was updated to take into account uncle blocks for the difficulty~\cite{ethereumdifficultychange}.
However for its code's backwards compatibility, the actual implementation specifies that in the case the block references either one or two uncles, the block and its uncles count only as~2 blocks towards the difficulty adjustment.
This allows situations in which~3 blocks in total count into the difficulty adjustment as~2 blocks.
This is instead of the change which was originally intended.

In the model, we follow the intended mechanism and assume that any uncle counts as~1 block regardless of whether~2 uncles were referenced by a single block.
This is in line with other previous works~\cite{ritz2018impact,grunspan2019selfish}. 

%As in bitcoin, we again assume the miner mines on a single secret chain, will not choose to challenge blocks before the last fork and that no forks between honest miners occur.

The Ethereum protocol specifies that honest miners reference all available uncle blocks which were not referenced before, and do so in order~-- reference further uncles first \cite{wood2014ethereum}.
Therefore, the honest miners reference all possible uncles, including blocks by the rational miner.

The rational miner however, can choose which blocks to reference as uncles. 
Since when referencing an uncle the nephew reward is dwarfed by the resulting contribution to the difficulty, we restrict the rational miner to only reference her own blocks as uncles when possible as in~\cite{hou2019squirrl}.

%%%%%%%%%%%%%%%%%%%%%%%%%%%%%%%%%%%%%%%%%%%%%%%%%%%%%%%%%%%%%%%%%%%%%%%%%%%%%%%%%%%%%%%%%%%%%%%%%%%%%%%%%%%
        
            \subsubsection{Action Space}

The Ethereum model uses the 4 actions defined in the Bitcoin model: \textsf{adopt}, \textsf{override}, \textsf{match}, \textsf{wait}.

We also introduce a new action available to the rational miner.
If~$a>0$ and~$h>1$, the miner may \textsf{reveal} the first block of her private chain to be included as an uncle unless it was already revealed previously.

Revealing additional blocks will not achieve anything as an uncle has to be direct child of a block in the main chain.
Therefore, we assume the miner never reveals more than one block except for when she chooses to \textsf{override} or \textsf{match}.

%%%%%%%%%%%%%%%%%%%%%%%%%%%%%%%%%%%%%%%%%%%%%%%%%%%%%%%%%%%%%%%%%%%%%%%%%%%%%%%%%%%%%%%%%%%%%%%%%%%%%%%%%%%

        \subsubsection{State Space}

Same as for the Bitcoin model, denote by~$a$ the length of the miner's secret chain and by~$h$ the number of blocks in the public chain since the last fork.

Because of the change to the prescribed policy in a case of a tie in the longest chain rule, the match action now gives only~2 cases for the current state \cite{sapirshtein2016optimal, hou2019squirrl}.
\begin{enumerate}
    \item \textsf{Relevant}~-- \textsf{Match} can be performed if $a \geq h$.
    \item \textsf{Active}~-- \textsf{Match} was already performed and the network is split so \textsf{match} cannot be performed again. 
\end{enumerate}
Denote $\textit{fork}$ as the current case of the system as described above.

We now move on to describe how to capture the state regarding the uncle blocks.
We distinguish between the rational miner's blocks and blocks mined by other miners.

Let~$u_h$ be a binary vector of length 6 denoting whether there are blocks by honest miners since the last fork that can be included as uncles. 
$u_h$ registers only the last 6 possible uncles since further uncles are not allowed to be included.
Each entry~$i$ in this vector denotes whether there is possible block to be included as an uncle with distance~$i$ in the first block after the last fork between the rational's miner chain and the public chain.
Note that this is a different fork than the one between the possible uncle and its nephew.

We use two variables to denote %the state of 
the rational miner's uncle state. 
Let~$r$ denote the length of the public chain since the last fork when the rational miner revealed the first block of her secret chain minus~1~($h-1$) or~0 if it is still a secret.
If the miner's first block will be included as an uncle, $r$ specifies its uncles distance. This is because the miner's first block since the last fork cannot be referenced before as it was still a secret when the first $r+1$ public blocks since the last fork have been published.

In addition, let~$u_a$ denote whether there is a revealed block by the rational miner from before the last fork which was not referenced as an uncle before.
$u_a$ will serve only to denote whether there is a potential uncle block pending from before the last fork.
Unlike $r$ or $u_h$, it does not capture its uncle distance since we count its reward and difficulty contribution immediately after the its fork is resolved.
We can do this since we know for sure that this block will be referenced as the rational miner references her own blocks and the rest of the network reference all potential uncles.

The states in the MDP are represented by a vector with 6 elements: $(a, h, \textit{fork}, r, u_a, u_h)$. 

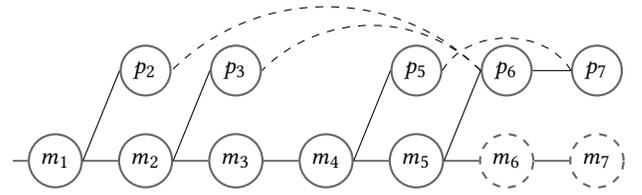
\begin{figure}
    \begin{tikzpicture}[node distance=12mm, circlenode/.style={circle,
                        draw=black!60, thick, minimum size=5mm},
                        dashedcirclenode/.style={circle,
                        draw=black!60, thick, dashed, minimum size=5mm}]
        
        \node[circlenode] (m1) {$m_1$};
        \node[circlenode] (m2) [right of=m1] {$m_2$};
        \node[circlenode] (m3) [right of=m2] {$m_3$};
        \node[circlenode] (m4) [right of=m3] {$m_4$};
        \node[circlenode] (m5) [right of=m4] {$m_5$};
        \node[dashedcirclenode] [right of=m5] (m6) {$m_6$};
        \node[dashedcirclenode] [right of=m6] (m7) {$m_7$};
        
        \node[circlenode] (p2) [above of=m2] {$p_2$};
        \node[circlenode] (p3) [above of=m3] {$p_3$};
        \node[circlenode] (p5) [above of=m5] {$p_5$};
        \node[circlenode] (p6) [above of=m6] {$p_6$};
        \node[circlenode] (p7) [above of=m7] {$p_7$};
        
        \node[left=2mm of m1.west] (start) {};
        
        \draw[-] (m1.west) to (start);
        \draw[-] (m2.west) to (m1.east);
        \draw[-] (m3.west) to (m2.east);
        \draw[-] (m4.west) to (m3.east);
        \draw[-] (m5.west) to (m4.east);
        \draw[-] (m6.west) to (m5.east);
        \draw[-] (m7.west) to (m6.east);
        
        \draw[-] (p2.west) to (m1.east);
        \draw[-] (p3.west) to (m2.east);
        \draw[-] (p5.west) to (m4.east);
        \draw[-] (p6.west) to (m5.east);
        \draw[-] (p7.west) to (p6.east);

        \draw[dashed] (p6.west) to [bend right=45] (p2.east);
        \draw[dashed] (p6.west) to [bend right=45] (p3.east);
        \draw[dashed] (p7.west) to [bend right=60] (p5.east);
    \end{tikzpicture}
    \caption{An example state for Ethereum.
    Blocks mined by honest miners are marked by~$p_i$.
    Dashed lines present uncle references and only appear for $p_6$ and $p_7$.
    Blocks mined by the rational miner are marked by~$m_i$.
    Secret blocks are marked by a dashed outline.
    }
    \label{fig:eth_example_1}
\end{figure}

Figures~\ref{fig:eth_example_1} and~\ref{fig:eth_example_2} give~2 examples of possible states.
In Figure~\ref{fig:eth_example_1},~$p_6$ and~$m_6$ are the last fork.
As there are~2 blocks in each chain since the last fork,~$a=2$ and~$h=2$.
There are 3 uncle references marked: the block~$p_6$ references uncles~$p_2$ and~$p_3$ with uncles distances of~4 and~3 respectively and the block~$p_7$ references the uncle~$p_5$ with a distance of~2.
The honest uncles vector~$u_h$ is equal to~$(1,0,1,1,0,0)$ because it captures the distance of previous uncles with respect to the last fork.
As the miner's blocks are secret, $r=0$.
If the miner chooses to \textsf{reveal} then $r$ would become~1.
Because there are no potential uncle blocks of the rational miner before the last fork,~$u_a=0$.

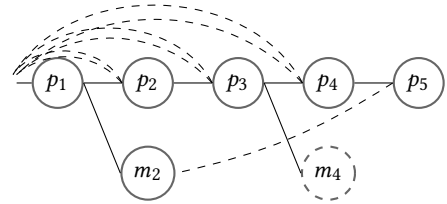
\begin{figure}
    \begin{tikzpicture}[node distance=12mm, circlenode/.style={circle,
                        draw=black!60, thick, minimum size=5mm},
                        dashedcirclenode/.style={circle,
                        draw=black!60, thick, dashed, minimum size=5mm}]
        
        \node[circlenode] (p1) {$p_1$};
        \node[circlenode] (p2) [right of=p1] {$p_2$};
        \node[circlenode] (p3) [right of=p2] {$p_3$};
        \node[circlenode] (p4) [right of=p3] {$p_4$};
        \node[circlenode] (p5) [right of=p4] {$p_5$};
        
        \node[circlenode] (m2) [below of=p2] {$m_2$};
        \node[dashedcirclenode] [below of=p4] (m4) {$m_4$};
        
        \node[left=2mm of p1.west] (start) {};
        
        \draw[-] (p1.west) to (start);
        \draw[-] (p2.west) to (p1.east);
        \draw[-] (p3.west) to (p2.east);
        \draw[-] (p4.west) to (p3.east);
        \draw[-] (p5.west) to (p4.east);
        
        \draw[-] (m2.west) to (p1.east);
        \draw[-] (m4.west) to (p3.east);
        
        \draw[dashed] (p2.west) to [bend right=45] (start);
        \draw[dashed] (p2.west) to [bend right=60] (start);
        \draw[dashed] (p3.west) to [bend right=45] (start);
        \draw[dashed] (p3.west) to [bend right=60] (start);
        \draw[dashed] (p4.west) to [bend right=45] (start);
        \draw[dashed] (p4.west) to [bend right=60] (start);
        \draw[dashed] (p5.west) to [bend left=10] (m2.east);
    \end{tikzpicture}
    \caption{An example state for Ethereum.
    This figure is with the same conventions as in Figure~\ref{fig:eth_example_1}.
    Each of the blocks~$p_2$,~$p_3$ and~$p_4$ references 2 uncles previous to block $p_1$.
    }
    \label{fig:eth_example_2}
\end{figure}

In Figure~\ref{fig:eth_example_2},~$p_4$ and~$m_4$ are the last fork, $a=1$ and~$h=2$.
The honest uncles vector~$u_h$ does not include the uncles already referenced by~$p_2$ and~$p_3$ as these uncles were already referenced before the last fork.
It includes only the uncles referenced by~$p_4$. However as the figure does not show the distances of those uncles $u_h$ is ambiguous.
As the miner's block is secret, $r=0$.
As $m_2$ is revealed, it is a pending uncle block of the rational miner which was mined before the last fork, thus~$u_a=1$.

Figure~\ref{fig:eth_example_2} also demonstrates why there can be at most~1 pending uncle block of the rational miner before the last fork.
The only case in which there could be a pending uncle block is when the miner has at least~1 revealed block, chooses to \textsf{adopt} and if all the blocks in the public chain reference previous uncle blocks and there is no room left for referencing the rational miner's block.
For there to be~2 pending uncle blocks, this would have to happen twice.
However, if~$a\leq1$, the rational miner would never choose to \textsf{adopt} unless~$h\geq2$. As if $h<2$ \textsf{match} or \textsf{override} are strictly more profitable.
So, there would have to be at least 4 blocks which all reference previous uncles in $u_h$.
But, this is not possible.
Since we assume forks between the honest miners never occur, and as each block can reference~2 uncles,~3 honest blocks are enough to reference all the 6 previous uncles of~$u_h$.
This means the rational miner's block can be referenced in the 4\textsuperscript{th} block ($p_5$).

%%%%%%%%%%%%%%%%%%%%%%%%%%%%%%%%%%%%%%%%%%%%%%%%%%%%%%%%%%%%%%%%%%%%%%%%%%%%%%%%%%%%%%%%%%%%%%%%%%%%%%%%%%%

        \subsubsection{Transitions}

The transitions in Bitcoin leading to changes in ~$a$, $h$ and~$\textit{fork}$ are similar in Ethereum.
We now describe how the transitions change~$r$,~$u_a$ and~$u_h$.

If the miner chooses to \textsf{wait} when $\textit{fork}$ is not \textsf{active}, the information stays the same.
If the miner chooses to \textsf{reveal} then $r \gets h-1$.
If the miner chooses to \textsf{match} when $\textit{fork}$ is \textsf{relevant} then also $r \gets h-1$.

If the miner chooses to \textsf{adopt}, then the public chain of length~$h$ is accepted and all possible uncles in~$u_a$ and~$u_h$ might be referenced.
If $r>0$ then the first block of the rational miner's chain can also be referenced.

The actual number of uncles referenced depends on~$h$ as each block has a maximum of~2 possible uncles to reference.
All referenced uncles are removed from $u_h$~and the remaining possible uncles in~$u_h$ are shifted back by~$h$.
Also, $u_a \gets 0$.
If $r>0$ and the miner's first block was not referenced then we mark $u_a \gets 1$ to remember this uncle still has to be referenced and then set $r \gets 0$.
The difficulty contribution is then the number of blocks~$h$ plus the number of uncle blocks referenced.

If~$r$ was more than~0, we reward the miner with a relevant uncle reward assuming it will be referenced in the first block of the next fork.
This happens even if the miner's uncle block was currently not included.
This block also counts for the difficulty contribution regardless. 

If~$u_a$ was~1 and at least~2 honest uncle blocks were included, this means that we counted the uncle reward of the pending uncle based on a shorter uncle distance.
We correct this by fining the miner by $\frac{1}{8}$.
An example of this case is if the miner chooses to \textsf{adopt} after the state illustrated in Figure~\ref{fig:eth_example_2}.

If the miner chooses to \textsf{override}, all the uncles in~$u_h$ are shifted back by~$h+1$.
Also,~$r \gets 0$ and~$u_a \gets 0$, as the miner appends~$h+1$ blocks to the blockchain.
If~$u_a$ was~1 before, we give the miner a nephew reward as well and add 1 to the difficulty contribution in addition to $h+1$ as described in the Bitcoin model.

If the miner chooses to \textsf{wait} when $\textit{fork}$ is active, and if the next mined block is by the rational miner or by an honest miner who extends the public chain no blocks become accepted by everyone so the fork is not resolved. In this case, the uncle information stays the same.
However, if the next mined block is by an honest miner who extends the miner's chain.
The fork is resolved and the public advances by~$h$ blocks.
In this case, all the uncles in~$u_h$ are shifted back by~$h$,~$r \gets 0$ and~$u_a \gets 0$.
Also in this case, if~$u_a$ was~1 before, we give the miner a nephew reward as well and add 1 to the difficulty contribution in addition to $h$ as described in the Bitcoin model.

As in Bitcoin, blocks, uncles and nephew rewards are counted towards the reward and the difficulty contribution only if and when they are agreed upon by both the rational miner and the rest of the network.
The one exception to this rule is when we count the reward and difficulty contribution in advance in the case of $u_a=1$.
This complicates the model but reduces the number of states significantly.

The model in~\cite{hou2019squirrl,squirrlgithub} captures the uncle information using both a parameter similar to $r$ and a ternary vector similar to $u_h$ in which every element registers whether there is a possible uncle of the rational miner, a possible uncle of an honest miner or no potential uncle.

Counting the rational miner's uncles in advance as in our model reduces the state space size by a factor of $\frac{3^6}{2 \cdot 2^6} \approx 5.7$.

%%%%%%%%%%%%%%%%%%%%%%%%%%%%%%%%%%%%%%%%%%%%%%%%%%%%%%%%%%%%%%%%%%%%%%%%%%%%%%%%%%%%%%%%%%%%%%%%
%%%%%%%%%%%%%%%%%%%%%%%%%%%%%%%%%%%%%%%%%%%%%%%%%%%%%%%%%%%%%%%%%%%%%%%%%%%%%%%%%%%%%%%%%%%%%%%%

\subsection{Results}\label{section:ethereum:results}

\begin{table}
\centering
\begin{tabular}{ S[table-format=1.3] S[table-format=1.6] } 
\toprule
{Power ($\alpha$)} & {Revenue ($\revorg$)}\\
\midrule
0.25 & 0.250705\\
\hline
0.275 & 0.282596\\
\hline
0.3 & 0.317798\\
\hline
0.325 & 0.359305\\
\hline
0.35 & 0.407925\\
\hline
0.375 & 0.465532\\
\hline
0.4 & 0.534359\\
\hline
0.425 & 0.618737\\
\hline
0.45 & 0.718527\\
\hline
0.475 & 0.826861\\
\bottomrule
\end{tabular}
\caption{PTO results for Ethereum when $H=10^5$ and limiting the maximum fork length to 20.}
\label{table:ethereum_results}
\end{table}

Table \ref{table:ethereum_results} shows results of PTO for Ethereum for various $\alpha$'s and maximum fork length of 20.
We chose the expected horizon to be~$H=10^5$ and used policy iteration with a stopping threshold of~$10^{-7}$.

\begin{figure}
    \includegraphics[width=\linewidth]{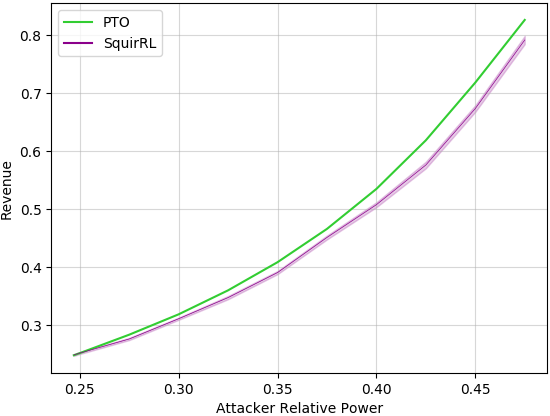}
    \caption{Comparison between the results for Ethereum of PTO and SquirRL with different $\alpha$'s and maximum fork length of 20.
    The results of SquirRL are surrounded by a confidence interval with a confidence of 0.99.}
    \label{fig:squirrl_comparison}
\end{figure}

Figure \ref{fig:squirrl_comparison} gives a comparison between the results in Table~\ref{table:ethereum_results} and SquirRL.
To do so, we used the code of SquirRL generously shared with us by the authors \cite{squirrlgithub}. 
SquirRL relies on Q-Learning with value function approximation (with neural nets) and they used Monte Carlo approximation to calculate the revenue of the policy found.
This only gives a confidence interval for the revenue of the policy and not an exact number.
This is still a good approximation as seen in the tight confidence band in the figure.

As seen in the figure, PTO outperforms SquirRL for all cases.
Thanks to Theorem \ref{main_theorem} we know that PTO converges to the optimum.

But, the difference between the results is also affected by the slight differences in the model.
Our model allows the rational miner to reveal her first block of the private chain at any time she desires in order for it to be counted as an uncle.
In addition, in our model the objective of the miner is the miner's reward divided by the total number of blocks counted towards the difficulty (blocks in the main chain and uncles) rather than the miner's relative revenue as was used by~\cite{squirrlgithub}.
SquirRL's results in the figure represent the revenue of the miner in our model using the policy obtained by SquirRL in order to get a relevant comparison.

\paragraph{Note}
Since Q-Learning with value function approximation is not guaranteed to converge to the optimum, different runs of SquirRL may give different policies.
Figure \ref{fig:squirrl_comparison} only shows the confidence interval of one policy found and might not be reproducible.

\paragraph{Note}
The state space in Ethereum is much larger than the state space in Bitcoin and since the transition matrix is of size~$|\fancyS|~\times~|\fancyS|$, it is too big to fit in the memory of a reasonably powerful server.
In order to overcome this, SquirRL used Q-Learning with value function approximation which does not require the transition matrix explicitly.
We used a different approach by using a sparse matrix.
Although the transition matrix is of a large size, the number of possible transitions from each state is relatively small and does not depend on the state size.
This means that most of the entries in the transition matrix are zeros and using a sparse matrix uses memory linear in the size of the state space instead.

%%%%%%%%%%%%%%%%%%%%%%%%%%%%%%%%%%%%%%%%%%%%%%%%%%%%%%%%%%%%%%%%%%%%%%%%%%%%%%%%%%%%%%%%%%%%%%%%

        \subsection{Security Threshold}\label{section:ethereum:security_threshold}
         
\begin{table}
\centering
\begin{tabular}{ S[table-format=1.4] S[table-format=1.6] S[table-format=1.6] } 
\toprule
{Power ($\alpha$)} & {Revenue ($\revorg$), 20} & {Revenue ($\revorg$), 25}\\
\midrule
0.24 & 0.24 & 0.24\\
\hline
0.245 & 0.245 & 0.245\\
\hline
0.246 & 0.246 & 0.246\\
\hline
0.2467 & 0.2467 & 0.2467\\
\hline
0.2468 & 0.246806 & 0.246807\\
\hline
0.2469 & 0.246928 & 0.246928\\
\hline
0.247 & 0.247050 & 0.247050\\
\bottomrule
\end{tabular}
\caption{PTO results for Ethereum when $H=10^6$ and limiting the maximum fork length to 20 or 25.}
\label{table:ethereum_results_2}
\end{table}
         
Table \ref{table:ethereum_results_2} shows higher accuracy results for Ethereum for $\alpha$'s around the security threshold with a maximum fork length of either~20 or~25.
In order to obtain more accurate results, we chose expected horizon to be~$H=10^6$ and a stopping threshold of~$10^{-8}$.

The table shows that the new threshold found with PTO is~0.2468.
This is lower than previous upper bounds found which are approximately~0.26~\cite{ritz2018impact,feng2019selfish}.

The empirical analysis in Section~\ref{subsection:performance:maximum_fork_length_analysis} suggests that for smaller values of $\alpha$ only small maximum fork lengths are necessary to obtain a good approximation.
This can also be seen in Table \ref{table:ethereum_results_2} as increasing the fork length barely changed the results.
Therefore, we consider the threshold we found to be a good approximation.

%%%%%%%%%%%%%%%%%%%%%%%%%%%%%%%%%%%%%%%%%%%%%%%%%%%%%%%%%%%%%%%%%%%%%%%%%%%%%%%%%%%%%%%%%%%%%%%%
%%%%%%%%%%%%%%%%%%%%%%%%%%%%%%%%%%%%%%%%%%%%%%%%%%%%%%%%%%%%%%%%%%%%%%%%%%%%%%%%%%%%%%%%%%%%%%%%
%%%%%%%%%%%%%%%%%%%%%%%%%%%%%%%%%%%%%%%%%%%%%%%%%%%%%%%%%%%%%%%%%%%%%%%%%%%%%%%%%%%%%%%%%%%%%%%%

    \section{Conclusion} 

We presented PTO: an efficient method to find optimal mining strategies in PoW blockchains. 
PTO forms a probabilistically terminating state machine that can be optimized directly to find the desired strategy. 
We prove PTO is correct and bound its approximation error by $O(\frac{1}{H})$.
PTO is an order of magnitude more efficient than the state of the art. 
We use it to calculate the security threshold of Ethereum and reduce it from~0.26~\cite{grunspan2019selfish,feng2019selfish} to~0.2468. 

PTO applies to any blockchain protocol that can be modeled as an~MDP with an ARR objective function. 
Due to its speed, it can be used repeatedly to find more robust reward schemes with higher security thresholds. 

%%%%%%%%%%%%%%%%%%%%%%%%%%%%%%%%%%%%%%%%%%%%%%%%%%%%%%%%%%%%%%%%%%%%%%%%%%%%%%%%%%%%%%%%%%%%%%%%
%%%%%%%%%%%%%%%%%%%%%%%%%%%%%%%%%%%%%%%%%%%%%%%%%%%%%%%%%%%%%%%%%%%%%%%%%%%%%%%%%%%%%%%%%%%%%%%%
%%%%%%%%%%%%%%%%%%%%%%%%%%%%%%%%%%%%%%%%%%%%%%%%%%%%%%%%%%%%%%%%%%%%%%%%%%%%%%%%%%%%%%%%%%%%%%%%
    
    \begin{acks}
    
We thank Ren Zhang for sharing his code of OSM \cite{osmgithub} and the authors of SquirRL~\cite{squirrlgithub} for sharing their code.

This research was partially supported by the Israel Science Foundation (grants No.\ 1641/18 and No.\ 759/19) and the Open Philanthropy Project Fund, an advised fund of Silicon Valley Community Foundation.

\end{acks}

%%
%% The next two lines define the bibliography style to be used, and
%% the bibliography file.
\bibliographystyle{ACM-Reference-Format}
\bibliography{refs}

%%
%% If your work has an appendix, this is the place to put it.

%%%%%%%%%%%%%%%%%%%%%%%%%%%%%%%%%%%%%%%%%%%%%%%%%%%%%%%%%%%%%%%%%%%%%%%%%%%%%%%%%%%%%%%%%%%%%%%%
%%%%%%%%%%%%%%%%%%%%%%%%%%%%%%%%%%%%%%%%%%%%%%%%%%%%%%%%%%%%%%%%%%%%%%%%%%%%%%%%%%%%%%%%%%%%%%%%
%%%%%%%%%%%%%%%%%%%%%%%%%%%%%%%%%%%%%%%%%%%%%%%%%%%%%%%%%%%%%%%%%%%%%%%%%%%%%%%%%%%%%%%%%%%%%%%%
%%%%%%%%%%%%%%%%%%%%%%%%%%%%%%%%%%%%%%%%%%%%%%%%%%%%%%%%%%%%%%%%%%%%%%%%%%%%%%%%%%%%%%%%%%%%%%%%
%%%%%%%%%%%%%%%%%%%%%%%%%%%%%%%%%%%%%%%%%%%%%%%%%%%%%%%%%%%%%%%%%%%%%%%%%%%%%%%%%%%%%%%%%%%%%%%%
%%%%%%%%%%%%%%%%%%%%%%%%%%%%%%%%%%%%%%%%%%%%%%%%%%%%%%%%%%%%%%%%%%%%%%%%%%%%%%%%%%%%%%%%%%%%%%%%
%%%%%%%%%%%%%%%%%%%%%%%%%%%%%%%%%%%%%%%%%%%%%%%%%%%%%%%%%%%%%%%%%%%%%%%%%%%%%%%%%%%%%%%%%%%%%%%%
%%%%%%%%%%%%%%%%%%%%%%%%%%%%%%%%%%%%%%%%%%%%%%%%%%%%%%%%%%%%%%%%%%%%%%%%%%%%%%%%%%%%%%%%%%%%%%%%
%%%%%%%%%%%%%%%%%%%%%%%%%%%%%%%%%%%%%%%%%%%%%%%%%%%%%%%%%%%%%%%%%%%%%%%%%%%%%%%%%%%%%%%%%%%%%%%%

\appendix
        \section{Proofs}\label{section:proofs_appendix}
In this appendix we restate and prove all the deferred proofs from section~\ref{section:proof}.
        
%%%%%%%%%%%%%%%%%%%%%%%%%%%%%%%%%%%%%%%%%%%%%%%%%%%%%%%%%%%
        \subsection{Expected Horizon lemma}\label{section:proofs_appendix:horizon}

The following lemma is a restatement of Lemma~\ref{diff_cont_lemma}.
\begin{lemma}\label{restated_diff_cont_lemma}
The expected total contribution to the difficulty when $\auxmdp$ terminates is approximately equal to~$H$.
Formally, it holds that:
\begin{displaymath}
H - \dmax - 1 \leq \E{\sum\limits_{t=1}^{\termaux(H)} D_t} \leq H + \dmax \, \, .
\end{displaymath}
\end{lemma}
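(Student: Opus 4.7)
The plan is to couple the per-step termination mechanism of $\auxmdp$ to a single independent continuous ``budget'' random variable, reducing the lemma to a standard first-passage overshoot bound.

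Concretely, fix the policy $\pi$ and first sample the entire trajectory of the policy-induced original Markov chain; this determines the sequence $D_1, D_2, \ldots$ and its partial sums $S_t \triangleq \sum_{s=1}^t D_s$. Independently of this trajectory, draw $V$ exponential with rate $\lambda = -\log(1 - 1/H)$, so that $\Pr{V > x} = (1 - 1/H)^x$ for $x \geq 0$. Define the coupled termination time $\tau \triangleq \min\{t : S_t \geq V\}$. Then
\begin{displaymath}
\Pr{\tau > t \mid D_1,\ldots,D_t} = \Pr{V > S_t \mid D_1,\ldots,D_t} = (1 - 1/H)^{S_t} = \prod_{s=1}^t (1 - 1/H)^{D_s},
\end{displaymath}
which matches the joint distribution of $\termaux(H)$ and the $D$-sequence in $\auxmdp$. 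Hence under the coupling $\tau$ and $\termaux(H)$ have the same joint law with $(D_t)$, and it suffices to bound $\E{S_\tau}$.

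By construction $S_{\tau-1} < V \leq S_\tau$, and combining with $S_\tau = S_{\tau-1} + D_\tau$ and Assumption \ref{assump_d_t_bounded} gives the almost-sure sandwich $V \leq S_\tau < V + D_\tau \leq V + \dmax$. Taking expectations,
\begin{displaymath}
\E{V} \leq \E{S_\tau} \leq \E{V} + \dmax.
\end{displaymath}
Since $V \sim \mathrm{Exp}(\lambda)$ one has $\E{V} = 1/\lambda = 1/(-\log(1 - 1/H))$, and the elementary inequalities $x \leq -\log(1 - x) \leq x/(1 - x)$ for $0 < x < 1$ at $x = 1/H$ yield $H - 1 \leq \E{V} \leq H$. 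Substituting gives $H - 1 \leq \E{S_{\termaux(H)}} \leq H + \dmax$, which is strictly stronger than (and hence implies) the claimed bound.

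The main delicate step is justifying the coupling: one must observe that, by construction of $\auxmdp$, the termination events are independent of the underlying transitions of $\orgmdp$ given the $D_t$'s, so the same $D$-sequence supports either the per-step Bernoulli mechanism or the single-exponential-threshold mechanism with identical laws for the termination time. The finiteness of $\E{\tau}$ needed to exchange stopping and expectation follows from Assumption \ref{assump_d_t_not_zero}, which ensures $S_t \to \infty$ a.s.\ at positive rate so that $\tau < \infty$ with finite mean. Everything after the coupling is routine calculus on the exponential mean.
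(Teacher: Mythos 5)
Your proof is correct, and it takes a genuinely different route from the paper's. The paper argues through tail probabilities: it first shows $\Pr{T < \termaux(H) \mid \sum_{t\leq T} D_t = z} = (1-\frac{1}{H})^z$, then sandwiches $\Pr{\sum_{t=1}^{\termaux(H)-1} D_t \geq y}$ between $(1-\frac{1}{H})^{y+\dmax}$ and $(1-\frac{1}{H})^{y}$, integrates the tail over $y$ to bound $\E{\sum_{t=1}^{\termaux(H)-1} D_t}$ between $H-\dmax-1$ and $H$, and finally adds/subtracts $\dmax$ to pass to $\termaux(H)$. You instead realize the per-step coin tosses as a single independent exponential threshold $V$ with $\Pr{V > x} = (1-\frac{1}{H})^x$; memorylessness of $V$ makes the hazard at step $t$ equal $(1-\frac{1}{H})^{D_t}$ given survival, so the joint law of the termination time with the $D$-sequence matches $\auxmdp$, and the lemma collapses to the pathwise overshoot sandwich $V \leq S_{\termaux(H)} < V + \dmax$ together with the exact computation $\E{V} = -1/\log(1-\frac{1}{H}) \in [H-1, H]$. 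This buys a shorter argument with no interchange of infinite sum and expectation and no explicit integration, and in fact a slightly sharper lower bound ($H-1$ rather than $H-\dmax-1$), which implies the stated inequality. Both arguments rely on $\termaux(H) < \infty$ almost surely, which you, like the paper, obtain from Assumption~\ref{assump_d_t_not_zero} (via ergodicity of the induced chain); note that you do not actually need $\E{\termaux(H)} < \infty$ to "exchange stopping and expectation"~--- since you never invoke a Wald-type identity, the almost-sure sandwich plus finiteness of $\termaux(H)$ already justifies taking expectations.
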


\begin{proof}
We start by developing the expression of the probability the process continues after ~$T \in \N$ steps conditioned on the difficulty contribution accumulated until $T$ is $z \in \R^+$:
\begin{multline}\label{prob_t_less_than_term}
\Pr{T < \termaux(H) \middle\vert \sum\limits_{t=1}^T D_t=z} =\\
= \Pr{X_1 = X_2 = ... = X_T = 0 \middle\vert \sum\limits_{t=1}^T D_t=z} =\\
= \E{\Pr{X_1 = ... = X_T = 0 \middle\vert D_1, D_2, ..., D_{t-1}, D_t = z - \sum\limits_{t=1}^{T-1} D_t}} =\\
= \mathbb{E} \left[ \Pr{X_1 = 0 \middle\vert D_1} \cdot \Pr{X_2 = 0 \middle\vert D_2} \cdot ... \right. \\
\left. ... \cdot \Pr{x_{T-1} = 0\middle\vert D_{T-1}} \cdot \Pr{x_T=0 \middle\vert D_t = z - \sum\limits_{t=1}^{T-1} D_t} \right] =\\
= \E{\left( 1 - \frac{1}{H} \right)^{D_1} \cdot ... \cdot \left( 1 - \frac{1}{H} \right)^{D_{T-1}} \cdot \left( 1 - \frac{1}{H} \right)^{z - \sum\limits_{t=1}^{T-1} D_t}} =\\
=\E{\left( 1 - \frac{1}{H} \right)^z} = \left( 1 - \frac{1}{H} \right)^z \, \, .
\end{multline}
As for any~$t$,~$D_t\geq0$, the difficulty contribution up until the last step before termination is more than some $y \in \R^+$:
\begin{displaymath}
\sum\limits_{T=1}^{\termaux(H)-1} D_t \geq y
\end{displaymath}
happens iff there is some $T < \termaux(H)$ for which immediately after step $T$ the difficulty contribution surpasses~$y$:
\begin{displaymath}
\exists_{T \in \N}: \sum\limits_{t=1}^{T-1} D_t < y \leq \sum\limits_{t=1}^T D_t \wedge T <\termaux(H) \, \, .
\end{displaymath}
This gives:
\begin{multline*}
\Pr{\sum\limits_{T=1}^{\termaux(H)-1} D_t \geq y} =\\
= \Pr{\exists_{T \in \N}: \sum\limits_{t=1}^{T-1} D_t < y \leq \sum\limits_{t=1}^T D_t \wedge T <\termaux(H)}  \, \, .
\end{multline*}
By summing over all options of $T \in \N$ and then using the law of total expectation, we get:
\begin{multline*}
\Pr{\exists_{T \in \N}: \sum\limits_{t=1}^{T-1} D_t < y \leq \sum\limits_{t=1}^T D_t \wedge T <\termaux(H)} =\\
= \sum\limits_{T=1}^\infty\Pr{\sum\limits_{t=1}^{T-1} D_t < y \leq \sum\limits_{t=1}^T D_t \wedge T <\termaux(H)} =\\
= \sum\limits_{T=1}^\infty \E{\Pr{\sum\limits_{t=1}^{T-1} D_t < y \leq \sum\limits_{t=1}^T D_t \wedge T <\termaux(H) \middle\vert \sum\limits_{t=1}^T D_t = z}} \, \, .
\end{multline*}
Once $\sum\limits_{t=1}^T D_t = z$ is determined, $\termaux(H)$ only depends on $z$ and is independent of $\sum\limits_{t=1}^{T-1} D_t$.
Therefore:
\begin{multline*}
\Pr{\sum\limits_{T=1}^{\termaux(H)-1} D_t \geq y} =\\
=\sum\limits_{T=1}^\infty \E{\Pr{\sum\limits_{t=1}^{T-1} D_t < y \leq \sum\limits_{t=1}^T D_t \wedge T <\termaux(H) \middle\vert \sum\limits_{t=1}^T D_t = z}}=\\
= \sum\limits_{T=1}^\infty \mathbb{E} \left[ \Pr{\sum\limits_{t=1}^{T-1} D_t < y \leq \sum\limits_{t=1}^T D_t \middle\vert \sum\limits_{t=1}^T D_t = z} \cdot \right.\\
\left. \cdot \Pr{T <\termaux(H) \middle\vert \sum\limits_{t=1}^T D_t = z} \right] \, \, .
\end{multline*}
Now by substituting \eqref{prob_t_less_than_term} into this, we get:
\begin{multline}\label{expected_over_y_summed_by_t}
\Pr{\sum\limits_{T=1}^{\termaux(H)-1} D_t \geq y} =\\
= \sum\limits_{T=1}^\infty \E{\Pr{\sum\limits_{t=1}^{T-1} D_t < y \leq \sum\limits_{t=1}^T D_t \middle\vert \sum\limits_{t=1}^T D_t = z}  \cdot \left( 1 - \frac{1}{H} \right)^z} \, \, .
\end{multline}
Now, thanks to Assumption~\ref{assump_d_t_bounded}, we get:
\begin{displaymath}
\sum\limits_{t=1}^{T-1} D_t < y \leq z = \sum\limits_{t=1}^T D_t = \sum\limits_{t=1}^{T-1} D_t + \dmax < y + \dmax \, \, ,
\end{displaymath}
\begin{displaymath}
y \leq z < y + \dmax \, \, ,
\end{displaymath}
and:
\begin{equation}\label{bound_prob_end_at_z}
\left( 1 - \frac{1}{H} \right)^{y + \dmax} < \left( 1 - \frac{1}{H} \right)^z \leq \left( 1 - \frac{1}{H} \right)^y\, \, .
\end{equation}
Putting this back in \eqref{expected_over_y_summed_by_t}:
\begin{multline*}
\sum\limits_{T=1}^\infty \E{\Pr{\sum\limits_{t=1}^{T-1} D_t < y \leq \sum\limits_{t=1}^T D_t \middle\vert \sum\limits_{t=1}^T D_t = z} \cdot \left( 1 - \frac{1}{H} \right)^z} >\\
> \sum\limits_{T=1}^\infty \E{\Pr{\sum\limits_{t=1}^{T-1} D_t < y \leq \sum\limits_{t=1}^T D_t \middle\vert \sum\limits_{t=1}^T D_t = z} \cdot \left( 1 - \frac{1}{H} \right)^{y + \dmax}} =\\
= \sum\limits_{T=1}^\infty \E{\Pr{\sum\limits_{t=1}^{T-1} D_t < y \leq \sum\limits_{t=1}^T D_t \middle\vert \sum\limits_{t=1}^T D_t = z}} \cdot \left( 1 - \frac{1}{H} \right)^{y + \dmax} =\\
= \sum\limits_{T=1}^\infty \Pr{\sum\limits_{t=1}^{T-1} D_t < y \leq \sum\limits_{t=1}^T D_t} \cdot \left( 1 - \frac{1}{H} \right)^{y + \dmax} =\\
= \underbrace{\Pr{\exists_{T \in \N}: \sum\limits_{t=1}^T D_t \geq y}}_1 \cdot \left( 1 - \frac{1}{H} \right)^{y + \dmax} = \left( 1 - \frac{1}{H} \right)^{y + \dmax}
\end{multline*}
The probability is equal to 1 since if the game goes on forever the difficulty contributions will accumulate enough to pass $y$ thanks to assumption \ref{assump_d_t_not_zero}.
Overall, by using all the equations and inequalities above:
\begin{equation}\label{exp_horizon_lower_bound_prob}
\left( 1 - \frac{1}{H} \right)^{y + \dmax} < \Pr{\sum\limits_{T=1}^{\termaux(H)-1} D_t \geq y} \, \, .
\end{equation}
By using the upper bound from \eqref{bound_prob_end_at_z} similarly, it holds that:
\begin{equation}\label{exp_horizon_upper_bound_prob}
\Pr{\sum\limits_{T=1}^{\termaux(H)-1} D_t \geq y} \leq \left( 1 - \frac{1}{H} \right)^y \, \, .
\end{equation}
Now, calculating the expected difficulty contribution until $\termaux(H)-1$ and bounding from above using~\eqref{exp_horizon_upper_bound_prob}, we get:
\begin{multline*}
\E{\sum\limits_{T=1}^{\termaux(H)-1} D_t} = \int\limits_0^\infty \Pr{\sum\limits_{T=1}^{\termaux(H)-1} D_t \geq y} dy \leq\\
\leq \int\limits_0^\infty \left( 1 - \frac{1}{H} \right)^y dy = \left[ \frac{1}{\ln \left( 1 - \frac{1}{H} \right)} \left( 1 - \frac{1}{H} \right)^y dy \right]_0^\infty =
\end{multline*}
\begin{equation}\label{upper_bound_sum_term_minus_1}
= -\frac{1}{\ln \left( 1 - \frac{1}{H} \right)} = \frac{H}{-H \cdot \ln \left( 1 - \frac{1}{H} \right)} = \frac{H}{\ln \left( \left( 1 - \frac{1}{H}\right)^{-H} \right)} \leq\\
\leq\frac{H}{\ln \mathrm{e}} = H \, \, .
\end{equation}
And bounding from below using~\eqref{exp_horizon_lower_bound_prob} as well, we get:
\begin{multline}\label{lower_bound_sum_term_minus_1}
\E{\sum\limits_{T=1}^{\termaux(H)-1} D_t} = \int\limits_0^\infty \Pr{\sum\limits_{T=1}^{\termaux(H)-1} D_t \geq y} dy \geq\\
\geq \int\limits_0^\infty \left( 1 - \frac{1}{H} \right)^{y + \dmax} dy\\
= \left( 1 - \frac{1}{H} \right)^\dmax \cdot \left[ \frac{1}{\ln \left( 1 - \frac{1}{H} \right)} \left( 1 - \frac{1}{H} \right)^y dy \right]_0^\infty =\\
= -\frac{\left( 1 - \frac{1}{H} \right)^\dmax}{\ln \left( 1 - \frac{1}{H} \right)} = \frac{(H - 1) \left( 1 - \frac{1}{H} \right)^\dmax}{-(H - 1) \cdot \ln \left( 1 - \frac{1}{H} \right)} \geq \\
\geq \frac{(H - 1) \left( 1 - \frac{\dmax}{H} \right)}{\ln \left( \left( 1 - \frac{1}{H}\right)^{-H+1} \right)} \geq \frac{H - 1 - \dmax + \frac{\dmax}{H}}{\ln \mathrm{e}} \geq H - \dmax - 1 \, \, .
\end{multline}
Now, trivially following by Assumption~\ref{assump_d_t_bounded}, notice that:
\begin{displaymath}
\sum\limits_{T=1}^{\termaux(H)-1} D_t \leq \sum\limits_{T=1}^{\termaux(H)} D_t < \sum\limits_{T=1}^{\termaux(H)-1} D_t + \dmax \, \, .
\end{displaymath}
We now use the equation above with the bounds previously found \eqref{upper_bound_sum_term_minus_1} and \eqref{lower_bound_sum_term_minus_1} to bound the expectation from below:
\begin{displaymath}
\E{\sum\limits_{T=1}^{\termaux(H)} D_t} \geq \E{\sum\limits_{T=1}^{\termaux(H)-1} D_t} \geq H - \dmax - 1 \, \, ,
\end{displaymath}
and above:
\begin{displaymath}
\E{\sum\limits_{T=1}^{\termaux(H)} D_t} \leq \E{\sum\limits_{T=1}^{\termaux(H)-1} D_t} + \dmax \leq H + \dmax \, \, .
\end{displaymath}
Combining the previous two inequalities we get:
\begin{equation*}
 H - \dmax - 1 \leq \E{\sum\limits_{T=1}^{\termaux(H)} D_t} \leq H + \dmax \, \, . \qedhere
\end{equation*} 
\end{proof}

%%%%%%%%%%%%%%%%%%%%%%%%%%%%%%%%%%%%%%%%%%%%%%%%%%%%%%%%%%%%%%%%%%%%%%%%%%%%%%%%%%%%%%%%%%%%%%%%%%%%%%
        \subsection{Bounding the Average Difference}\label{section:proofs_appendix:auxiliary_lemmas}
        
        %%%%%%%%%%%%%%%%%%%%%%%%%%%%%%%%%%%%%%%%%%%%%%%%%%%%%%%%%%%%%%%%%%%%%%%%%%
We first state some important classical results for Markov chains~\cite{serfozo2009basics,funderlic1986sensitivity}.

\begin{lemma}\label{thm54}
\cite{serfozo2009basics}
An irreducible Markov Chain has a positive distribution if and only all of its states are positive recurrent.
In that case, the stationary distribution is unique and has the following form:
\begin{displaymath}
\mu_i = \frac{1}{\E{\tau_i}} \, \, .
\end{displaymath}
\end{lemma}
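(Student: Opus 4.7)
The plan is to prove the two implications separately, with uniqueness and the explicit formula $\mu_i = 1/\E{\tau_i}$ emerging naturally from the arguments.

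For the direction \emph{positive recurrence implies existence of a positive stationary distribution}, I would use the classical excursion construction. Fix a reference state $k \in \fancyS$ and define the measure $\nu_j \triangleq \E{\sum_{t=1}^{\tau_k} \mathbf{1}\{Y_t = j\}}$, the expected number of visits to $j$ during one return excursion starting from $k$. The key steps, in order, are: (i) $\nu_k = 1$ by definition of $\tau_k$; (ii) $\sum_{j \in \fancyS} \nu_j = \E{\tau_k}$, which is finite precisely because $k$ is positive recurrent; (iii) $\nu$ is $P$-invariant, i.e.\ $\nu P = \nu$, by decomposing the excursion one step at a time and applying the strong Markov property at the time of each visit; (iv) normalizing $\mu \triangleq \nu / \E{\tau_k}$ yields a stationary probability distribution with $\mu_k = 1/\E{\tau_k}$. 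Positivity of $\mu_j$ for every $j$ then follows from irreducibility: on its excursion the chain reaches $j$ with strictly positive probability before returning to $k$.

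For the converse direction, let $\mu$ be any stationary probability measure. I would first rule out transience: if some state were transient then by Lemma~\ref{commumication_class_property_lemma} all states are transient, whence $P^n(i,j) \to 0$ for every $i,j$; iterating stationarity $\mu = \mu P^n$ together with dominated convergence forces $\mu_j = 0$ for every $j$, contradicting $\sum_j \mu_j = 1$. Hence the chain is recurrent. To upgrade recurrence to positive recurrence, I would invoke the pointwise ergodic theorem for recurrent Markov chains, which gives $\frac{1}{n}\sum_{t=1}^n \mathbf{1}\{Y_t = j\} \to 1/\E{\tau_j}$ almost surely, with the right-hand side vanishing in the null-recurrent case. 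Taking expectations under $\mu$ and using bounded convergence yields $\mu_j = 1/\E{\tau_j}$ for every $j$. Since $\mu$ is a probability measure, not all $\mu_j$ can vanish, so some $\E{\tau_j}$ is finite and Lemma~\ref{commumication_class_property_lemma} extends positive recurrence to all states.

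The identity $\mu_j = 1/\E{\tau_j}$ obtained in the converse direction simultaneously proves uniqueness and pins down the advertised explicit form, completing the proof. The main technical obstacle I anticipate is step (iii) above: cleanly justifying $\nu P = \nu$ requires the strong Markov property at the first hitting time of each state within an excursion and a careful Fubini-type interchange of summation and expectation, which is the standard sticking point in textbook treatments of this result.
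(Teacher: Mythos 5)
The paper does not prove this lemma at all: it is imported verbatim from Serfozo's textbook as a classical result (it appears both in the preliminaries and, as Lemma~\ref{thm54}, in the appendix, each time only with the citation), so there is no in-paper argument to compare yours against. Your sketch is the standard textbook proof of exactly this statement and is sound: the occupation-measure (excursion) construction gives existence and positivity of a stationary distribution under positive recurrence, and the converse via ruling out transience ($\mu = \mu P^n$ with $P^n(i,j)\to 0$) followed by the ratio-limit/SLLN identity $\frac{1}{n}\sum_{t=1}^n \mathbf{1}\{Y_t=j\}\to 1/\E{\tau_j}$ and bounded convergence simultaneously forces positive recurrence, uniqueness, and the explicit formula. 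The only thing to tidy is bookkeeping in step (i): with the excursion sum written as $\sum_{t=1}^{\tau_k}\mathbf{1}\{Y_t=j\}$ you count the visit to $k$ at both endpoints, giving $\nu_k=2$ rather than $1$; sum over $t=1,\dots,\tau_k-1$ (equivalently $t=2,\dots,\tau_k$) so that $\nu_k=1$ and the total mass $\sum_j\nu_j$ equals the expected return time measured in steps. Note also that under the paper's convention $\tau_i=\min\{t\ge 2: Y_1=Y_t=i\}$ (indexing from $t=1$) the return time in steps is $\tau_i-1$, so you should fix one convention at the outset and carry it consistently through both the normalization $\mu=\nu/\E{\tau_k}$ and the final identity $\mu_j=1/\E{\tau_j}$; this off-by-one is cosmetic and is already latent in the paper's own phrasing, but a careful write-up should resolve it explicitly. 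The Fubini/strong-Markov interchange you flag in step (iii) is indeed the only technical point requiring care, and it goes through exactly as in the standard treatments.
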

        
The following lemma gives a way to calculate the expected cumulative sum of a random variable which depends on the current state of the chain until some chosen state is entered.
\begin{lemma}\label{prop69}
\cite{serfozo2009basics}
Let~$Y_n$ be an irreducible positive recurrent Markov chain with stationary distribution $\mu$.
Suppose~$V_n$,~$n \geq 1$, are real-valued random variables associated with the chain such that\\
\begin{displaymath}
\E{V_n|Y_1,Y_2,...,Y_n} = a_{Y_n}, \ \ \ n \geq 1 \, \, ,
\end{displaymath}
where $a_j$ are constants.
Then, for the hitting time $\tau_i$ of a fixed state $i$, it holds that
\begin{displaymath}
\E{\sum\limits_{n=1}^{\tau_i - 1} V_n \middle\vert T_1 = i} = \frac{1}{\mu_i} \sum\limits_{j \in \fancyS} a_j \mu_j \, \, ,
\end{displaymath}
provided the last sum is absolutely convergent.
\end{lemma}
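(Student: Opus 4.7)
The plan is to reduce the lemma to a classical cycle/excursion identity for irreducible positive recurrent chains, and then invoke Lemma \ref{thm54}. First I would peel off the auxiliary variables $V_n$ via the tower property of conditional expectation. Since $\tau_i$ is a stopping time, $\{n < \tau_i\} \in \sigma(Y_1,\ldots,Y_n)$, and the hypothesis $\E{V_n \mid Y_1,\ldots,Y_n} = a_{Y_n}$ then yields $\E{V_n \mathbf{1}\{n<\tau_i\}} = \E{a_{Y_n} \mathbf{1}\{n<\tau_i\}}$. Summing over $n$ and conditioning on $Y_1 = i$, the lemma reduces to showing
\[
\E{\sum_{n=1}^{\tau_i - 1} a_{Y_n} \,\middle\vert\, Y_1 = i} \;=\; \frac{1}{\mu_i} \sum_{j \in \fancyS} a_j\, \mu_j.
\]

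Next I would convert the path-sum into a visit-count. Let $N_j \triangleq \sum_{n=1}^{\tau_i - 1} \mathbf{1}\{Y_n = j\}$ be the number of visits to state $j$ during a single excursion from $i$ back to $i$. Swapping summation and expectation, which is justified by the absolute convergence assumed in the lemma, gives
\[
\E{\sum_{n=1}^{\tau_i - 1} a_{Y_n} \,\middle\vert\, Y_1 = i} \;=\; \sum_{j \in \fancyS} a_j \, \E{N_j \,\middle\vert\, Y_1 = i}.
\]
Matching coefficients term-by-term, the lemma further reduces to the cycle identity $\E{N_j \mid Y_1 = i} = \mu_j / \mu_i$ for every $j \in \fancyS$.

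The main obstacle, and the classical core of the argument, is establishing this cycle identity. My plan is the regenerative argument: starting at $i$, the chain decomposes into i.i.d.\ excursions of lengths $\tau_i^{(1)}, \tau_i^{(2)}, \ldots$, and the per-excursion visit counts $N_j^{(1)}, N_j^{(2)}, \ldots$ to any other state $j$ are likewise i.i.d. Applying the strong law of large numbers to both sequences, and using the ergodic theorem that identifies the long-run fraction of time spent at $j$ with $\mu_j$, one obtains
\[
\mu_j \;=\; \lim_{m \to \infty} \frac{\sum_{k=1}^m N_j^{(k)}}{\sum_{k=1}^m \tau_i^{(k)}} \;=\; \frac{\E{N_j \mid Y_1 = i}}{\E{\tau_i \mid Y_1 = i}} \;=\; \mu_i \cdot \E{N_j \mid Y_1 = i},
\]
where the final equality uses Lemma \ref{thm54}. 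Rearranging gives $\E{N_j \mid Y_1 = i} = \mu_j/\mu_i$, and substituting back into the state-visit decomposition above closes the proof. The delicate point to justify carefully is the interchange of the limit with the SLLN ratio, which relies on positive recurrence ensuring both numerator and denominator converge to finite, strictly positive limits.
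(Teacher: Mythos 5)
The paper never proves this lemma---it is quoted as a classical result from Serfozo \cite{serfozo2009basics}---so there is no internal proof to compare against; judged on its own, your argument is essentially correct, but it takes a different route from the textbook development. In Serfozo, the stationary distribution of an irreducible positive recurrent chain is constructed as the normalized expected occupation measure of an $i$-cycle, so the identity $\E{N_j \mid Y_1=i} = \mu_j/\mu_i$ is available essentially by construction, and the proposition reduces to your first two steps (tower property over the stopping-time events $\{n<\tau_i\}$, then Fubini over states). You instead recover the cycle identity from the strong Markov property, the SLLN for i.i.d.\ excursions, and the time-average ergodic theorem; this makes the argument independent of how $\mu$ was constructed, but note a mild ordering concern: in Serfozo's development the ergodic theorem you invoke to identify the limit with $\mu_j$ is itself proved \emph{from} this proposition, so you must take it as established by an independent argument (e.g.\ renewal--reward) to avoid circularity. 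Two pieces of bookkeeping should be made explicit. First, under the paper's indexing a cycle occupies times $1,\dots,\tau_i-1$, so the elapsed time after $m$ excursions is $\sum_{k=1}^m (\tau_i^{(k)}-1)$ rather than $\sum_{k=1}^m \tau_i^{(k)}$; this off-by-one happens to cancel against Lemma~\ref{thm54} as the paper states it ($\mu_i = 1/\E{\tau_i}$), but the paper's conventions for $\tau_i$ in Lemmas~\ref{thm54} and~\ref{prop69} are not mutually consistent, so fix one convention and carry it through. Second, the interchange of $\sum_n$ with expectation requires $\E{\sum_n \abs{V_n}\mathbf{1}\{n<\tau_i\}}<\infty$, which the hypothesis on $\E{V_n \mid Y_1,\dots,Y_n}$ alone does not supply; the clean fix is to run the argument first for nonnegative $V_n$ via Tonelli and then use the assumed absolute convergence of $\sum_j a_j\mu_j$ (which, once the cycle identity is known, also licenses your swap of the $j$-sum). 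These are technicalities to spell out, not gaps in the idea.
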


\begin{lemma}\label{cor79}
\cite{serfozo2009basics}
For a fixed integer $\ell$, the process $\tilde{Y}_n = (Y_n,...,Y_{n+\ell})$ is an ergodic Markov chain on $\fancyS^{\ell + 1}$ with stationary distribution
\begin{displaymath}
\mu(i) = \mu_{i_0} p_{i_0,i_1} \cdot \cdot \cdot p_{i_{\ell - 1}, i_\ell} \, \, .
\end{displaymath}
Hence, for $f:\fancyS^{\ell + 1} \to \R$,
\begin{displaymath}
\lim\limits_{n \to \infty} n^{-1} \sum\limits_{m=1}^n f(\tilde{Y}_m) = \sum\limits_{i \in \fancyS^{\ell + 1}} f(i) \mu(i) \ \ \ \text{a.s.},
\end{displaymath}
provided the sum is absolutely convergent.
\end{lemma}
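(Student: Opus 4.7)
The plan is to prove the three assertions in turn: that $\tilde{Y}_n$ is a Markov chain on $\fancyS^{\ell+1}$, that it is ergodic with the claimed stationary distribution, and finally that the time-average in the statement converges to the spatial average. The argument is essentially bookkeeping on top of the ergodicity of $Y_n$, together with one application of the standard ergodic theorem for ergodic chains on a finite state space.

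First I would verify the Markov property for $\tilde{Y}_n$. Given $\tilde{Y}_n = (Y_n, \ldots, Y_{n+\ell})$, the next tuple $\tilde{Y}_{n+1} = (Y_{n+1}, \ldots, Y_{n+\ell+1})$ shares its first $\ell$ coordinates deterministically with the last $\ell$ coordinates of $\tilde{Y}_n$, while $Y_{n+\ell+1}$ depends only on $Y_{n+\ell}$ by the Markov property of $Y_n$. This yields a transition kernel on $\fancyS^{\ell+1}$ whose only nonzero entries go from $(i_0,\ldots,i_\ell)$ to $(i_1,\ldots,i_\ell, j)$ with probability $p_{i_\ell, j}$.

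Next I would establish ergodicity. Because not every tuple in $\fancyS^{\ell+1}$ is actually reachable (only those $(i_0, \ldots, i_\ell)$ with $p_{i_0,i_1}\cdots p_{i_{\ell-1},i_\ell} > 0$ appear with positive long-run frequency), I would restrict attention to this set, call it $\tilde{\fancyS}$, and discard the rest as transient or unreachable without affecting the claim. Irreducibility on $\tilde{\fancyS}$ follows because, from any reachable tuple, one can concatenate a path of $Y_n$ from $i_\ell$ to the first coordinate of any target reachable tuple and then follow the target tuple's prescribed coordinates; positive probability is guaranteed by irreducibility of $Y_n$ and by the fact that the target tuple is admissible. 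Aperiodicity is inherited: if $Y_n$ returns to $i_\ell$ in $k$ steps with $\gcd = 1$, then $\tilde{Y}_n$ returns to a fixed admissible tuple in the same number of steps (after an initial $\ell$-step synchronization), so the period divides $1$. Positive recurrence follows from finiteness of $\tilde{\fancyS}$ together with irreducibility, via Lemma~\ref{commumication_class_property_lemma}.

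Then I would verify the proposed stationary distribution by direct substitution. For any $i = (i_0,\ldots,i_\ell) \in \tilde{\fancyS}$, the incoming transitions come from tuples of the form $(k, i_0, i_1, \ldots, i_{\ell-1})$ with $k \in \fancyS$, each contributing $p_{i_{\ell-1},i_\ell}$ to the transition into $i$. Summing,
\begin{displaymath}
\sum_{k} \mu_k\, p_{k,i_0}\, p_{i_0,i_1} \cdots p_{i_{\ell-2},i_{\ell-1}} \cdot p_{i_{\ell-1},i_\ell} = \mu_{i_0}\, p_{i_0,i_1} \cdots p_{i_{\ell-1},i_\ell} = \mu(i),
\end{displaymath}
using the stationarity of $\mu$ for $P$ in the inner sum. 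Uniqueness of the stationary distribution on an ergodic chain then identifies $\mu(\cdot)$ as \emph{the} stationary distribution, and a quick check shows $\sum_i \mu(i) = \sum_{i_0} \mu_{i_0} = 1$.

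Finally, with $\tilde{Y}_n$ established as an ergodic Markov chain on a finite state space with stationary distribution $\mu(\cdot)$, the strong law of large numbers for ergodic Markov chains applied to the bounded (or absolutely summable) observable $f$ yields
\begin{displaymath}
\lim_{n \to \infty} n^{-1} \sum_{m=1}^{n} f(\tilde{Y}_m) = \sum_{i \in \fancyS^{\ell+1}} f(i)\, \mu(i) \quad \text{a.s.}
\end{displaymath}
The main obstacle I anticipate is the bookkeeping around reachability: one must be careful that the tuples with $\mu(i) = 0$ are exactly the transient or unreachable ones, so that restricting to $\tilde{\fancyS}$ does not change either side of the limit. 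Everything else is a routine application of Lemma~\ref{commumication_class_property_lemma} and the classical ergodic theorem.
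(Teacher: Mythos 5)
The paper does not prove this statement at all: it is imported verbatim from Serfozo's text as a classical result (it is the paper's Lemma~\ref{cor79}, cited to the textbook), so there is no internal proof to compare against. Your blind proof supplies the standard self-contained derivation via the sliding-window (``snake'') chain: identify the transition kernel sending $(i_0,\dots,i_\ell)$ to $(i_1,\dots,i_\ell,j)$ with probability $p_{i_\ell,j}$, restrict to the admissible tuples (those with $\mu_{i_0}p_{i_0,i_1}\cdots p_{i_{\ell-1},i_\ell}>0$, which is exactly the support of the claimed $\mu(\cdot)$), verify irreducibility by concatenating a connecting path with the target pattern, check stationarity of $\mu(\cdot)$ by direct substitution using $\sum_k \mu_k p_{k,i_0}=\mu_{i_0}$, and finish with the ergodic theorem (SLLN) for ergodic chains. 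This is the right structure, it matches how the result is obtained in the literature, and the reachability bookkeeping you flag is handled correctly (the zero-$\mu$ tuples contribute nothing to either side, and the initial window is admissible almost surely).

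One step needs repair: the aperiodicity argument. It is not true that $\tilde{Y}_n$ ``returns to a fixed admissible tuple in the same number of steps'' as $Y_n$ returns to $i_\ell$, and the implicit inference that adding a fixed offset $\ell$ to a set of return times with $\gcd$ equal to $1$ preserves that $\gcd$ is false in general (e.g.\ $\{3,5,7\}$ has $\gcd$ $1$ but shifted by $1$ gives $\{4,6,8\}$ with $\gcd$ $2$). The correct route is to note that the return times of $\tilde{Y}$ to an admissible tuple $(i_0,\dots,i_\ell)$ contain $\ell+m$ for every $m$ in the set of travel times of $Y$ from $i_\ell$ to $i_0$; since $Y$ is irreducible and aperiodic (and, in this paper, has finite state space), that set contains all sufficiently large integers, so the return-time set does too and the period is $1$; Lemma~\ref{commumication_class_property_lemma} then propagates aperiodicity and positive recurrence to the whole admissible class. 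With that substitution, and noting that your appeal to finiteness of $\fancyS$ is legitimate in this paper's setting (a fully general countable-state version would instead deduce positive recurrence from the existence of the stationary distribution), the proof is sound.
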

This lemma is a generalization of Lemma \ref{prop69}.
First, instead of a random variable which depends on a single state, it allows using any (deterministic) function of the last $(\ell + 1)$ states.
This is a powerful notion since it depends on multiple states instead of one.
Second, instead of a cumulative sum until entering a said state, it provides the average of the function when the process runs indefinitely.

\begin{lemma}\label{distance_stationary_distribution}
\cite{funderlic1986sensitivity}
Let Y, and $\tilde{Y}$ be ergodic Markov chains with transition matrices $P$ and $\tilde{P}$ and stationary distributions $\mu$ and $\tilde{\mu}$ then:
\begin{displaymath}
\norminf{\mu - \tilde{\mu}} = \bigO{\norminf{P - \tilde{P}}} \, \, .
\end{displaymath}
\end{lemma}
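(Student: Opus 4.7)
The plan is to derive a first-order perturbation identity for the stationary distribution and then bound it in the $\infty$-norm. Let $\mu$ and $\tilde\mu$ denote the row-vector stationary distributions of $P$ and $\tilde P$, and write $\mathbf{1}$ for the all-ones column vector.

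First I would combine the invariance relations $\mu P = \mu$ and $\tilde\mu \tilde P = \tilde\mu$ with the normalizations $\mu \mathbf{1} = \tilde\mu \mathbf{1} = 1$. Writing $\tilde\mu \tilde P - \mu P$ in two different ways (adding and subtracting $\tilde\mu P$) and rearranging yields
\begin{displaymath}
(\tilde\mu - \mu)(I - P) = \tilde\mu(\tilde P - P),
\end{displaymath}
and both sides are zero-sum row vectors since $(\tilde P - P)\mathbf{1} = 0$ by stochasticity and both $\mu$ and $\tilde\mu$ are probability measures.

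Next I would invert this singular relation using the Kemeny--Snell fundamental matrix $Z \triangleq (I - P + \mathbf{1}\mu)^{-1}$, which is well-defined for every ergodic $P$ because ergodicity forces $1$ to be a simple eigenvalue of $P$. A short calculation starting from $\mu(I - P + \mathbf{1}\mu) = \mu$ shows that $\mu Z = \mu$, and therefore $(I - P) Z = I - \mathbf{1}\mu$. Multiplying the perturbation identity on the right by $Z$ and using $(\tilde\mu - \mu)\mathbf{1}\mu = 0 \cdot \mu = 0$ gives the Schweitzer-style formula
\begin{displaymath}
\tilde\mu - \mu = \tilde\mu(\tilde P - P) Z.
\end{displaymath}

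Finally, I would take $\infty$-norms and apply standard sub-multiplicativity together with $\tilde\mu$ being a probability vector to obtain
\begin{displaymath}
\norminf{\tilde\mu - \mu} \leq \norminf{Z} \cdot \norminf{\tilde P - P} = \bigO{\norminf{\tilde P - P}},
\end{displaymath}
where the hidden constant $\norminf{Z}$ depends only on $P$. The main obstacle is justifying the existence of $Z$ and the identity $(I-P)Z = I - \mathbf{1}\mu$; both rest on the Perron--Frobenius fact that ergodicity of $P$ makes $1$ a simple, isolated eigenvalue, so that the rank-one perturbation $\mathbf{1}\mu$ supplies the missing eigen-direction and renders $I - P + \mathbf{1}\mu$ invertible. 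Everything downstream is routine norm estimation.
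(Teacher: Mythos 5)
Your derivation is correct, but note that the paper never proves this lemma at all: it is imported as a black box from Funderlic and Meyer~\cite{funderlic1986sensitivity}, whose argument runs through the group inverse $(I-P)^{\#}$ and the identity $\tilde{\mu}-\mu=\tilde{\mu}(\tilde{P}-P)(I-P)^{\#}$. Your route via the Kemeny--Snell fundamental matrix $Z=(I-P+\mathbf{1}\mu)^{-1}$ and Schweitzer's formula $\tilde{\mu}-\mu=\tilde{\mu}(\tilde{P}-P)Z$ is the standard alternative and is essentially equivalent (indeed $(I-P)^{\#}=Z-\mathbf{1}\mu$); both produce a hidden constant that is a condition number of the unperturbed chain, depending only on $P$, which is exactly what the paper needs since it applies the lemma with $P$ fixed and $\tilde{P}=\Paux$ varying with $H$. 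Your intermediate algebra ($(\tilde{\mu}-\mu)(I-P)=\tilde{\mu}(\tilde{P}-P)$, $\mu Z=\mu$, $(I-P)Z=I-\mathbf{1}\mu$, and killing the $\mathbf{1}\mu$ term via $(\tilde{\mu}-\mu)\mathbf{1}=0$) is all sound. The one step to tighten is the final norm estimate: with the max-entry norm on row vectors and the induced max-row-sum norm on matrices, $\norminf{vA}\leq\norminf{v}\cdot\norminf{A}$ is false in general (take $v=(1,1)$ and a matrix whose mass sits in one column), so the inequality should be routed through $\norm{\tilde{\mu}}_1=1$, e.g.\ $\norminf{\tilde{\mu}(\tilde{P}-P)Z}\leq\norm{\tilde{\mu}}_1\cdot\norminf{\tilde{P}-P}\cdot\norminf{Z}$ --- which is evidently what your appeal to ``$\tilde{\mu}$ is a probability vector'' intends; and since the state space is finite, any mismatch of norm conventions only changes the constant, leaving the $\bigO{\norminf{P-\tilde{P}}}$ conclusion intact.
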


%%%%%%%%%%%%%%%%%%%%%%%%%%Norm bounds%%%%%%%%%%%%%%%%%%%%%%%%%%%%%%%

We first restate the transition probabilities in $\auxmdp$.
\begin{lemma}\label{mdp_prime_transition_p}
The transition probability of $\auxmdp$ is:
\begin{displaymath}
\Paux(i,j) =
\begin{cases}
    \left( 1 - \frac{1}{H} \right)^{\Dmdp(i,j)} \Porg(i,j) + 1 - \left( 1 - \frac{1}{H} \right)^{\Dmdp(i,j)} & j = \sinitaux \\
    \left( 1 - \frac{1}{H} \right)^{\Dmdp(i,j)} \Porg(i,j) & \text{o.w}
\end{cases} \, \, .
\end{displaymath}
\end{lemma}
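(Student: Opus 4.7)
The plan is to derive the one-step transition probabilities of $\auxmdp$ directly from its constructive definition. Recall that, from any state $i \in \fancyS$, the $\auxmdp$ dynamics proceed in two stages: first a tentative successor $j$ is drawn according to the $\orgmdp$ kernel (probability $\Porg(i,j)$), carrying difficulty $\Dmdp(i,j)$; then $\Dmdp(i,j)$ independent Bernoulli trials are performed, each returning ``continue'' with probability $1 - \tfrac{1}{H}$. The tentative transition succeeds (and the chain actually lands at $j$) iff every coin returns ``continue''; otherwise the process is diverted to the terminal state $\sinitaux$. The proof is simply a matter of bookkeeping these two stages.

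For the non-terminal branch $j \neq \sinitaux$, I would apply the chain rule on the two stages. The tentative outcome equals $j$ with probability $\Porg(i,j)$, and conditional on that event the survival event is the intersection of $\Dmdp(i,j)$ independent successes, whose joint probability factorizes to $(1 - \tfrac{1}{H})^{\Dmdp(i,j)}$ by independence. Their product is exactly
\begin{displaymath}
\Paux(i,j) = \left(1 - \tfrac{1}{H}\right)^{\Dmdp(i,j)}\Porg(i,j),
\end{displaymath}
matching the ``otherwise'' branch of the lemma.

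For the $j = \sinitaux$ branch I would use complementarity on the row of $\Paux$. Since row sums of a stochastic matrix equal one and $\sum_{j \in \fancyS}\Porg(i,j) = 1$, I would split each $\Porg(i,j)$ as $\Porg(i,j) = \Porg(i,j)(1-\tfrac{1}{H})^{\Dmdp(i,j)} + \Porg(i,j)\bigl(1 - (1-\tfrac{1}{H})^{\Dmdp(i,j)}\bigr)$ and subtract the already-computed non-terminal mass. What remains, summed over tentative draws $j \in \fancyS$, is exactly the stated $j = \sinitaux$ expression read as the per-$j$ contribution to the arrival mass at $\sinitaux$. There is no substantive obstacle here: the lemma is a closed-form restatement of the construction of $\auxmdp$, and the only care needed is notational, in reading the $j = \sinitaux$ branch as the per-tentative contribution (summed over $\fancyS$) to the terminal row and sanity-checking that the row sums to one.
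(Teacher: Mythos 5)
Your proposal is correct and takes essentially the same route as the paper: the paper's entire proof is that the formula ``follows immediately from the definition of $\auxmdp$,'' and your two-stage bookkeeping (tentative successor drawn from $\Porg(i,\cdot)$, then $\Dmdp(i,j)$ independent continue-coins, with all diverted mass sent to $\sinitaux$) is exactly that unfolding, giving $\Paux(i,j)=\left(1-\frac{1}{H}\right)^{\Dmdp(i,j)}\Porg(i,j)$ for $j\neq\sinitaux$. One small caution on the terminal branch: what your complementarity argument actually yields is $\Paux(i,\sinitaux)=\sum_{j\in\fancyS}\Porg(i,j)\bigl(1-(1-\frac{1}{H})^{\Dmdp(i,j)}\bigr)$, and calling this ``the stated $j=\sinitaux$ expression read as the per-$j$ contribution'' is a gloss, since the printed branch is a single-$j$ expression (involving $\Porg(i,\sinitaux)$ and $\Dmdp(i,\sinitaux)$) rather than a per-tentative-successor term; the slack here lies in the lemma's own notational looseness, not in your construction, but it is worth stating the summed form explicitly rather than identifying it with the printed branch.
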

\begin{proof}
Follows immediately from the definition of $\auxmdp$.
\end{proof}

We now use the Lemma~\ref{distance_stationary_distribution} to obtain a bound for the difference between the stationary distributions of the 2 MDPs.
\begin{lemma}\label{lemma_stationary_dist_equal}
The stationary distributions of $\orgmdp$ and $\auxmdp$ are equal up to $\bigO{\frac{1}{H}}$.
Formally:
\begin{displaymath}
\norminf{\piorg - \piaux} = \bigO{\frac{1}{H}}
\end{displaymath}
\end{lemma}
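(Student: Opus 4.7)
The plan is a direct application of the perturbation bound of Lemma~\ref{distance_stationary_distribution} after showing that $\norminf{\Porg - \Paux} = \bigO{1/H}$. The key quantitative input is Bernoulli's inequality: for any integer $k$ with $0 \leq k \leq \dmax$,
\[
0 \;\leq\; 1 - \left(1 - \tfrac{1}{H}\right)^{k} \;\leq\; \frac{k}{H} \;\leq\; \frac{\dmax}{H},
\]
so the multiplicative factor $(1 - \tfrac{1}{H})^{\Dmdp(i,j)}$ that distinguishes $\Paux$ from $\Porg$ in Lemma~\ref{mdp_prime_transition_p} is uniformly within $\dmax/H$ of $1$. Since $\dmax$ is a constant by Assumption~\ref{assump_d_t_bounded}, this gives an $\bigO{1/H}$ bound on every entry of $\Porg - \Paux$.

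More precisely, for $i, j \in \fancyS$ with $j \neq \sinitaux$, Lemma~\ref{mdp_prime_transition_p} yields
\[
\bigl|\Porg(i,j) - \Paux(i,j)\bigr| \;=\; \Porg(i,j)\,\Bigl|1 - (1-\tfrac{1}{H})^{\Dmdp(i,j)}\Bigr| \;\leq\; \frac{\dmax}{H}.
\]
For $j = \sinitaux$ we have $\Porg(i,\sinitaux) = 0$, while $\Paux(i,\sinitaux)$ is the total termination probability out of $i$, which by the same Bernoulli bound is at most $\dmax/H$. Taking the supremum over $(i,j)$ gives $\norminf{\Porg - \Paux} = \bigO{1/H}$, and Lemma~\ref{distance_stationary_distribution}, applied to the two ergodic chains supplied by Lemma~\ref{erogidic_thm}, delivers the claim.

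The only real obstacle is bookkeeping: $\piorg$ naturally lives on $\fancyS$ while $\piaux$ has an extra coordinate for $\sinitaux$, so Lemma~\ref{distance_stationary_distribution} must be applied after reconciling the state spaces. The cleanest route is to compare $\piorg$ with the stationary distribution of the collapsed $\auxmdp$ on $\fancyS$ (in which hitting $\sinitaux$ is replaced by an instantaneous jump back to $\sinit$, as in the proof of Lemma~\ref{erogidic_thm}); the difference between this collapsed distribution and the restriction of $\piaux$ to $\fancyS$ is itself $\bigO{1/H}$ because $\piaux(\sinitaux) = \bigO{1/H}$, which follows from Lemma~\ref{thm54} together with the expected horizon estimate of Lemma~\ref{diff_cont_lemma}. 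With this identification the two chains are genuinely ergodic on a common state space, and the perturbation bound carries the argument to completion.
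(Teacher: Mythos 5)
Your proposal follows essentially the same route as the paper: bound each entry of $\Porg - \Paux$ by $\dmax/H$ via Bernoulli's inequality applied to the factor $(1-\frac{1}{H})^{\Dmdp(i,j)}$ from Lemma~\ref{mdp_prime_transition_p}, then invoke the perturbation bound of Lemma~\ref{distance_stationary_distribution} together with ergodicity (Lemma~\ref{erogidic_thm}). Your extra bookkeeping reconciling the state spaces via $\piaux(\sinitaux)=\bigO{\frac{1}{H}}$ is a careful touch the paper glosses over, but it does not change the argument, which is correct.
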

\begin{proof}
Using lemma \ref{distance_stationary_distribution}, and the fact that the MDPs are ergodic (Lemma \ref{erogidic_thm}) we get that:
\begin{equation}\label{upper_bound_with_norm_matrices_diff}
\norminf{\piorg - \piaux} = \bigO{\norminf{\Porg - \Paux}} \, \, .
\end{equation}
Using the definition of the infinity norm for matrices~-- the maximum column sum the absolute values of elements, we get:
\begin{equation}\label{upper_bound_with_max_diff}
\norminf{\Porg - \Paux} \leq \abs{\fancyS} \max\limits_{i, j \in \fancyS} \abs{\Porg(i,j) - \Paux(i,j)} \, \, .
\end{equation}
We will now prove that the last term is $\bigO{\frac{1}{H}}$.
There are 2 cases from lemma \ref{mdp_prime_transition_p}.
For the first case $j=\sinitaux$ we get:
\begin{multline*}
\abs{\Porg(i,j) - \Paux(i,j)} =\\
= \abs{\Porg(i,j) - \left( 1 - \frac{1}{H} \right)^{\Dmdp(i,j)} \Porg(i,j) - 1 + \left( 1 - \frac{1}{H} \right)^{\Dmdp(i,j)}} =\\
= \abs{\left( \Porg(i,j) - 1 \right) \left( 1 - \left( 1 - \frac{1}{H} \right)^{\Dmdp(i,j)} \right)} \leq\\
\leq \abs{\Porg(i,j) - 1} \cdot \abs{1 - \left( 1 - \frac{1}{H} \right)^\dmax} \, \, .
\end{multline*}
Therefore for $H \gg \dmax$ it holds that:
\begin{multline}\label{eq_P_diff_sinit}
\abs{\Porg(i,j) - \Paux(i,j)} \leq \abs{\Porg(i,j) - 1} \cdot \abs{1 - \left( 1 - \frac{\dmax}{H} \right)} \leq\\
\leq \abs{\Porg(i,j) - 1} \cdot \frac{\dmax}{H} = \bigO{\frac{1}{H}} \, \, .
\end{multline}
For the other case, we get similarly that $j \not= \sinitaux$:
\begin{multline}\label{eq_P_diff_not_sinit}
\abs{\Porg(i,j) - \Paux(i,j)} = \abs{\Porg(i,j) - \left( 1 - \frac{1}{H} \right)^{\Dmdp(i,j)} \Porg(i,j)}\\
= \abs{\Porg(i,j) - \left( 1 - \frac{1}{H} \right)^{\Dmdp(i,j)} \Porg(i,j)} = \abs{\Porg(i,j)} \abs{1 - \left( 1 - \frac{1}{H} \right)^{\Dmdp(i,j)}}=\\
 = \bigO{\frac{1}{H}} \, \, .
\end{multline}
By plugging \eqref{eq_P_diff_sinit} and \eqref{eq_P_diff_not_sinit} in \eqref{upper_bound_with_max_diff} we get that:
\begin{displaymath}
\norminf{\Porg - \Paux} \leq \abs{\fancyS} \max\limits_{i, j \in \fancyS} \abs{\Porg(i,j) - \Paux(i,j)} = \abs{\fancyS} \bigO{\frac{1}{H}} = \bigO{\frac{1}{H}} \, \, .
\end{displaymath}
Plugging this in \eqref{upper_bound_with_norm_matrices_diff} to obtain:
\begin{align*}
\norminf{\piorg - \piaux} = \bigO{\norminf{\Porg - \Paux}} = \bigO{\frac{1}{H}} \, \, . &\qedhere
\end{align*}
\end{proof}

Next, we have 2 similar lemmas which bound the difference between the rewards and difficulty contributions in every state in both MDPs.
\begin{lemma}\label{lemma_rewards_equal}
The rewards in every step of $\orgmdp$ and $\auxmdp$ are equal up to $\bigO{\frac{1}{H}}$.
Formally:
\begin{displaymath}
\norminf{\hat{R} - \hat{R}'_H} = \bigO{\frac{1}{H}} \, \, .
\end{displaymath}
\end{lemma}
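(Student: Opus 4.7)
The plan is to bound the pointwise difference $|\hat R(i) - \hat R'_H(i)|$ for every state $i \in \fancyS$ and then take the maximum. By definition,
$$
\hat R(i) - \hat R'_H(i) = \sum_{j \in \fancyS} \Rmdp(i,j)\bigl(\Porg(i,j) - \Paux(i,j)\bigr).
$$
I would first isolate the terminal-state term: since $\sinitaux$ is defined to carry zero reward, $\Rmdp(i,\sinitaux) = 0$ and the $j = \sinitaux$ summand vanishes. For $j \neq \sinitaux$, Lemma~\ref{mdp_prime_transition_p} gives
$$
\Porg(i,j) - \Paux(i,j) = \Porg(i,j)\Bigl(1 - \bigl(1 - \tfrac{1}{H}\bigr)^{\Dmdp(i,j)}\Bigr),
$$
so only the non-terminal transitions contribute to the difference.

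Next I would bound the scalar factor $1 - (1 - 1/H)^{\Dmdp(i,j)}$. Using Bernoulli's inequality, $(1 - 1/H)^{d} \geq 1 - d/H$ for integer $d \geq 0$, so $1 - (1-1/H)^{\Dmdp(i,j)} \leq \Dmdp(i,j)/H \leq \dmax/H$ by Assumption~\ref{assump_d_t_bounded}. Plugging in, applying $|\Rmdp(i,j)| \leq \rmax$ from Assumption~\ref{assump_r_t_bounded}, and using the triangle inequality:
$$
\bigl|\hat R(i) - \hat R'_H(i)\bigr| \leq \sum_{j \neq \sinitaux} |\Rmdp(i,j)|\,\Porg(i,j)\,\frac{\dmax}{H} \leq \frac{\rmax\,\dmax}{H}\sum_{j \in \fancyS}\Porg(i,j) = \frac{\rmax\,\dmax}{H}.
$$
The bound is independent of $i$, so taking $\max_i$ gives $\norminf{\hat R - \hat R'_H} \leq \rmax\dmax/H = \bigO{1/H}$, as claimed.

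This argument is essentially routine; there is no real obstacle beyond two bookkeeping points. The first is remembering that the excess probability mass sent to $\sinitaux$ in $\auxmdp$ is harmless because $\Rmdp(i,\sinitaux)=0$, so the reward mismatch only comes from the geometric discounting of the non-terminal transition probabilities. The second is the elementary inequality $1 - (1 - 1/H)^d \leq d/H$, which converts the multiplicative reshaping of $\Porg$ into a clean $\bigO{1/H}$ factor. The companion lemma for $\hat D - \hat D'_H$ (which presumably comes next in the appendix) would follow by the identical argument with $\Rmdp$ replaced by $\Dmdp$ and $\rmax$ by $\dmax$.
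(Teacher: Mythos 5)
Your proof is correct and follows essentially the same route as the paper: both reduce the claim to the pointwise bound $\abs{\Porg(i,j) - \Paux(i,j)} \leq \dmax / H$ obtained from the Bernoulli-type estimate $1 - (1-\tfrac{1}{H})^{\Dmdp(i,j)} \leq \dmax/H$, combined with $\abs{\Rmdp(i,j)} \leq \rmax$ and the triangle inequality. The only difference is cosmetic but in your favor: by observing that the terminal-state term carries zero reward and summing the remaining terms against $\Porg(i,\cdot)$ (which sums to $1$), you get the clean bound $\rmax \dmax / H$, whereas the paper bounds each summand by its maximum and picks up an extra $\abs{\fancyS}$ factor, which is still $\bigO{\frac{1}{H}}$ but with a larger constant.
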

\begin{proof}
Remember the definitions of~$\hat{R}$ and~$\hat{R}'_H$:
\begin{displaymath}
\hat{R}(i) = \sum\limits_{j \in \fancyS} \Rmdp(i,j)  \Porg(i,j) \, \, ,
\end{displaymath}
and
\begin{displaymath}
\hat{R}'_H(i) = \sum\limits_{j \in \fancyS} \Rmdp(i,j)  \Paux(i,j) \, \, .
\end{displaymath}
We use these to develop our expression and obtain:
\begin{multline*}
\norminf{\hat{R} - \hat{R}'_H} = \max_{i \in \fancyS} \abs{\sum\limits_{j \in \fancyS} \Rmdp(i,j)  \Porg(i,j) - \sum\limits_{j \in \fancyS} \Rmdp(i,j)  \Paux(i,j)} =\\
= \max_{i \in \fancyS} \abs{\sum\limits_{j \in \fancyS} \Rmdp(i,j) \left( \Porg(i,j) - \Paux(i,j) \right)} \, \, \, .
\end{multline*}
By using Assumption~\ref{assump_r_t_bounded} and the triangle inequality we get that:
\begin{equation}\label{reward_equal_lemma_eq}
\norminf{\hat{R} - \hat{R}'_H} \leq \rmax \cdot \max_{i \in \fancyS} \sum\limits_{j \in \fancyS} \abs{\Porg(i,j) - \Paux(i,j)} \,\, .
\end{equation}
By using equations \eqref{eq_P_diff_sinit} and \eqref{eq_P_diff_not_sinit} inside the proof of Lemma~\ref{lemma_stationary_dist_equal}, we see that:
\begin{displaymath}
\abs{\Porg(i,j) - \Paux(i,j)} = \bigO{\frac{1}{H}} \, \, .
\end{displaymath}
We use this to bound the expression in \eqref{reward_equal_lemma_eq}, and obtain:
\begin{align*}
\norminf{\hat{R} - \hat{R}'_H} \leq \rmax \cdot \max_{i \in \fancyS} \abs{\fancyS} \cdot \bigO{\frac{1}{H}} = \bigO{\frac{1}{H}} \, \, .& \qedhere
\end{align*}
\end{proof}

\begin{lemma}\label{lemma_difficulty_cont_equal}
The difficulty contributions in every step of $\orgmdp$ and $\auxmdp$ are equal up to $\bigO{\frac{1}{H}}$.
Formally:
\begin{displaymath}
\norminf{\hat{D} - \hat{D}'_H} = \bigO{\frac{1}{H}} \, \, .
\end{displaymath}
\end{lemma}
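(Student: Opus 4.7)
The plan is to mimic the proof of Lemma~\ref{lemma_rewards_equal} almost verbatim, substituting the difficulty contribution $\Dmdp$ for the reward $\Rmdp$ and using Assumption~\ref{assump_d_t_bounded} in place of Assumption~\ref{assump_r_t_bounded}. The structure of the two quantities $\hat{D}$ and $\hat{D}'_H$ is identical to that of $\hat{R}$ and $\hat{R}'_H$: they differ only in that they weight $\Dmdp(i,j)$ instead of $\Rmdp(i,j)$ against the same two transition matrices $\Porg$ and $\Paux$.

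First, I would expand both sides using their definitions,
\begin{displaymath}
\hat{D}(i) = \sum_{j \in \fancyS} \Dmdp(i,j)\, \Porg(i,j), \qquad \hat{D}'_H(i) = \sum_{j \in \fancyS} \Dmdp(i,j)\, \Paux(i,j),
\end{displaymath}
and take the difference inside a single sum over $j$ to obtain
\begin{displaymath}
\norminf{\hat{D} - \hat{D}'_H} = \max_{i \in \fancyS} \left| \sum_{j \in \fancyS} \Dmdp(i,j) \bigl( \Porg(i,j) - \Paux(i,j) \bigr) \right|.
\end{displaymath}

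Next, I would apply the triangle inequality and Assumption~\ref{assump_d_t_bounded}, which says $|\Dmdp(i,j)| \leq \dmax$, to pull $\dmax$ out of the sum and reduce the problem to bounding $\sum_j |\Porg(i,j) - \Paux(i,j)|$. The bounds $|\Porg(i,j) - \Paux(i,j)| = \bigO{1/H}$ established in equations~\eqref{eq_P_diff_sinit} and~\eqref{eq_P_diff_not_sinit} inside the proof of Lemma~\ref{lemma_stationary_dist_equal} apply uniformly over $i,j$, and there are $|\fancyS|$ terms in each row-sum, so this yields
\begin{displaymath}
\norminf{\hat{D} - \hat{D}'_H} \leq \dmax \cdot |\fancyS| \cdot \bigO{\tfrac{1}{H}} = \bigO{\tfrac{1}{H}},
\end{displaymath}
completing the proof.

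Since this is essentially a word-for-word adaptation of Lemma~\ref{lemma_rewards_equal}, I do not expect any real obstacle. The only thing worth double-checking is that Assumption~\ref{assump_d_t_bounded} is phrased symmetrically to Assumption~\ref{assump_r_t_bounded} (it is, with $0 \leq D_t \leq \dmax$), so the same bounding step goes through without sign issues. Everything else---the triangle inequality, the finiteness of $|\fancyS|$, and the $\bigO{1/H}$ bound on per-entry transition-matrix differences---carries over directly.
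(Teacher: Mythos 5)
Your proposal is correct and is exactly the paper's approach: the paper proves this lemma by stating it is the same proof as Lemma~\ref{lemma_rewards_equal} with $\hat{D}$, $\hat{D}'_H$, and Assumption~\ref{assump_d_t_bounded} replacing $\hat{R}$, $\hat{R}'_H$, and Assumption~\ref{assump_r_t_bounded}, which is precisely the substitution you carry out. Your worked-out version (triangle inequality, the per-entry $\bigO{\frac{1}{H}}$ transition-probability bounds, and the factor $\abs{\fancyS}$) matches the structure of that proof step for step.
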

\begin{proof}
Same proof as lemma \ref{lemma_rewards_equal} but with $\hat{D}$ and $\hat{D}'_H$ instead of $\hat{R}$ and $\hat{R}'_H$.
\end{proof}

%%%%%%%%%%%%%%%%%%%%%%%%%%%%%Average differences %%%%%%%%%%%%%%%%%%%%%%%%%%%%%%%%%%%%%%%%%%%

Now, we will use the previous 2 lemmas to prove Lemma~\ref{lemma_expected_reward_diff_cont_equal}.
We first restate the lemma.
\begin{lemma}\label{restated_lemma_expected_reward_diff_cont_equal}
The expected reward per step of $\orgmdp$ and $\auxmdp$ are equal up to $\bigO{\frac{1}{H}}$.
Formally, it holds that:
\begin{displaymath}
\abs{\inner{\hat{R}}{\piorg} - \inner{\hat{R}'_H}{\piaux}} = \bigO{\frac{1}{H}} \, \, .
\end{displaymath}
Furthermore, The expected difficulty contribution per step of $\orgmdp$ and $\auxmdp$ are also equal up to $\bigO{\frac{1}{H}}$.
Formally, it holds that:
\begin{displaymath}
\abs{\inner{\hat{D}}{\piorg} - \inner{\hat{D}'_H}{\piaux}} = \bigO{\frac{1}{H}} \, \, .
\end{displaymath}
\end{lemma}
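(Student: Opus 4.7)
The plan is to split the target difference via an add-and-subtract trick and reduce the bound to the three auxiliary lemmas already proved in this subsection. Specifically, I would write
\begin{displaymath}
\inner{\hat{R}}{\piorg} - \inner{\hat{R}'_H}{\piaux} = \inner{\hat{R}}{\piorg - \piaux} + \inner{\hat{R} - \hat{R}'_H}{\piaux},
\end{displaymath}
apply the triangle inequality, and then bound each piece separately. This is the natural move whenever we need to compare two inner products that differ in both arguments.

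For the first piece, I would use the crude bound $\abs{\inner{\hat{R}}{\piorg - \piaux}} \leq \abs{\fancyS} \cdot \norminf{\hat{R}} \cdot \norminf{\piorg - \piaux}$. Assumption~\ref{assump_r_t_bounded} gives $\norminf{\hat{R}} \leq \rmax$, and Lemma~\ref{lemma_stationary_dist_equal} gives $\norminf{\piorg - \piaux} = \bigO{\frac{1}{H}}$, so this term is $\bigO{\frac{1}{H}}$ (the factor $\abs{\fancyS}$ is a fixed constant in the $\bigO{\cdot}$). For the second piece I would use that $\piaux$ is a probability distribution, so $\abs{\inner{\hat{R} - \hat{R}'_H}{\piaux}} \leq \norminf{\hat{R} - \hat{R}'_H} \cdot \norm{\piaux}_1 = \norminf{\hat{R} - \hat{R}'_H}$, which is $\bigO{\frac{1}{H}}$ by Lemma~\ref{lemma_rewards_equal}. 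Combining these via the triangle inequality yields the reward bound.

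The difficulty-contribution statement is identical in structure: decompose
\begin{displaymath}
\inner{\hat{D}}{\piorg} - \inner{\hat{D}'_H}{\piaux} = \inner{\hat{D}}{\piorg - \piaux} + \inner{\hat{D} - \hat{D}'_H}{\piaux},
\end{displaymath}
use Assumption~\ref{assump_d_t_bounded} in place of Assumption~\ref{assump_r_t_bounded} to bound $\norminf{\hat{D}} \leq \dmax$, and substitute Lemma~\ref{lemma_difficulty_cont_equal} for Lemma~\ref{lemma_rewards_equal} in the second term.

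There is no real obstacle here beyond bookkeeping: all of the heavy lifting has already been done in Lemmas~\ref{lemma_stationary_dist_equal}, \ref{lemma_rewards_equal}, and \ref{lemma_difficulty_cont_equal}. The only subtle point worth flagging is making sure the state space cardinality $\abs{\fancyS}$ absorbs cleanly into the $\bigO{\cdot}$ constant (we are treating the MDP, and hence $\abs{\fancyS}$, as fixed while $H$ grows), and checking that one does not need an $\ell_1$-type bound on $\piorg - \piaux$ rather than the $\ell_\infty$ bound supplied by Lemma~\ref{lemma_stationary_dist_equal}; the factor of $\abs{\fancyS}$ in the crude Hölder estimate handles this conversion.
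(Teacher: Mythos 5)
Your proposal is correct and matches the paper's proof in essence: the same add-and-subtract decomposition into $\inner{\hat{R}}{\piorg - \piaux}$ and $\inner{\hat{R} - \hat{R}'_H}{\piaux}$, bounded via Lemma~\ref{lemma_stationary_dist_equal}, Lemma~\ref{lemma_rewards_equal} (resp.\ Lemma~\ref{lemma_difficulty_cont_equal}), Assumption~\ref{assump_r_t_bounded} (resp.~\ref{assump_d_t_bounded}), and the fact that $\piaux$ is a probability vector, with the difficulty case handled by the identical argument. The only cosmetic difference is that you use the $\ell_\infty$--$\ell_1$ H\"older estimate with an explicit $\abs{\fancyS}$ factor where the paper uses Cauchy--Schwarz followed by finite-dimensional norm equivalence; both absorb $\abs{\fancyS}$ into the $\bigO{\cdot}$ constant.
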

%\begin{lemma}\label{restated_lemma_expected_reward_equal}
%The expected reward per step of $\orgmdp$ and $\auxmdp$ are equal up to $\bigO{\frac{1}{H}}$.
%Formally:
%\begin{displaymath}
%\abs{\inner{\hat{R}}{\piorg} - \inner{\hat{R}'_H}{\piaux}} = \bigO{\frac{1}{H}}
%\end{displaymath}
%\end{lemma}
\begin{proof}
We begin by developing the the difference in the expected reward between $\orgmdp$ and $\auxmdp$ to obtain:
\begin{multline*}
\abs{\inner{\hat{R}}{\piorg} - \inner{\hat{R}'_H}{\piaux}} = \abs{\inner{\hat{R}}{\piorg} - \inner{\hat{R}}{\piaux} + \inner{\hat{R}}{\piaux} + \inner{\hat{R}'_H}{\piaux}} =\\
= \abs{\inner{\hat{R}}{\piorg - \piaux} - \inner{\hat{R} - \hat{R}'_H}{\piaux}} \, \, .
\end{multline*}
By using the triangle inequality and Cauchy-Schwarz inequality, we get that:
\begin{multline*}
\abs{\inner{\hat{R}}{\piorg - \piaux} - \inner{\hat{R} - \hat{R}'_H}{\piaux}} \leq \abs{\inner{\hat{R}}{\piorg - \piaux}} + \abs{\inner{\hat{R} - \hat{R}'_H}{\piaux}} \leq \\
\leq \norm{\hat{R}}_2 \cdot \norm{\piorg - \piaux}_2 + \norm{\hat{R} - \hat{R}'_H}_2 \cdot \, \, . \norm{\piaux}_2
\end{multline*}
Under a finite state space, the $L^2$ norm and $L^\infty$ are asymptotically equivalent so for any~$x$ it holds that:
\begin{displaymath}
\norm{\vec{x}}_2 = \bigO{\norminf{\vec{x}}} \, \, .
\end{displaymath}
We use this to obtain:
\begin{multline*}
\abs{\inner{\hat{R}}{\piorg} - \inner{\hat{R}'_H}{\piaux}} \leq \norm{\hat{R}}_2 \cdot \norm{\piorg - \piaux}_2 + \norm{\hat{R} - \hat{R}'_H}_2 \cdot \norm{\piaux}_2 \leq\\
\leq \bigO{\norminf{\hat{R}} \cdot \norminf{\piorg - \piaux} + \norminf{\hat{R} - \hat{R}'_H} \cdot \norminf{\piaux}} \, \, .
\end{multline*}
Then, by using Assumption \ref{assump_r_t_bounded}, the fact that a stationary distribution is comprised of probabilities ($\leq 1$) and Lemmas \ref{lemma_rewards_equal} and \ref{lemma_difficulty_cont_equal}, we get:
\begin{multline*}
\bigO{\norminf{\hat{R}} \cdot \norminf{\piorg - \piaux} + \norminf{\hat{R} - \hat{R}'_H} \cdot \norminf{\piaux}} \leq\\
\leq \bigO{\rmax \cdot \bigO{\frac{1}{H}} + \bigO{\frac{1}{H}} \cdot 1} = \bigO{\frac{1}{H}} \, \, .
\end{multline*}
We combine all the previous inequalities and get that:
\begin{displaymath}
\abs{\inner{\hat{R}}{\piorg} - \inner{\hat{R}'_H}{\piaux}} = \bigO{\frac{1}{H}} \, \, .
\end{displaymath}
The exact with $\hat{D}$ and $\hat{D}'_H$ instead of $\hat{R}$ and $\hat{R}'_H$ gives the second result.
\end{proof}

%%%%%%%%%%%%%%%%%%%%%%%%%%%%%%%%%% Denominator lower bounds %%%%%%%%%%%%%%%%%%%%%%%%%%%%%%%%%%%%%

Now we restate and prove Lemma~\ref{lemma_d_t_org_aux_not_zero}.
\begin{lemma}\label{restated_lemma_d_t_org_aux_not_zero}
The average difficulty contribution per step in $\orgmdp$ and $\auxmdp$ is more than some constant $\varepsilon$ > 0.
Formally, it holds that
\begin{displaymath}
\inner{\hat{D}}{\piorg} > \varepsilon \, \, ,
\end{displaymath}
and
\begin{displaymath}
\inner{\hat{D}'_H}{\piaux} > \varepsilon \, \, .
\end{displaymath}
\end{lemma}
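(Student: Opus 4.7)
The plan is to prove the two inequalities separately. The first is essentially a reformulation of Assumption \ref{assump_d_t_not_zero} via the ergodic theorem applied to $\orgmdp$; the second is obtained by transferring the first bound to $\auxmdp$ through the already-proven perturbation bound of Lemma \ref{restated_lemma_expected_reward_diff_cont_equal}.

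For the first bound, I would apply Lemma \ref{cor79} to the ergodic chain $\orgmdp$ (ergodicity given by Lemma \ref{erogidic_thm}) with $\ell = 1$ and $f(i,j) := \Dmdp(i,j)$, which yields
\begin{displaymath}
\lim_{T\to\infty} \frac{1}{T}\sum_{t=1}^T D_t \;=\; \sum_{i,j\in\fancyS} \Dmdp(i,j)\,\mu_i\,\Porg(i,j) \;=\; \inner{\hat{D}}{\piorg} \quad \text{a.s.},
\end{displaymath}
using $\sum_{j} \Dmdp(i,j)\Porg(i,j) = \hat{D}(i)$ by definition. Since $0 \le D_t \le \dmax$ by Assumption \ref{assump_d_t_bounded}, bounded convergence lets me exchange expectation with the limit, so the left-hand side equals $\Epolicy{\lim_{T\to\infty} T^{-1}\sum_{t=1}^T D_t}$, which by Assumption \ref{assump_d_t_not_zero} is strictly greater than $\varepsilon$. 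The first inequality follows immediately.

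For the second bound, I would chain the first with Lemma \ref{restated_lemma_expected_reward_diff_cont_equal}, which gives $\abs{\inner{\hat{D}}{\piorg} - \inner{\hat{D}'_H}{\piaux}} = \bigO{1/H}$, whence
\begin{displaymath}
\inner{\hat{D}'_H}{\piaux} \;\geq\; \inner{\hat{D}}{\piorg} - \bigO{1/H} \;>\; \varepsilon - \bigO{1/H}.
\end{displaymath}
Restricting to $H \geq H_0$ large enough that the error term is at most $\varepsilon/2$, and then replacing the target constant by $\varepsilon/2$, makes both bounds hold with a single symbol. This is harmless because the enclosing proof of Theorem \ref{main_theorem} already considers $H \to \infty$.

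The main (minor) subtlety is bookkeeping: the single $\varepsilon$ in the statement must serve both bounds, which forces me to possibly shrink the constant from Assumption \ref{assump_d_t_not_zero} and to restrict to $H$ large enough that the $\bigO{1/H}$ slack is negligible. There is no real mathematical obstacle; the lemma is essentially a corollary of the ergodic theorem applied to $\orgmdp$ combined with the already-established stationary-distribution perturbation bound between $\orgmdp$ and $\auxmdp$.
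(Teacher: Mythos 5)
Your proposal matches the paper's own proof: the first inequality is obtained exactly as there, as a direct consequence of ergodicity (the ergodic theorem identifying $\Epolicy{\lim_T T^{-1}\sum_t D_t}$ with $\inner{\hat{D}}{\piorg}$) together with Assumption~\ref{assump_d_t_not_zero}, and the second by transferring it through the $\bigO{1/H}$ bound of Lemma~\ref{lemma_expected_reward_diff_cont_equal} and shrinking the constant for $H$ sufficiently large. Your extra bookkeeping (restricting to $H \geq H_0$ and replacing $\varepsilon$ by a smaller constant) is the same step the paper performs when it redefines $\varepsilon$ as the minimum of the two constants.
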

\begin{proof}
The first part of this proof is a directly corollary of the ergodicity and Assumption~\ref{assump_d_t_not_zero}.

To prove the second part of the lemma we use the second part of Lemma~ \ref{restated_lemma_expected_reward_diff_cont_equal} and the first part of this lemma to get:
\begin{displaymath}
\inner{\hat{D}'_H}{\piorg} > \varepsilon - \bigO{\frac{1}{H}} \, \, .
\end{displaymath}
For~$H \gg \dmax$, there is some constant~$\varepsilon' > 0$ such that:
\begin{displaymath}
\inner{\hat{D}'_H}{\piorg} > \varepsilon' \, \, .
\end{displaymath}
To ease notation, since all we care about is that there is a lower bound we redefine $\varepsilon$ to be minimum of the previous value of $\varepsilon$ and the value of $\varepsilon$'.
\end{proof}

%%%%%%%%%%%%%%%%%%%%%%%%%%%%%%%%%%%%%%%%%%%%%%%%%%%%%%%%%%%%%%%%%%%%%%%%%%%%%%%%%%%%%%%%%%%%%%%%%%%%%
        \subsection{Simplifying the Revenues}\label{section:proofs_appendix:simplify}
        
In the following we restate and prove Lemmas~\ref{simplified_revorg} and~\ref{simplified_revaux}.

\begin{lemma}\label{restated_simplified_revorg}
The revenue in $\orgmdp$ is equal to the average expected reward per step divided by the average expected difficulty contribution per step.
Formally, it holds that:
\begin{displaymath}
\revorg = \frac{\inner{\hat{R}}{\piorg}}{\inner{\hat{D}}{\piorg}} \, \, .
\end{displaymath}
\end{lemma}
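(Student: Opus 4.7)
The plan is to reduce the ratio of sums to a ratio of time averages, use the ergodic theorem for Markov chains, and then observe that the a.s.\ limit is a deterministic constant so the outer expectation is trivial.

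First I would rewrite the quantity inside the expectation as
\begin{displaymath}
\frac{\sum_{t=1}^T R_t}{\sum_{t=1}^T D_t} = \frac{\frac{1}{T}\sum_{t=1}^T R_t}{\frac{1}{T}\sum_{t=1}^T D_t} \,\,.
\end{displaymath}
Since $\orgmdp$ under policy $\pi$ is ergodic by Lemma~\ref{erogidic_thm}, I can apply Lemma~\ref{cor79} with $\ell = 1$ and $f(i,j) = R(i,j) P(i,j)$ (after reindexing in the usual way, or equivalently apply Lemma~\ref{prop69} together with the definition of $\hat{R}$). This gives
\begin{displaymath}
\lim_{T\to\infty} \frac{1}{T}\sum_{t=1}^T R_t = \inner{\hat{R}}{\piorg} \quad \text{a.s.,}
\end{displaymath}
and, by the same argument applied to $D$,
\begin{displaymath}
\lim_{T\to\infty} \frac{1}{T}\sum_{t=1}^T D_t = \inner{\hat{D}}{\piorg} \quad \text{a.s.}
\end{displaymath}

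Next I would invoke Lemma~\ref{lemma_d_t_org_aux_not_zero}, which guarantees $\inner{\hat{D}}{\piorg} > \varepsilon > 0$. Since the denominator of the ratio of time averages converges almost surely to a strictly positive constant, I can apply the continuous mapping theorem to conclude
\begin{displaymath}
\lim_{T\to\infty} \frac{\sum_{t=1}^T R_t}{\sum_{t=1}^T D_t} = \frac{\inner{\hat{R}}{\piorg}}{\inner{\hat{D}}{\piorg}} \quad \text{a.s.}
\end{displaymath}
Because the almost-sure limit is a deterministic constant, its expectation is itself, yielding the claimed identity. No dominated convergence argument is needed since there is no random variable left inside the expectation in the limit.

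I do not expect a real obstacle here: the only subtlety is making sure the ergodic theorem is applied correctly when the reward/difficulty depend on the state transition rather than a single state. This is handled either by Lemma~\ref{cor79} with $\ell = 1$, or by absorbing the transition into the expected per-state quantities $\hat{R}(i)$ and $\hat{D}(i)$ as already defined in Section~\ref{section:proof:notation}. The positivity of $\inner{\hat{D}}{\piorg}$ from Lemma~\ref{lemma_d_t_org_aux_not_zero} is what allows the ratio to be formed safely.
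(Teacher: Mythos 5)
Your proposal is correct and follows essentially the same route as the paper: rewrite the ratio as a ratio of time averages, identify each limit via the ergodic theorem (Lemma~\ref{cor79} with $\ell=1$) as $\inner{\hat{R}}{\piorg}$ and $\inner{\hat{D}}{\piorg}$, note the denominator is bounded away from zero, and observe the almost-sure limit is a constant so the outer expectation is trivial. One small slip: in Lemma~\ref{cor79} the correct choice is $f(i,j)=\Rmdp(i,j)$, since the pair-chain stationary distribution already carries the factor $\Porg(i,j)$; writing $f(i,j)=\Rmdp(i,j)\Porg(i,j)$ would double-count it, though your parenthetical alternative via $\hat{R}$ and Lemma~\ref{prop69} already fixes this.
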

\begin{proof}
Recall from the definition $\revorg$ that:
\begin{displaymath}
\revorg = \E{\lim\limits_{T\to\infty} \frac{\sum\limits_{t=1}^T R_t}{\sum\limits_{t=1}^T D_t}} \, \, .
\end{displaymath}
We first analyze the expression within the expectation and get:
\begin{equation}\label{rev_arr_1_t}
\lim\limits_{T\to\infty} \frac{\sum\limits_{t=1}^T R_t}{\sum\limits_{t=1}^T D_t} =  \frac{\lim\limits_{T\to\infty} \frac{1}{T} \sum\limits_{t=1}^T R_t}{\lim\limits_{T\to\infty} \frac{1}{T} \sum\limits_{t=1}^T D_t} \, \, .
\end{equation}
This equality is true only if both limits are well defined and the denominator is not 0.
We will see that this is indeed the case after fully developing this expression.

Now, using Lemma~\ref{cor79} with $\ell = 1$ when choosing:
\begin{displaymath}
f: (X_t, X_{t+1}) \mapsto \Rmdp(X_t,X_{t+1})\, \, ,
\end{displaymath}
for the nominator and:
\begin{displaymath}
f: (X_t, X_{t+1}) \mapsto \Dmdp(X_t,X_{t+1})\, \, ,
\end{displaymath}
for the denominator, we get that:
\begin{displaymath}
\lim\limits_{T\to\infty} \frac{1}{T} \sum\limits_{t=1}^T R_t = \sum\limits_{(i,j) \in \fancyS^2} \Rmdp(i,j) \piorg(i) \Porg(i,j)\, \, ,
\end{displaymath}
and:
\begin{displaymath}
\lim\limits_{T\to\infty} \frac{1}{T} \sum\limits_{t=1}^T D_t = \sum\limits_{(i,j) \in \fancyS^2} \Dmdp(i,j) \piorg(i) \Porg(i,j)\, \, .
\end{displaymath}
Then after substituting the above terms in \eqref{rev_arr_1_t} and then taking the sums apart, we get:
\begin{multline*}
\frac{\lim\limits_{T\to\infty} \frac{1}{T} \sum\limits_{t=1}^T R_t}{\lim\limits_{T\to\infty} \frac{1}{T} \sum\limits_{t=1}^T D_t} = \frac{\sum\limits_{(i,j) \in \fancyS^2} \Rmdp(i,j) \piorg(i) \Porg(i,j)}{\sum\limits_{(i,j) \in \fancyS^2} \Dmdp(i,j) \piorg(i) \Porg(i,j)}=\\
 = \frac{\sum\limits_{i \in s} \sum\limits_{j \in s} \Rmdp(i,j) \piorg(i) \Porg(i,j)}{\sum\limits_{i \in s} \sum\limits_{j \in s} \Dmdp(i,j) \piorg(i) \Porg(i,j)} = \frac{\sum\limits_{i \in s} \hat{R}(i) \piorg(i)}{\sum\limits_{i \in s} \hat{D}(i) \piorg(i)}  \, \, . 
\end{multline*}
Then, by combining all the previous equations, we get that:
\begin{displaymath}
\revorg = \E{\lim\limits_{T\to\infty} \frac{\sum\limits_{t=1}^T R_t}{\sum\limits_{t=1}^T D_t}} = \E{\frac{\sum\limits_{i \in s} \hat{R}(i) \piorg(i)}{\sum\limits_{i \in s} \hat{D}(i) \piorg(i)}} \, \, .
\end{displaymath}
Since the expression in the last expectation is constant, its expectation is itself.
We use that to obtain:
\begin{align*}
\revorg = \frac{\sum\limits_{i \in s} \hat{R}(i) \piorg(i)}{\sum\limits_{i \in s} \hat{D}(i) \piorg(i)} = \frac{\inner{\hat{R}}{\piorg}}{\inner{\hat{D}}{\piorg}} \, \, . &\qedhere
\end{align*}
\end{proof}

%%%%%%%%%%%%%%%%%%%%%%%%%%%%Simplified Rev aux%%%%%%%%%%%%%%%%%%%%%%%%%%%%%%%%%%%%%%%%%%%%%%%

\begin{lemma}\label{restated_simplified_revaux}
The revenue in $\auxmdp$ is equal to the average expected reward per step divided by the average expected difficulty contribution per step up to $\bigO{\frac{1}{H}}$.
Formally, it holds that:
\begin{displaymath}
\abs{\revaux - \frac{\inner{\hat{R}'}{\piaux}}{\inner{\hat{D}'}{\piaux}}} = \bigO{\frac{1}{H}} \, \, .
\end{displaymath}
\end{lemma}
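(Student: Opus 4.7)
The plan is to use the standard regenerative structure of $\auxmdp$ together with the expected horizon lemma already established. After adding the fictitious transition $\sinitaux \to \sinit$ (as in the proof of Lemma~\ref{erogidic_thm}), $\auxmdp$ is ergodic, so let $\tau_{\sinitaux}$ denote the hitting time of $\sinitaux$ starting at $\sinitaux$. By construction, one full cycle is exactly $\sinitaux \to \sinit \to \cdots \to \sinitaux$, and the restart transition contributes zero reward and zero difficulty. Hence $\tau_{\sinitaux} = \termaux(H) + 1$, and
\[
\E{\sum_{t=1}^{\termaux(H)} R_t} = \E{\sum_{n=1}^{\tau_{\sinitaux}-1} R_n \,\middle\vert\, Y_1 = \sinitaux}, \qquad \E{\sum_{t=1}^{\termaux(H)} D_t} = \E{\sum_{n=1}^{\tau_{\sinitaux}-1} D_n \,\middle\vert\, Y_1 = \sinitaux}.
\]

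Now I would apply Lemma~\ref{prop69} to each of these, using $V_n = R_n$ (with $a_i = \hat{R}'_H(i)$) and $V_n = D_n$ (with $a_i = \hat{D}'_H(i)$). This yields
\[
\E{\sum_{t=1}^{\termaux(H)} R_t} = \frac{\inner{\hat{R}'_H}{\piaux}}{\piaux(\sinitaux)}, \qquad \E{\sum_{t=1}^{\termaux(H)} D_t} = \frac{\inner{\hat{D}'_H}{\piaux}}{\piaux(\sinitaux)}.
\]
Combining the second identity with Lemma~\ref{diff_cont_lemma} gives
\[
\frac{\inner{\hat{D}'_H}{\piaux}}{\piaux(\sinitaux)} = H + c_H, \qquad |c_H| \leq \dmax + 1,
\]
so $\frac{1}{H\,\piaux(\sinitaux)} = \frac{1}{\inner{\hat{D}'_H}{\piaux}} \cdot \left(1 + \frac{c_H}{H}\right).$

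Substituting into the definition of $\revaux$,
\[
\revaux = \frac{1}{H}\E{\sum_{t=1}^{\termaux(H)} R_t} = \frac{\inner{\hat{R}'_H}{\piaux}}{\inner{\hat{D}'_H}{\piaux}} \left(1 + \frac{c_H}{H}\right),
\]
so the additive error equals $\frac{c_H}{H} \cdot \frac{\inner{\hat{R}'_H}{\piaux}}{\inner{\hat{D}'_H}{\piaux}}$. Assumption~\ref{assump_r_t_bounded} bounds $|\inner{\hat{R}'_H}{\piaux}| \leq \rmax$, and Lemma~\ref{lemma_d_t_org_aux_not_zero} lower bounds $\inner{\hat{D}'_H}{\piaux} > \varepsilon$, so the ratio is bounded by $\rmax/\varepsilon$. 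Therefore the overall error is $O(1/H)$, as claimed.

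The main obstacle is purely bookkeeping rather than substantive: I must verify that the application of Lemma~\ref{prop69} is legitimate (absolute convergence of $\sum_j \hat{R}'_H(j)\piaux(j)$, which is immediate since $\fancyS \cup \{\sinitaux\}$ is finite and $\hat{R}'_H$ is bounded by $\rmax$), and that the reindexing $\tau_{\sinitaux} = \termaux(H)+1$ together with the zero-reward/zero-difficulty restart transition really does make the regenerative cycle identity line up with $\revaux$. Once these checks are done, the rest is a direct combination of Lemmas~\ref{diff_cont_lemma}, \ref{lemma_d_t_org_aux_not_zero}, \ref{thm54}, and~\ref{prop69}.
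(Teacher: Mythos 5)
Your proof is correct, and it rests on the same core ingredients as the paper's own argument: the regenerative-cycle identity of Lemma~\ref{prop69} applied at the hitting time of $\sinitaux$, the expected-horizon bound of Lemma~\ref{diff_cont_lemma}, the lower bound $\inner{\hat{D}'_H}{\piaux} > \varepsilon$ of Lemma~\ref{lemma_d_t_org_aux_not_zero}, and the bound $\abs{\inner{\hat{R}'_H}{\piaux}} \leq \rmax$ from Assumption~\ref{assump_r_t_bounded}. The packaging, however, differs. The paper sandwiches $\revaux$ between two fractions by replacing $H$ with $\E{\sum_t D_t} \pm (\dmax+1)$, then bounds $\piaux(\sinitaux) \leq \dmax/(H-1)$ via Lemma~\ref{thm54} and estimates the width of the sandwich by an explicit fraction-difference computation. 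You instead combine the two cycle identities into the exact relation $\revaux = \frac{\inner{\hat{R}'_H}{\piaux}}{\inner{\hat{D}'_H}{\piaux}}\left(1+\frac{c_H}{H}\right)$ with $\abs{c_H}\leq \dmax+1$, so that $\piaux(\sinitaux)$ cancels and the error is bounded in one line by $\frac{(\dmax+1)\rmax}{\varepsilon H}$. This buys two things: Lemma~\ref{thm54} and the separate estimate of $\piaux(\sinitaux)$ become unnecessary, and your argument is indifferent to the sign of the reward, whereas the paper's first sandwich step ($\frac{1}{H}\E{\sum_t R_t} \geq \E{\sum_t R_t}/(\E{\sum_t D_t}+\dmax+1)$) tacitly assumes the expected cumulative reward is nonnegative even though negative rewards are explicitly allowed. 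The one caveat you share with the paper is the off-by-one bookkeeping in identifying $\termaux(H)+1$ with the hitting time of $\sinitaux$ and in whether the terminating step's reward and difficulty are the ones encoded in $\hat{R}'_H$ and $\hat{D}'_H$; this shifts constants by at most $\rmax$ or $\dmax$ and does not affect the $\bigO{\frac{1}{H}}$ conclusion.
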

\begin{proof}
From the definition of $\revaux$ and from the lower bound in Lemma~\ref{restated_diff_cont_lemma}, we lower bound the fraction in the expression:
\begin{multline*}
\revaux = \frac{1}{H}\E{\sum\limits_{t=1}^{\termaux(H)} R_t}
\geq \frac{\E{\sum\limits_{t=1}^{\termaux(H)} R_t}}{{\E{\sum\limits_{t=1}^{\termaux(H)} D_t}} + \dmax + 1} \, \, .
\end{multline*}
Note that:
\begin{enumerate}
\item $(\termaux(H) + 1)$ is the hitting time of $\sinitaux$ and
\item in $\auxmdp$, $\E{R_t | X_t = i} = \hat{R}'_H(i)$ and $\E{D_t | X_t = i} = \hat{D}'_H(i)$.
\end{enumerate}
Therefore Lemma~\ref{prop69} can be used twice (once in the nominator and once in the denominator) to obtain:
\begin{multline}\label{lower_bound_rev_aux_middle}
\revaux \geq \frac{\E{\sum\limits_{t=1}^{\termaux(H)} R_t}}{{\E{\sum\limits_{t=1}^{\termaux(H)} D_t}} + \dmax + 1} =\\
= \frac{(\piaux(\sinitaux))^{-1} \sum\limits_{i \in \fancyS} \hat{R}'_H(i) \piaux(i)}{(\piaux(\sinitaux))^{-1} \sum\limits_{i \in \fancyS} \hat{D}'_H(i) \piaux(i) + \dmax + 1} =\\
 = \frac{\inner{\hat{R}'}{\piaux}}{\inner{\hat{D}'}{\piaux} + (\dmax + 1) \piaux(\sinitaux)} \, \, .
\end{multline}
By using the upper bound in Lemma~\ref{restated_diff_cont_lemma}, and then continuing similarly, we get:
\begin{multline}\label{upper_bound_rev_aux_middle}
\revaux = \frac{1}{H}\E{\sum\limits_{t=1}^{\termaux(H)} R_t}
\leq \frac{\E{\sum\limits_{t=1}^{\termaux(H)} R_t}}{{\E{\sum\limits_{t=1}^{\termaux(H)} D_t}} - \dmax} =\\
= \frac{(\piaux(\sinitaux))^{-1} \sum\limits_{i \in \fancyS} \hat{R}'_H(i) \piaux(i)}{(\piaux(\sinitaux))^{-1} \sum\limits_{i \in \fancyS} \hat{D}'_H(i) \piaux(i) - \dmax} =\\
=\frac{\inner{\hat{R}'}{\piaux}}{\inner{\hat{D}'}{\piaux} - \dmax \piaux(\sinitaux)}
\end{multline}
As stated before, $(\termaux(H) + 1)$ is the hitting time of $\sinitaux$.
So, thanks to Lemma~\ref{thm54}, it holds that:
\begin{displaymath}
\piaux(\sinitaux) = \frac{1}{\E{\termaux(H) + 1}} \, \, .
\end{displaymath}
By using Assumption~\ref{assump_d_t_bounded} and Lemma \ref{restated_diff_cont_lemma}, we get:
\begin{multline*}
\dmax \cdot \E{\termaux(H)} \geq \E{\sum\limits_{t=1}^{\termaux(H)} \dmax} \geq\\
\geq \E{\sum\limits_{t=1}^{\termaux(H)} D_t} \geq H - \dmax - 1  \, \, .
\end{multline*}
We proceed to lower bound $\piaux(\sinitaux)$. We get:
\begin{displaymath}
\E{\termaux(H)} \geq \frac{H - \dmax - 1}{\dmax} = \frac{H - 1}{\dmax} - 1 \, \, .
\end{displaymath}
and: 
\begin{displaymath}
\piaux(\sinitaux) \leq \frac{1}{\frac{H - 1}{\dmax}} = \frac{\dmax}{H - 1} \, \, .
\end{displaymath}
Plugging this into \eqref{lower_bound_rev_aux_middle} yields:
\begin{equation}\label{lower_bound_rev_aux}
\revaux \geq \frac{\inner{\hat{R}'}{\piaux}}{\inner{\hat{D}'}{\piaux} + (\dmax + 1) \piaux(\sinitaux)} \geq \frac{\inner{\hat{R}'}{\piaux}}{\inner{\hat{D}'}{\piaux} + \frac{\dmax^2 + \dmax}{H - 1}} \, \, .
\end{equation}
And, plugging the lower bound into \eqref{upper_bound_rev_aux_middle} yields:
\begin{equation}\label{upper_bound_rev_aux}
\revaux \leq \frac{\inner{\hat{R}'}{\piaux}}{\inner{\hat{D}'}{\piaux} - \dmax \piaux(\sinitaux)} \leq \frac{\inner{\hat{R}'}{\piaux}}{\inner{\hat{D}'}{\piaux} - \frac{\dmax^2}{H - 1}} \, \, .
\end{equation}
By using both bounds \eqref{lower_bound_rev_aux} and \eqref{upper_bound_rev_aux} and noticing that the desired ratio also lies within the bounds we get that:
\begin{multline*}
\abs{\revaux - \frac{\inner{\hat{R}'}{\piaux}}{\inner{\hat{D}'}{\piaux}}} \leq\\
\leq \abs{\frac{\inner{\hat{R}'}{\piaux}}{\inner{\hat{D}'}{\piaux} - \frac{\dmax^2}{H - 1}} - \frac{\inner{\hat{R}'}{\piaux}}{\inner{\hat{D}'}{\piaux} + \frac{\dmax^2 + \dmax}{H - 1}}} \leq\\
\leq \abs{\frac{\inner{\hat{R}'}{\piaux} \cdot \left( \inner{\hat{D}'}{\piaux} + \frac{\dmax^2 + \dmax}{H-1} \right) - \inner{\hat{R}'}{\piaux} \cdot \left( \inner{\hat{D}'}{\piaux} - \frac{\dmax^2}{H-1} \right)}{\left( \inner{\hat{D}'}{\piaux} + \frac{\dmax^2 + \dmax}{H-1} \right) \cdot \left( \inner{\hat{D}'}{\piaux} - \frac{\dmax^2}{H-1} \right)}} \leq\\
\leq \abs{\frac{\inner{\hat{R}'}{\piaux} \cdot \frac{2\dmax^2 + \dmax}{H-1}}{\left( \inner{\hat{D}'}{\piaux} + \frac{\dmax^2 + \dmax}{H-1} \right) \cdot \left( \inner{\hat{D}'}{\piaux} - \frac{\dmax^2}{H-1} \right)}}.
\end{multline*}
When taking into consideration $H \gg \dmax$ and using Lemma \ref{restated_lemma_d_t_org_aux_not_zero}, we get that:
\begin{multline*}
\abs{\frac{\inner{\hat{R}'}{\piaux} \cdot \frac{2\dmax^2 + \dmax}{H-1}}{\left( \inner{\hat{D}'}{\piaux} + \frac{\dmax^2 + \dmax}{H-1} \right) \cdot \left( \inner{\hat{D}'}{\piaux} - \frac{\dmax^2}{H-1} \right)}} < \\
< \varepsilon^{-2} \cdot {\inner{\hat{R}'}{\piaux}} \cdot \frac{2\dmax^2 + \dmax}{H-1}
\end{multline*}
Together with Assumption \ref{assump_r_t_bounded} and the property that a stationary distribution sums to 1, we obtain:
\begin{displaymath}
\varepsilon^{-2} \cdot {\inner{\hat{R}'}{\piaux}} \cdot \frac{2\dmax^2 + \dmax}{H-1} \leq \frac{3\dmax^2 \cdot \varepsilon^{-2} \cdot \rmax}{H-1} = \bigO{\frac{1}{H}} \, \, .
\end{displaymath}
We combine all the previous inequalities to finally derive:
\begin{align*}
\abs{\revaux - \frac{\inner{\hat{R}'}{\piaux}}{\inner{\hat{D}'}{\piaux}}} = \bigO{\frac{1}{H}} \, \, . &\qedhere
\end{align*}
\end{proof}

\end{document}